\documentclass[a4paper,UKenglish,cleveref, autoref, thm-restate]{lipics-v2021}


\usepackage{amsmath}
\usepackage{bm}
\usepackage{mathtools}
\usepackage{dashrule}

\usepackage{graphicx,epsfig,color}

\usepackage{xypic}
\xyoption{rotate}
\input xy
\xyoption{all}
\usepackage{verbatim}
\usepackage{amssymb}
\usepackage{eucal}
\usepackage{enumerate}
\usepackage[shortlabels]{enumitem}

\usepackage{multicol}
\usepackage{manfnt}

\usepackage{amsthm}
\usepackage{amsfonts}
\usepackage{mathrsfs}
\usepackage[all,2cell]{xy}
\CompileMatrices
\UseAllTwocells
\SilentMatrices
\usepackage{float}
\usepackage{fancybox}
\usepackage{hyperref}
\usepackage{wrapfig}
\usepackage{wasysym}

\usepackage{mathbbol}

\usepackage{stmaryrd}

\DeclareMathOperator{\Mat}{\mathsf{Mat}}

%
%
\mathcode`:="003A  
\mathcode`;="003B  
\mathcode`?="003F  
\mathcode`|="026A  

\mathchardef\ls="213C    
\mathchardef\gr="213E    
\mathchardef\uparrow="0222  
\mathchardef\downarrow="0223  

%

\newcommand{\from}{\mathrel{:}}

\newcommand{\scriptgeq}{\scalebox{0.5}{\ensuremath \geq}}


\newenvironment{Iff-RL}{\textbf{($\Rightarrow$)} }{\bigskip}
\newenvironment{Iff-LR}{\textbf{($\Leftarrow$)} }{}




\def \: {\colon}
\newcommand{\scolon}{\,;\,}





\newcommand{\CSFG}{{\mathsf{SPP}}}


\newcommand{\nullvec}{\bullet}



\makeatletter
\DeclareRobustCommand
  \myvdots{\vbox{\baselineskip4\p@ \lineskiplimit\z@
    \hbox{.}\hbox{.}\hbox{.}}}
\makeatother


\def \N {\mathbb{N}}




\newcommand{\vect}[1]{\begin{pmatrix}#1\end{pmatrix}}


\def \SAIHP {\mathbb{SaIH^{\scriptgeq}}} 



\newcommand{\dsems}[1]{\langle\!\langle #1 \rangle\!\rangle} 
\newcommand{\dsemsO}{\langle\!\langle \cdot \rangle\!\rangle} 


\DeclareMathSymbol{\reversedExclMark}{\mathord}{operators}{"3C}

\newcommand\Seqcomp{\lower5pt\hbox{$\includegraphics[height=.8cm]{graffles/seqcomp.pdf}$}}
\newcommand\Parcomp{\lower5pt\hbox{$\includegraphics[height=.8cm]{graffles/tensor.pdf}$}}

\newcommand\Gmult{\lower5pt\hbox{$\includegraphics[width=20pt]{graffles/Gmult.pdf}$}}
\newcommand\Gcomult{\lower5pt\hbox{$\includegraphics[width=20pt]{graffles/Gcomult.pdf}$}}
\newcommand\Gunit{\lower5pt\hbox{$\includegraphics[width=16pt]{graffles/Gunit.pdf}$}}
\newcommand\Gcounit{\lower5pt\hbox{$\includegraphics[width=16pt]{graffles/Gcounit.pdf}$}}
\newcommand\twoGcounit{\lower5pt\hbox{$\includegraphics[width=16pt]{graffles/twoGcounit.pdf}$}}

\newcommand\ASepUno{\lower3pt\hbox{$\includegraphics[width=28pt]{graffles/ASep1.pdf}$}}
\newcommand\ASepDue{\lower3pt\hbox{$\includegraphics[width=24pt]{graffles/ASep2.pdf}$}}

\newcommand\BSepOne{\lower3pt\hbox{\includegraphics[width=30pt]{graffles/BlackSep1.pdf}}}
\newcommand\BlackX{\lower5pt\hbox{\includegraphics[width=30pt]{graffles/BlackX.pdf}}}

\newcommand{\onezero}{
\tikzset{x=1em, y=2.1ex}
\begin{tikzpicture}
	\begin{pgfonlayer}{nodelayer}
		\node [style=none] (0) at (-0.75, 0) {};
		\node [style=white] (1) at (0.75, 0) {};
		\node [style=none] (2) at (-0.75, 0.3) {};
		\node [style=none] (3) at (-0.75, -0.3) {};
	\end{pgfonlayer}
	\begin{pgfonlayer}{edgelayer}
		\draw (0.center) to (1);
		\draw (2.center) to (3.center);
	\end{pgfonlayer}
\end{tikzpicture}}
\tikzset{x=1em, y=1.5ex}
}

\newcommand{\xcirc}[1]{\begin{tikzpicture}
	\begin{pgfonlayer}{nodelayer}
		\node [style=square] (0) at (0, 0) {$x$};
		\node [style=none] (1) at (-0.75, 0.25) {};
		\node [style=none] (2) at (-0.75, -0.25) {};
		\node [style=none] (3) at (0.75, 0.25) {};
		\node [style=#1] (4) at (0.75, 0) {};
		\node [style=none] (5) at (0.75, -0.25) {};
		\node [style=none] (6) at (0.25, 0.25) {};
		\node [style=none] (7) at (0.25, 0) {};
		\node [style=none] (8) at (0.25, -0.25) {};
		\node [style=none] (9) at (-0.25, -0.25) {};
		\node [style=none] (10) at (-0.25, 0.25) {};
	\end{pgfonlayer}
	\begin{pgfonlayer}{edgelayer}
		\draw (10.center) to (1.center);
		\draw (2.center) to (9.center);
		\draw (6.center) to (3.center);
		\draw (7.center) to (4.center);
		\draw (8.center) to (5.center);
	\end{pgfonlayer}
\end{tikzpicture}
}


\def \Sem {\sem{\QA}} 





\newcommand{\op}{op}


\def \field {\mathsf{k}} 



 %

\newcommand \IH[1]{\mathbb{IH}_{\scriptscriptstyle #1}}

\newcommand \IHP[1]{\mathbb{IH}_{\scriptscriptstyle #1}^{\scriptgeq}}
\newcommand \AIHP[1]{\mathbb{aIH}_{\scriptscriptstyle #1}^{\scriptgeq}}





\def \poi {\,\ensuremath{;}\,}


\newcommand\unitscalar{\lower4pt\hbox{$\includegraphics[width=22pt]{graffles/unitscalar.pdf}$}}
\newcommand\scalarminusone{\lower8pt\hbox{$\includegraphics[width=30pt]{graffles/scalarminusone.pdf}$}}
\newcommand\antipode{\lower3pt\hbox{$\includegraphics[width=22pt]{graffles/antipode.pdf}$}}
\newcommand\antipodeop{\lower3pt\hbox{$\includegraphics[width=22pt]{graffles/antipodeop.pdf}$}}
\newcommand\antipodesquare{\lower3pt\hbox{$\includegraphics[width=22pt]{graffles/antipodesquare.pdf}$}}
\newcommand\circuitAdots{\lower6pt\hbox{$\includegraphics[width=30pt]{graffles/circuitAdots.pdf}$}}

\newcommand\wcounitn{\lower5pt\hbox{$\includegraphics[width=25pt]{graffles/wcounitn.pdf}$}}
\newcommand\bcounitn{\lower5pt\hbox{$\includegraphics[width=25pt]{graffles/bcounitn.pdf}$}}
\newcommand\lccn{\lower5pt\hbox{$\includegraphics[width=25pt]{graffles/lccn.pdf}$}}
\newcommand\rccn{\lower5pt\hbox{$\includegraphics[width=25pt]{graffles/rccn.pdf}$}}
\newcommand\idncircuit{\lower5pt\hbox{$\includegraphics[width=25pt]{graffles/idncircuit.pdf}$}}
\newcommand\circuitrbcounits{\lower5pt\hbox{$\includegraphics[width=25pt]{graffles/circuitrbcounits.pdf}$}}
\newcommand\lccB{\lower5pt\hbox{$\includegraphics[width=25pt]{graffles/rccr.pdf}$}}
\newcommand\rccB{\lower5pt\hbox{$\includegraphics[width=25pt]{graffles/lccl.pdf}$}}
\newcommand\IdBcounitc{\lower5pt\hbox{$\includegraphics[width=20pt]{graffles/IdBcounit.pdf}$}}
\newcommand\BcounitId{\lower5pt\hbox{$\includegraphics[width=20pt]{graffles/BcounitId.pdf}$}}
\newcommand\symNetTwoOne{\lower7pt\hbox{$\includegraphics[width=25pt]{graffles/symNet21.pdf}$}}
\newcommand\nscalar{\!\!\lower5pt\hbox{$\includegraphics[width=35pt]{graffles/nscalar.pdf}$}\!\!}
\newcommand\Wmultstar{\!\lower5pt\hbox{$\includegraphics[width=20pt]{graffles/Wmultstar.pdf}$}\!}
\newcommand\scalarstar{\!\!\lower7pt\hbox{$\includegraphics[width=35pt]{graffles/scalarstar.pdf}$}\!\!}
\newcommand\twoBcounit{\!\!\lower8pt\hbox{$\includegraphics[width=20pt]{graffles/lunitsr.pdf}$}\!\!}

\newcommand\delay{\!\lower6pt\hbox{$\includegraphics[width=25pt]{graffles/delaycircuit.pdf}$}\!}
\newcommand\scalarp{\!\lower4pt\hbox{$\includegraphics[width=22pt]{graffles/scalarp.pdf}$}\!}
\newcommand\scalarpstar{\!\lower3pt\hbox{$\includegraphics[width=25pt]{graffles/scalarpstar.pdf}$}\!}
\newcommand\scalarpop{\!\lower4pt\hbox{$\includegraphics[width=22pt]{graffles/scalarpop.pdf}$}\!}
\newcommand\nscalarp{\!\lower3pt\hbox{$\includegraphics[width=30pt]{graffles/nscalarp.pdf}$}\!}
\newcommand\circuitfibr{\!\lower3pt\hbox{$\includegraphics[width=44pt]{graffles/circuitfibr.pdf}$}\!}
\newcommand\ncircuitX{\!\lower3pt\hbox{$\includegraphics[width=30pt]{graffles/ncircuitX.pdf}$}\!}
\newcommand\scalarpone{\!\lower3pt\hbox{$\includegraphics[width=22pt]{graffles/scalarpone.pdf}$}\!}
\newcommand\scalarptwoop{\!\lower3pt\hbox{$\includegraphics[width=22pt]{graffles/scalarptwoop.pdf}$}\!}

\setlength{\multicolsep}{6.0pt plus 2.0pt minus 1.5pt}




\newcommand{\Idnet}{
\tikzset{x=1em, y=2.1ex}
\begin{tikzpicture}
	\begin{pgfonlayer}{nodelayer}
		\node [style=none] (0) at (1.75, -0) {};
		\node [style=none] (1) at (-1.75, -0) {};
	\end{pgfonlayer}
	\begin{pgfonlayer}{edgelayer}
		\draw (0.center) to (1.center);
	\end{pgfonlayer}
\end{tikzpicture}}
\tikzset{x=1em, y=1.5ex}
}

\newcommand{\Wunit}{
\tikzset{x=1em, y=2.1ex}
\begin{tikzpicture}[baseline=-.5ex, scale=.7]
	\begin{pgfonlayer}{nodelayer}
		\node [style=white] (0) at (0.25, 0) {};
		\node [style=none] (1) at (1.75, 0) {};
	\end{pgfonlayer}
	\begin{pgfonlayer}{edgelayer}
		\draw (0) to (1.center);
	\end{pgfonlayer}
\end{tikzpicture}
}
\tikzset{x=1em, y=1.5ex}
}
\newcommand{\Bcounit}{
\tikzset{x=1em, y=2.1ex}
\begin{tikzpicture}[scale=.7, baseline=-.5ex]
	\begin{pgfonlayer}{nodelayer}
		\node [style=black] (0) at (1.5, -0) {};
		\node [style=none] (1) at (0, -0) {};
	\end{pgfonlayer}
	\begin{pgfonlayer}{edgelayer}
		\draw (0) to (1.center);
	\end{pgfonlayer}
\end{tikzpicture}}
\tikzset{x=1em, y=1.5ex}
}
 \newcommand{\Bcomult}{
\tikzset{x=1em, y=2.1ex}
\InputIfFileExists{./generators/copy.tikz}{}{\input{./tikz/./generators/copy.tikz}}
\tikzset{x=1em, y=1.5ex}
}
\newcommand{\Wmult}{
\tikzset{x=1em, y=2.1ex}
\InputIfFileExists{./generators/add.tikz}{}{\input{./tikz/./generators/add.tikz}}
\tikzset{x=1em, y=1.5ex}
}
\newcommand{\scalar}{
\tikzset{x=1em, y=2.1ex}
\begin{tikzpicture}
	\begin{pgfonlayer}{nodelayer}
		\node [style=reg] (0) at (-0.25, 0) {$k$};
		\node [style=none] (1) at (1, 0) {};
		\node [style=none] (2) at (-1.5, 0) {};
	\end{pgfonlayer}
	\begin{pgfonlayer}{edgelayer}
		\draw (2.center) to (0);
		\draw (0) to (1.center);
	\end{pgfonlayer}
\end{tikzpicture}
}
\tikzset{x=1em, y=1.5ex}
}
\newcommand{\circuitX}{
\tikzset{x=1em, y=2.1ex}
\begin{tikzpicture}
	\begin{pgfonlayer}{nodelayer}
		\node [style=reg] (0) at (-0.25, -0) {$x$};
		\node [style=none] (1) at (1, -0) {};
		\node [style=none] (2) at (-1.5, -0) {};
	\end{pgfonlayer}
	\begin{pgfonlayer}{edgelayer}
		\draw (2.center) to (0);
		\draw (0) to (1.center);
	\end{pgfonlayer}
\end{tikzpicture}}
\tikzset{x=1em, y=1.5ex}
}
\newcommand{\Bunit}{
\tikzset{x=1em, y=2.1ex}
\begin{tikzpicture}[baseline=-.5ex, scale=.7]
	\begin{pgfonlayer}{nodelayer}
		\node [style=black] (0) at (0.25, 0) {};
		\node [style=none] (1) at (1.75, 0) {};
	\end{pgfonlayer}
	\begin{pgfonlayer}{edgelayer}
		\draw (0) to (1.center);
	\end{pgfonlayer}
\end{tikzpicture}
}
\tikzset{x=1em, y=1.5ex}
}
\newcommand{\Wcounit}{
\tikzset{x=1em, y=2.1ex}
\begin{tikzpicture}
	\begin{pgfonlayer}{nodelayer}
		\node [style=white] (0) at (1.5, -0) {};
		\node [style=none] (1) at (0.25, -0) {};
	\end{pgfonlayer}
	\begin{pgfonlayer}{edgelayer}
		\draw (0) to (1.center);
	\end{pgfonlayer}
\end{tikzpicture}}
\tikzset{x=1em, y=1.5ex}
}
\newcommand{\Wcomult}{
\tikzset{x=1em, y=2.1ex}
\InputIfFileExists{./generators/co-add.tikz}{}{\input{./tikz/./generators/co-add.tikz}}
\tikzset{x=1em, y=1.5ex}
}
\newcommand{\Bmult}{
\tikzset{x=1em, y=2.1ex}
\InputIfFileExists{./generators/co-copy.tikz}{}{\input{./tikz/./generators/co-copy.tikz}}
\tikzset{x=1em, y=1.5ex}
}

\tikzset{x=1em, y=1.5ex}
}

\tikzset{x=1em, y=1.5ex}
}

\newcommand{\IdnetT}{
\tikzset{x=1em, y=2.1ex}
}
\tikzset{x=1em, y=1.5ex}
}
\newcommand{\symNetT}{
\tikzset{x=1em, y=2.1ex}
\InputIfFileExists{./generators/sym.tikz}{}{\input{./tikz/./generators/sym.tikz}}
\tikzset{x=1em, y=1.5ex}
}
\newcommand{\WunitT}{
\tikzset{x=1em, y=2.1ex}
}
\tikzset{x=1em, y=1.5ex}
}
\newcommand{\BcounitT}{
\tikzset{x=1em, y=2.1ex}
}
\tikzset{x=1em, y=1.5ex}
}
 \newcommand{\BcomultT}{
\tikzset{x=1em, y=2.1ex}
\InputIfFileExists{./generators/copy.tikz}{}{\input{./tikz/./generators/copy.tikz}}
\tikzset{x=1em, y=1.5ex}
}
\newcommand{\WmultT}{
\tikzset{x=1em, y=2.1ex}
\InputIfFileExists{./generators/add.tikz}{}{\input{./tikz/./generators/add.tikz}}
\tikzset{x=1em, y=1.5ex}
}
\newcommand{\scalarT}{
\tikzset{x=1em, y=2.1ex}
}
\tikzset{x=1em, y=1.5ex}
}
\newcommand{\circuitXT}{
\tikzset{x=1em, y=2.1ex}
}
\tikzset{x=1em, y=1.5ex}
}
\newcommand{\BunitT}{
\tikzset{x=1em, y=2.1ex}
}
\tikzset{x=1em, y=1.5ex}
}
\newcommand{\WcounitT}{
\tikzset{x=1em, y=2.1ex}
}
\tikzset{x=1em, y=1.5ex}
}
\newcommand{\WcomultT}{
\tikzset{x=1em, y=2.1ex}
\InputIfFileExists{./generators/co-add.tikz}{}{\input{./tikz/./generators/co-add.tikz}}
\tikzset{x=1em, y=1.5ex}
}
\newcommand{\BmultT}{
\tikzset{x=1em, y=2.1ex}
\InputIfFileExists{./generators/co-copy.tikz}{}{\input{./tikz/./generators/co-copy.tikz}}
\tikzset{x=1em, y=1.5ex}
}
\newcommand{\scalaropT}{
\tikzset{x=1em, y=2.1ex}
}
\tikzset{x=1em, y=1.5ex}
}

\tikzset{x=1em, y=1.5ex}
}
\newcommand{\ZeronetT}{
\tikzset{x=1em, y=2.1ex}
\InputIfFileExists{./generators/empty-diag.tikz}{}{\input{./tikz/./generators/empty-diag.tikz}}
\tikzset{x=1em, y=1.5ex}
}

\newcommand{\greq}{
\tikzset{x=1em, y=2.1ex}
\begin{tikzpicture}
	\begin{pgfonlayer}{nodelayer}
		\node [style=reg] (0) at (-0.25, -0) {$\geq$};
		\node [style=none] (1) at (1, -0) {};
		\node [style=none] (2) at (-1.5, -0) {};
	\end{pgfonlayer}
	\begin{pgfonlayer}{edgelayer}
		\draw (2.center) to (0);
		\draw (0) to (1.center);
	\end{pgfonlayer}
\end{tikzpicture}}
\tikzset{x=1em, y=1.5ex}
}

\newcommand{\leqd}{
\tikzset{x=1em, y=2.1ex}
\begin{tikzpicture}
	\begin{pgfonlayer}{nodelayer}
		\node [style=coreg] (0) at (0, 0) {$\leq$};
		\node [style=none] (1) at (1.25, 0) {};
		\node [style=none] (2) at (-1.25, 0) {};
	\end{pgfonlayer}
	\begin{pgfonlayer}{edgelayer}
		\draw (2.center) to (0);
		\draw (0) to (1.center);
	\end{pgfonlayer}
\end{tikzpicture}
}
\tikzset{x=1em, y=1.5ex}
}

\newcommand{\cupd}{
\tikzset{x=1em, y=2.1ex}
\InputIfFileExists{cup.tikz}{}{\input{./tikz/cup.tikz}}
\tikzset{x=1em, y=1.5ex}
}
\newcommand{\capd}{
\tikzset{x=1em, y=2.1ex}
\InputIfFileExists{cap.tikz}{}{\input{./tikz/cap.tikz}}
\tikzset{x=1em, y=1.5ex}
}

\newcommand{\cupnd}{
\tikzset{x=1em, y=2.1ex}
\InputIfFileExists{cup-n.tikz}{}{\input{./tikz/cup-n.tikz}}
\tikzset{x=1em, y=1.5ex}
}
\newcommand{\capnd}{
\tikzset{x=1em, y=2.1ex}
\InputIfFileExists{cap-n.tikz}{}{\input{./tikz/cap-n.tikz}}
\tikzset{x=1em, y=1.5ex}
}

\newcommand{\cbox}{
\tikzset{x=1em, y=2.1ex}
\InputIfFileExists{cbox.tikz}{}{\input{./tikz/cbox.tikz}}
\tikzset{x=1em, y=1.5ex}
}
\newcommand{\cboxop}{
\tikzset{x=1em, y=2.1ex}
\InputIfFileExists{cboxop.tikz}{}{\input{./tikz/cboxop.tikz}}
\tikzset{x=1em, y=1.5ex}
}

\newcommand{\circuitminusone}{\!\lower5pt\hbox{$\includegraphics[width=20pt,height=15pt]{graffles/circuitminusone.pdf}$}\!}
\newcommand{\circuitminusoneop}{\!\lower5pt\hbox{$\includegraphics[width=20pt,height=15pt]{graffles/circuitminusoneop.pdf}$}\!}


\newcommand{\lbbd}{\left \llbracket}
\newcommand{\rbbd}{\right \rrbracket}
\newcommand{\dsem}[1]{\lbbd #1 \rbbd}
\newcommand{\dsemO}{\dsem{\cdot}}
\newcommand{\dsemp}[1]{\dsem{#1}}
\newcommand{\dsempO}{\dsem{\cdot}}

\newcommand{\dsempr}[1]{\dsem{#1}'}
\newcommand{\dsemprO}{\dsemp{\cdot}}

\newcommand{\lbbo}{\mathopen{\langle}}
\newcommand{\rbbo}{\mathclose{\rangle}}
\newcommand{\osem}[1]{\lbbo #1 \rbbo}

\newcommand\idzcircuit{\lower5pt\hbox{$\includegraphics[width=20pt]{graffles/idzcircuit.pdf}$}}
\newcommand{\circuitXspan}{\!\lower4pt\hbox{$\includegraphics[width=40pt]{graffles/circuitXspan.pdf}$}\!}
\newcommand{\circuitXcospan}{\!\lower5pt\hbox{$\includegraphics[width=40pt]{graffles/circuitXcospan.pdf}$}\!}
\newcommand\zeroscalar{\lower3pt\hbox{$\includegraphics[width=20pt]{graffles/zeroscalar.pdf}$}}
\newcommand\zeroscalarr{\lower3pt\hbox{$\includegraphics[width=25pt]{graffles/zeroscalar2.pdf}$}}
\newcommand\rationalcircuit{\lower5pt\hbox{$\includegraphics[width=35pt]{graffles/rationalcircuit.pdf}$}}
\newcommand\Wccl{\lower5pt\hbox{$\includegraphics[width=22pt]{graffles/Wccl.pdf}$}}
\newcommand{\circuitkkop}{\!\lower4pt\hbox{$\includegraphics[width=32pt]{graffles/circuitkkop.pdf}$}\!}

\usepackage{centernot}

\newcommand\circuitUnoMinusX{\lower4.5pt\hbox{$\includegraphics[width=32pt]{graffles/circuit1-x.pdf}$}}
\newcommand\circuitUnoMinusXSquare{\lower5pt\hbox{$\includegraphics[width=35pt]{graffles/circuit1-xsquare.pdf}$}}

\usepackage{prooftree}

\newcommand{\ruleLabel}[1]{#1}

\newcommand{\bnfSep}{\;\; | \;\;}

\makeatletter
\def\moverlay{\mathpalette\mov@rlay}
\def\mov@rlay#1#2{\leavevmode\vtop{%
\baselineskip\z@skip \lineskiplimit-\maxdimen
\ialign{\hfil$#1##$\hfil\cr#2\crcr}}}
\makeatother


 \newcommand{\derivationRule}[3]{{\prooftree{ #1}\justifies{ #2}\using\ruleLabel{#3}\endprooftree}}

\newcommand\twarr[2]{%
\mathrel{\mathop{\moverlay{\scriptstyle\xrightarrow{\,#1\,}\cr{\lower.2em\hbox{$\scriptstyle{}_{#2}$}}}}}}
\newcommand\twarrw[2]{%
\mathrel{\mathop{\moverlay{\scriptstyle\Longrightarrow\cr{\lower-.6em\hbox{$\scriptstyle{}_{#1}$}}
\cr{\lower.3em\hbox{$\scriptstyle{}_{#2}$}}}}}}

\newcommand{\dtrans}[2]{\hbox{$\;\twarr{#1}{#2}\;$}}
\newcommand{\dtransw}[2]{\raise1pt\hbox{$\;\twarrw{#1}{#2}\;$}}



\def \CD {\mathsf{Circ}}
\newcommand{\FC}{\mathsf{C}\overrightarrow{\hspace{-.1cm}\scriptstyle\mathsf{irc}}}
\newcommand{\FCop}{\mathsf{C}\overleftarrow{\hspace{-.1cm}\scriptstyle\mathsf{irc}}}

\def \Syn {\mathsf{Syn}}
\def \Sem {\mathsf{Sem}}
\def \PSigma {\mathsf{P}_\Sigma}
\def \PSigmaE {\mathsf{P}_{\Sigma,E}}
\def \IM {\mathsf{IM}}

\def \CDP {\mathsf{Circ^{\scriptgeq}}}

\usepackage{bbm}





\usepackage{tikz}
\usetikzlibrary{arrows,backgrounds,shapes.geometric,shapes.misc,matrix,arrows.meta,decorations.markings}

\pgfdeclarelayer{background}
\pgfdeclarelayer{edgelayer}
\pgfdeclarelayer{nodelayer}
\pgfsetlayers{background,edgelayer,nodelayer,main}
\tikzset{baseline=-0.5ex}
\definecolor{light-gray}{gray}{.7}
\tikzstyle{none}=[inner sep=0pt]
\tikzstyle{plain}=[inner sep=0pt]
\tikzstyle{black}=[circle, draw=black, fill=black, inner sep=0pt, minimum size=3.5pt]
\tikzstyle{black-faded}=[circle, draw=light-gray, fill=light-gray, inner sep=0pt, minimum size=4pt]
\tikzstyle{white}=[circle, draw=black, fill=white, inner sep=0pt, minimum size=3.5pt]
\tikzstyle{white-faded}=[circle, draw=light-gray, fill=white, inner sep=0pt, minimum size=4.5pt]
\tikzstyle{delay}=[fill=black, regular polygon, regular polygon sides=3,rotate=-90, scale=.55]
\tikzstyle{delay-op}=[fill=black, regular polygon, regular polygon sides=3,rotate=90, scale=.55]
\tikzstyle{reg}=[draw, fill=white, rounded rectangle, rounded rectangle left arc=none, minimum height=1em, minimum width=1em, node font={\scriptsize}]
\tikzstyle{coreg}=[draw, fill=white, rounded rectangle, rounded rectangle right arc=none, minimum height=1em, minimum width=1em, node font={\scriptsize}]
\tikzstyle{rn}=[circle, draw=red, fill=red, inner sep=0pt, minimum size=4pt]
\tikzstyle{place}=[circle, draw=black, fill=white, inner sep=0pt, minimum size=8pt]

\tikzstyle{medium box}=[fill=white, draw=black, shape=rectangle, minimum height=1cm, minimum width=0.75cm]
\tikzstyle{small box}=[fill=white, draw=black, shape=rectangle, minimum height=0.75cm, minimum width=0.5cm]
\tikzstyle{square}=[fill=white, draw=black, shape=rectangle, minimum height=1cm, minimum width=1cm]
\tikzstyle{triangle}=[fill=black, draw=black, shape=regular polygon, regular polygon sides=3, rotate=270, scale=0.6]
\tikzstyle{antipode}=[fill=black, draw=black, shape=rectangle]
\tikzstyle{triangleop}=[fill=black, draw=black, shape=regular polygon, regular polygon sides=3, rotate=90, scale=0.6]
\tikzstyle{edge}=[fill=lightgray, draw=white, shape=rectangle, node font={\scriptsize}, text=black]
\tikzstyle{label}=[fill=none, draw=none, shape=circle]
\tikzstyle{transition}=[fill=white, draw=black, shape=rectangle, minimum height=0.75cm, minimum width=0.75cm]

\tikzstyle{transition}=[rectangle,thick,draw=black!75,
  			  fill=black!20,minimum size=7pt]
\tikzstyle{place}=[circle, draw=black, fill=white, inner sep=0pt, minimum size=8pt]
\tikzstyle{vertex}=[circle, draw=black, fill=white, inner sep=0pt, minimum size=15pt]
\tikzstyle{arrow}=[->]

\usepackage[most]{tcolorbox}
\usepackage{xstring}




\newcommand{\myeq}[1]{\stackrel{#1}{=}}
\newcommand{\mysubeq}[1]{\stackrel{#1}{\subseteq}}
\newcommand{\mysupeq}[1]{\stackrel{#1}{\supseteq}}

\newcommand{\One}{
\tikzset{x=1em, y=2.1ex}
\begin{tikzpicture}[baseline=-.5ex]
	\begin{pgfonlayer}{nodelayer}
		\node [style=none] (0) at (-0.75, -0) {};
		\node [style=none] (1) at (0.25, -0) {};
		\node [style=none] (2) at (-0.75, 0.25) {};
		\node [style=none] (3) at (-0.75, -0.25) {};
	\end{pgfonlayer}
	\begin{pgfonlayer}{edgelayer}
		\draw (0.center) to (1.center);
		\draw (3.center) to (2.center);
	\end{pgfonlayer}
\end{tikzpicture}}
\tikzset{x=1em, y=1.5ex}
}

\tikzset{x=1em, y=1.5ex}
}

\def \ACDP {\mathsf{ACirc}^{\scriptgeq}}

\newcommand \LinRel[1]{{\mathsf{LinRel}}_{\scriptscriptstyle #1}}
\newcommand{\RelX}[1]{\mathsf{Rel}_{\scriptstyle #1}}

\newcommand{\propCat}[1]{\mathsf{#1}}

\newcommand{\tcoscalar}[1]{
\begin{tikzpicture}[x=1em, y=2.1ex]
	\begin{pgfonlayer}{nodelayer}
		\node [style=coreg] (0) at (-0.25, 0) {$#1$};
		\node [style=none] (1) at (1.25, 0) {};
		\node [style=none] (2) at (-1.75, 0) {};
	\end{pgfonlayer}
	\begin{pgfonlayer}{edgelayer}
		\draw (2.center) to (0);
		\draw (0) to (1.center);
	\end{pgfonlayer}
\end{tikzpicture}}

\newcommand{\tantipode}{
\tikzset{x=1em, y=2.1ex}
\begin{tikzpicture}
	\begin{pgfonlayer}{nodelayer}
		\node [style=antipode] (3) at (0, 0) {};
		\node [style=none] (13) at (1, 0) {};
		\node [style=none] (14) at (-1, 0) {};
	\end{pgfonlayer}
	\begin{pgfonlayer}{edgelayer}
		\draw (14.center) to (3);
		\draw (3) to (13.center);
	\end{pgfonlayer}
\end{tikzpicture}
}
\tikzset{x=1em, y=1.5ex}
}

\newcommand{\tscalar}[1]{
\begin{tikzpicture}[x=1em, y=2.1ex]
	\begin{pgfonlayer}{nodelayer}
		\node [style=reg] (0) at (-0.25, 0) {$#1$};
		\node [style=none] (1) at (1.25, 0) {};
		\node [style=none] (2) at (-1.75, 0) {};
	\end{pgfonlayer}
	\begin{pgfonlayer}{edgelayer}
		\draw (2.center) to (0);
		\draw (0) to (1.center);
	\end{pgfonlayer}
\end{tikzpicture}}


\usepackage{wasysym}
\usepackage{tikz-cd}

\bibliographystyle{plainurl}

\title{Diagrammatic Polyhedral Algebra} 



\author{Filippo Bonchi}{University of Pisa, Italy}{}{http://orcid.org/0000-0002-3433-723X}{Supported by the Ministero dell’Università e della Ricerca of Italy under Grant No. 201784YSZ5, PRIN2017 – ASPRA (Analysis of Program Analyses).}

\author{Alessandro Di Giorgio}{University of Pisa, Italy}{}{https://orcid.org/0000-0002-6428-6461}{Supported by the Ministero dell’Università e della Ricerca of Italy under Grant No. 201784YSZ5, PRIN2017 – ASPRA (Analysis of Program Analyses).}

\author{Pawe\l{} Soboci\'{n}ski}{Tallinn University of Technology, Estonia}{}{https://orcid.org/0000-0002-7992-9685}{Supported by the ESF funded Estonian IT Academy research measure (project 2014-2020.4.05.19-0001) and the Estonian Research Council grant PRG1210.}

\authorrunning{F. Bonchi, A. Di Giorgio and P. Soboci\'{n}ski} 

\Copyright{Filippo Bonchi, Alessandro Di Giorgio and Pawe\l{} Soboci\'{n}ski} 

\ccsdesc[500]{Theory of computation~Categorical semantics}
\ccsdesc[500]{Theory of computation~Concurrency}


\keywords{String diagrams, Polyhedral cones, Polyhedra} 

\category{} 

\relatedversion{} 



\acknowledgements{The authors have benefited, at the early stage of this work, of enlightening comments and exciting discussions with Guillaume Boisseau. In particular, Guillaume proposed several simplifications to the axiomatisations, showed us the first proof of the generalised spider and the one for its polar.}

\nolinenumbers 


\EventEditors{John Q. Open and Joan R. Access}
\EventNoEds{2}
\EventLongTitle{42nd Conference on Very Important Topics (CVIT 2016)}
\EventShortTitle{CVIT 2016}
\EventAcronym{CVIT}
\EventYear{2016}
\EventDate{December 24--27, 2016}
\EventLocation{Little Whinging, United Kingdom}
\EventLogo{}
\SeriesVolume{42}
\ArticleNo{23}

\begin{document}

\maketitle

\begin{abstract}
 We extend the theory of Interacting Hopf algebras with an order primitive, and give a
 sound and complete axiomatisation of the prop of polyhedral cones. Next, we axiomatise an affine extension
 and prove soundness and completeness for the prop of polyhedra. 
\end{abstract}

%

\section{Introduction}
Engineers and scientists of different fields often rely on diagrammatic notations to model systems of various sorts but, to perform a rigorous analysis, diagrams usually need to be translated to more traditional mathematical language. Indeed diagrams have the advantage to be quite intuitive, highlight connectivity, distribution and communication topology of systems but they usually have an informal meaning and, even when equipped with a formal semantics, diagrams cannot be easily manipulated like standard mathematical expressions.
\emph{Compositional network theory} \cite{Baez2014,BaezCoya-propsnetworktheory} is a multidisciplinary research program studying diagrams as first class citizens: diagrammatic languages come equipped with a formal semantics, which has the key feature to be compositional; moreover diagrams can be 
manipulated like ordinary symbolic expressions if an appropriate equational theory--ideally characterising semantic equality--can be identified.
This approach has been shown effective in various settings like for instance, digital \cite{Ghica2016} and electrical circuits \cite{baez2015compositional,BaezCoya-propsnetworktheory}, quantum protocols \cite{CoeckeDuncanZX2011,Coecke2012}, linear dynamical systems \cite{BaezErbele-CategoriesInControl,ZanasiThesis}, Petri nets \cite{DBLP:journals/pacmpl/BonchiHPSZ19}, Bayesian networks \cite{JacobsZ18} and query languages \cite{DBLP:journals/corr/abs-2009-06836,GCQ}.

The common technical infrastructure is provided by \emph{string diagrams} \cite{Selinger2009}: arrows of a symmetric monoidal category freely generated by a monoidal signature. Intuitively, the signature is a set of generators and diagrams are simply obtained by composing in series (horizontally) and in parallel (vertically) generators plus some basic wires ($\IdnetT$ and $\symNetT$).  
The following set of generators is common to (most of) the aforementioned systems and, surprisingly enough, (almost) the same algebraic laws hold in the various settings. 
$$\BcounitT \qquad \BcomultT \qquad \WmultT \qquad \WunitT \qquad  
    \BunitT \qquad \BmultT \qquad \WcomultT \qquad \WcounitT  $$
It is convenient to give an intuition of the intended meaning of such generators by relying on the semantics from \cite{ZanasiThesis} that, amongst the aforementioned works, is the most relevant for the present paper: the copier $\BcomultT$ receives one value on the left and emits two copies on the right; the discharger $\BcounitT$ receives one value on the left and nothing on the right; the adder $\WmultT$ receives two values on the left and emits their sum on the right; the zero $\WunitT$ receives nothing on the left and constantly emits $0$ on the right. The behaviour of the remaining four generators is the same but left and right are swapped.
Here values are meant to be rational numbers. To deal with values from an arbitrary fields $\field$, one has to add a generator $\scalarT$ for each $k \in \field$; its intended meaning is the one of an amplifier: the value received on the left is multiplied by $k$ and emitted on the right.

This semantics has two crucial properties: first, it enjoys a sound and complete axiomatisation called the theory of Interacting Hopf Algebras ($\IH{}$); second, it can express exactly \emph{linear relations}, namely relations forming vector spaces over $\field$. In other words, diagrams modulo the laws of $\IH{}$ are in one to one correspondence with linear relations.

\medskip

In this paper we extend $\IH{}$ in order to express exactly relations that are \emph{polyhedra}, rather than mere vector spaces. Indeed, polyhedra allow the modeling of bounded spaces which are ubiquitous in computer science. For instance, in abstract interpretation \cite{DBLP:conf/popl/CousotC77} polyhedra represent bounded sets of possible values of variables; in concurrency theory and linear optimisation one always deals with systems having a bounded amounts of resources. 

To catch a glimpse of our result, consider the flow network \cite{10.5555/137406} in \eqref{eq:flownet}: edges are labeled with a positive real number representing their maximum capacity; the flow enters in the source (the node \texttt{s}) and exits from the sink (the node \texttt{t}).

\noindent
\begin{minipage}{0.3\textwidth}
\begin{equation}\label{eq:flownet}
    \scalebox{0.7}{
\tikzset{x=1em, y=2.1ex}
\InputIfFileExists{flownet/example-net.tikz}{}{\input{./tikz/flownet/example-net.tikz}}
\tikzset{x=1em, y=1.5ex}
}
\end{equation}
\end{minipage}
\begin{minipage}{0.4\textwidth}
\begin{equation}\label{eq:flownetd}
    
\tikzset{x=1em, y=2.1ex}
\InputIfFileExists{flownet/example-enc.tikz}{}{\input{./tikz/flownet/example-enc.tikz}}
\tikzset{x=1em, y=1.5ex}

\end{equation}
\end{minipage}
\begin{minipage}{0.3\textwidth}
\begin{equation}\label{eq:flowneedge}
    
\tikzset{x=1em, y=2.1ex}
\InputIfFileExists{flownet/edge-1.tikz}{}{\input{./tikz/flownet/edge-1.tikz}}
\tikzset{x=1em, y=1.5ex}

\end{equation}
\end{minipage}

\noindent
The network in \eqref{eq:flownet} is represented within our diagrammatic language as in \eqref{eq:flownetd} where $
\tikzset{x=1em, y=2.1ex}
\begin{tikzpicture}
	\begin{pgfonlayer}{nodelayer}
		\node [style=none] (0) at (-1.5, 0) {};
		\node [style=edge] (1) at (0, 0) {$k$};
		\node [style=none] (2) at (1.5, 0) {};
	\end{pgfonlayer}
	\begin{pgfonlayer}{edgelayer}
		\draw (0.center) to (1.center);
		\draw (1.center) to (2.center);
	\end{pgfonlayer}
\end{tikzpicture}
}
\tikzset{x=1em, y=1.5ex}
$ is syntactic sugar for the diagram in \eqref{eq:flowneedge}. Here $\greq$ and $\One$ are the two novel generators that we need to add to Interacting Hopf Algebras to express exactly polyhedra: $\greq$ constrains the observation on the left to be greater or equal to the one on the right; $\One$ constantly emits $1$ on the right. Observe that \eqref{eq:flowneedge} forces the values on the left and on the right to be equal and to be in the interval $[0,k]$; the use of $\WmultT$ and $\WcomultT$ for the nodes forces the sum of the flows entering on the left to be equal to the sum of the flows leaving from the right. 

An important property of flow networks is the maximum flow that can enter in the source and arrive to the the sink. The sound and complete axiomatisation that we introduce allows to compute their maximum flow by mean of intuitive graphical manipulations: for instance, the diagram in \eqref{eq:flownetd} can be transformed in $
\tikzset{x=1em, y=2.1ex}
\begin{tikzpicture}
	\begin{pgfonlayer}{nodelayer}
		\node [style=none] (42) at (-1.5, 0) {};
		\node [style=edge] (94) at (0, 0) {$5$};
		\node [style=none] (98) at (1.5, 0) {};
	\end{pgfonlayer}
	\begin{pgfonlayer}{edgelayer}
		\draw (42.center) to (94);
		\draw (94) to (98.center);
	\end{pgfonlayer}
\end{tikzpicture}
}
\tikzset{x=1em, y=1.5ex}
$, meaning that its maximum flow is exactly $5$. We will come back to flow networks at the end of \S\ref{sec:poly} (Example \ref{ex:fnet}).

\medskip

The remainder of the paper is organised as follows. We recall  the basic categorical tools for string diagrams in \S\ref{sec:PROP} and the theory of Interacting Hopf Algebras in \S\ref{sec:IH}. In \S\ref{sec:polyc}, we extend the syntax of Interacting Hopf Algebras
with the generator $\greq$. On the semantic side, this allows to move from linear relations to \emph{polyhedral cones}, for which we give a sound and fully complete axiomatisation in terms of the diagrammatic syntax. The proof of completeness involves a diagrammatic account of Fourier-Motzkin elimination, two normal forms leading to the Weyl-Minkowski theorem, and a simple, inductive account of the notion of polar cone.

The results in \S\ref{sec:polyc} represent our main technical effort. Indeed, to pass from  polyhedral cones to polyhedra in \S\ref{sec:poly}, it is enough to add the generator $\One$, originally introduced in~\cite{BonchiPSZ19} to move from linear to affine relations, and one extra axiom. The proof substantially exploits the homogenization technique to reduce completeness for polyhedra to the just proved completeness for polyhedral cones. 

Finally, in \S\ref{app:state}, we conclude by showing a stateful extension of our diagrammatic calculus. By simply adding a register $\circuitXT$ we obtain a complete axiomatisation for stateful polyhedral processes: these are exactly all transition systems where both states and labels are vectors from some vector spaces and the underlying transition relation forms a polyhedron. Stateful polyhedral processes seem to be a sweet spot in terms of expressivity: on the one hand, they properly generalise signal flow graphs~\cite{mason1953feedback}, on the other, as illustrated in \S\ref{app:state}, they allow us to give a compositional account of continuous Petri nets~\cite{DavidAlla10}.

\section{Props and Symmetric Monoidal Theories}\label{sec:PROP}

The diagrammatic languages studied in network theory, e.g.,~\cite{Coecke2017,Piedeleu2021,Haydon2020,DBLP:journals/pacmpl/BonchiHPSZ19}, can be treated formally using the category theoretic notion of prop~\cite{MacLane1965,Lack2004a} (product and permutation category).
A \emph{prop} is a symmetric strict monoidal (ssm) category with objects natural numbers, where the monoidal product $\oplus$ on objects is addition. Morphisms between props are ssm functors that act as identity on objects.
The usual methodology is to use two props: $\Syn$, the arrows of which are the diagrammatic terms of the language, and $\Sem$, the arrows of which are the intended semantics. A morphism $\dsemO \colon \Syn \rightarrow \Sem$ assigns semantics to terms, with the functoriality of $\dsemO$ guaranteeing compositionality.

The syntactic prop $\Syn$ is usually freely generated from a \emph{monoidal signature} $\Sigma$, namely a set of generators $o\colon n  \to m$ with arity $n\in \N$ and coarity $m \in \N$. Intuitively, the arrows of $\Syn$ are diagrams wired up from the generators. A way of giving a concrete description is via $\Sigma$-terms. The set of \emph{$\Sigma$-terms} is obtained by composing generators in $\Sigma$, the identities $id_0\colon 0 \to 0$, $id_1\colon 1 \to 1$ and the symmetry $\sigma_{1,1} \colon 2 \to 2$ with $;$ and $\oplus$. This is a purely formal process: given $\Sigma$-terms $t\colon k \to l$, $u \colon l \to m$, $v \colon m \to n$, one constructs $\Sigma$-terms $t ; u \colon k \to m$ and $t \oplus v \colon k + n \to l + n$. Now, the \emph{prop freely generated by a signature} $\Sigma$, hereafter denoted by $\PSigma$, has as its arrows $n \to m$ the set of $\Sigma$-terms $n \to m$  modulo the laws of ssm categories.

There is a well-known, natural graphical representation for arrows of a freely generated prop as string diagrams, which we now sketch. 
A $\Sigma$-term $n \to m$ is pictured as a box with $n$ ordered wires on the left and $m$ on the right. Composition via $;$ and $\oplus$ are rendered graphically by horizontal and vertical juxtaposition of boxes, respectively.
\begin{equation}\label{eq:horver}
        \scalebox{0.8}{
    \begin{tikzpicture}
        \begin{pgfonlayer}{nodelayer}
            \node [style=square] (0) at (-3.75, 0) {$t$};
            \node [style=square] (1) at (-1.75, 0) {$s$};
            \node [style=none] (2) at (-2, 0.25) {};
            \node [style=none] (3) at (-2, -0.25) {};
            \node [style=none] (4) at (-0.5, 0.25) {};
            \node [style=none] (5) at (-0.5, -0.25) {};
            \node [style=none] (6) at (-5, -0.25) {};
            \node [style=none] (7) at (-5, 0.25) {};
            \node [style=none] (8) at (-4, 0.25) {};
            \node [style=none] (9) at (-4, -0.25) {};
            \node [style=none] (10) at (-1.5, -0.25) {};
            \node [style=none] (11) at (-1.5, 0.25) {};
            \node [style=none] (12) at (-3.5, -0.25) {};
            \node [style=none] (13) at (-3.5, 0.25) {};
            \node [style=none] (15) at (-4.75, 0) {\myvdots};
            \node [style=none] (17) at (-2.75, 0) {\myvdots};
            \node [style=none] (19) at (-0.75, 0) {\myvdots};
            \node [style=square] (20) at (1.75, 0.5) {$t$};
            \node [style=none] (21) at (3, 0.75) {};
            \node [style=none] (22) at (3, 0.25) {};
            \node [style=none] (23) at (0.5, 0.25) {};
            \node [style=none] (24) at (0.5, 0.75) {};
            \node [style=none] (25) at (1.5, 0.75) {};
            \node [style=none] (26) at (1.5, 0.25) {};
            \node [style=none] (27) at (2, 0.25) {};
            \node [style=none] (28) at (2, 0.75) {};
            \node [style=none] (29) at (0.75, 0.5) {\myvdots};
            \node [style=none] (30) at (2.75, 0.5) {\myvdots};
            \node [style=square] (31) at (1.75, -0.75) {$s$};
            \node [style=none] (32) at (3, -0.5) {};
            \node [style=none] (33) at (3, -1) {};
            \node [style=none] (34) at (0.5, -1) {};
            \node [style=none] (35) at (0.5, -0.5) {};
            \node [style=none] (36) at (1.5, -0.5) {};
            \node [style=none] (37) at (1.5, -1) {};
            \node [style=none] (38) at (2, -1) {};
            \node [style=none] (39) at (2, -0.5) {};
            \node [style=none] (40) at (0.75, -0.75) {\myvdots};
            \node [style=none] (41) at (2.75, -0.75) {\myvdots};
        \end{pgfonlayer}
        \begin{pgfonlayer}{edgelayer}
            \draw (7.center) to (8.center);
            \draw (6.center) to (9.center);
            \draw (10.center) to (5.center);
            \draw (11.center) to (4.center);
            \draw (13.center) to (2.center);
            \draw (12.center) to (3.center);
            \draw (24.center) to (25.center);
            \draw (23.center) to (26.center);
            \draw (28.center) to (21.center);
            \draw (27.center) to (22.center);
            \draw (35.center) to (36.center);
            \draw (34.center) to (37.center);
            \draw (39.center) to (32.center);
            \draw (38.center) to (33.center);
        \end{pgfonlayer}
    \end{tikzpicture}}
\end{equation}
Moreover $id_1 \colon 1 \to 1$ is pictured as $\IdnetT$, the symmetry $\sigma_{1,1} \colon 1 + 1 \to 1+1$ as $\symNetT$, and the unit object for $\oplus$, that is, $id_0 \colon 0 \to 0$ as the empty diagram $\ZeronetT$.
Arbitrary identities $id_n$ and symmetries $\sigma_{n,m}$ are generated according to
~\eqref{eq:horver} and  drawn as \begin{tikzpicture}
	\begin{pgfonlayer}{nodelayer}
		\node [style=none] (10) at (-0.25, 0) {};
		\node [style=none] (11) at (0, 0.2) {$n$};
		\node [style=none] (14) at (0.25, 0) {};
	\end{pgfonlayer}
	\begin{pgfonlayer}{edgelayer}
		\draw (10.center) to (14.center);
	\end{pgfonlayer}
\end{tikzpicture}
 and \scalebox{0.8}{
\begin{tikzpicture}
	\begin{pgfonlayer}{nodelayer}
		\node [style=none] (10) at (-0.5, -0.25) {};
		\node [style=none] (11) at (-0.4, 0.4) {$n$};
		\node [style=none] (12) at (0, 0) {};
		\node [style=none] (13) at (0.5, -0.25) {};
		\node [style=none] (14) at (0.5, 0.25) {};
		\node [style=none] (15) at (-0.5, 0.25) {};
		\node [style=none] (20) at (0.4, -0.1) {$n$};
		\node [style=none] (21) at (0.4, 0.4) {$m$};
		\node [style=none] (22) at (-0.4, -0.1) {$m$};
	\end{pgfonlayer}
	\begin{pgfonlayer}{edgelayer}
		\draw [bend left] (15.center) to (12.center);
		\draw [bend right] (12.center) to (13.center);
		\draw [bend left] (12.center) to (14.center);
		\draw [bend left] (12.center) to (10.center);
	\end{pgfonlayer}
\end{tikzpicture}
}, respectively.

\medskip
Given a diagrammatic language $\Syn$ and a morphism $\dsemO \colon \Syn \to \Sem$,
a useful task is to identify a sound and (ideally) complete set of characterising equations $E$:
$\dsem{c}=\dsem{d}$ iff $c$ and $d$ are equal in $\stackrel{E}{=}$, the smallest congruence (w.r.t. $;$ and $\oplus$) containing $E$. Formally, the set $E$ consists of pairs $(t, t' \colon n \to m)$ of $\Sigma$-terms with the same arity and coarity. Then $\Sigma$ together with $E$ form a \emph{symmetric monoidal theory} (smt), providing a calculus of diagrammatic reasoning. Any smt $(\Sigma, E)$ yields a prop $\PSigmaE$, obtained by quotienting the $\PSigma$ by $\stackrel{E}{=}$.

Another issue is expressivity: one would like to characterise the image of $\Syn$ through $\dsemO$, namely a subprop $\IM$ of $\Sem$ consisting of exactly those arrows $d$ of $\Sem$ for which there exists some $c$ in $\Syn$ such that $\dsem{c}=d$. When this is possible and a sound and complete axiomatization is available, the semantics map $\dsemO$ factors as follows:
\[\xymatrix{{\Syn = \PSigma} \ar@{->>}[r]^q \ar@(ur,ul)[rrr]|{\dsemO} & \PSigmaE  \ar[r]^{\cong} & \IM \;\; \ar@{>->}[r]^{\iota } &\Sem }. \]
The morphism $q$ quotients $\PSigma$ by $\stackrel{E}{=}$, $\iota$ is the inclusion of $\IM$ in $\Sem$ and $\cong$ is an iso between $\PSigmaE$ and $\IM$. In this case we say that $(\Sigma,E)$ is the (symmetric monoidal) theory of $\IM$.

Let $\mathsf k$ be an ordered field. In this paper $\Sem$ is fixed to be the following prop.
\begin{definition}
    \label{def:rel}
    $\RelX{\mathsf k}$ is the prop where arrows $n \rightarrow m$ are relations $R\subseteq \mathsf{k}^n \times \mathsf{k}^m$.
    \begin{itemize}
        \item Composition is relational: given $R \colon n \rightarrow m$ and $S \colon m \rightarrow o$,
        \[R\mathrel{;}S=\{\,(u, v) \in \mathsf{k}^n \times \mathsf{k}^o \;\mid\; \exists w \in \mathsf{k}^m .\; (u, w) \in S \wedge (w, v) \in R \,\} \]
        \item The monoidal product is cartesian product: given $R \colon n \rightarrow m$ and $S \colon o \rightarrow p$,
        \[ R \oplus S = \{\,(\begin{pmatrix} u_1 \\ u_2 \end{pmatrix}, \begin{pmatrix} v_1 \\ v_2 \end{pmatrix}) \in \mathsf{k}^{n+o} \times \mathsf{k}^{m+p} \;\mid\; (u_1, v_1) \in R \wedge (v_1, v_2) \in S \,\} \]
        \item The symmetries $\sigma_{n,m} \colon n + m \rightarrow m + n$ are the relations
        $ \{\, (\begin{pmatrix} u \\ v \end{pmatrix}, \begin{pmatrix} v \\ u \end{pmatrix}) \;\mid\; u \in \mathsf{k}^n, v \in \mathsf{k}^m \,\} $
    \end{itemize}
\end{definition}
For $\IM$, we will consider the following three props. 
\begin{definition} \label{def:pc}
We define three sub-props of $\RelX{\mathsf k}$. Arrows $n\to m$
\begin{itemize}
    \item in $\LinRel{\field}$ are vector spaces $\{ (x,y) \in \mathsf{k}^{n}\times \mathsf{k}^{m} \mid A\begin{pmatrix*} x \\ y \end{pmatrix*} = 0 \}$ for some matrix $A$;
    \item in $\propCat{PC}_{\mathsf k}$ are polyhedral cones $\{ (x,y) \in \mathsf{k}^{n}\times \mathsf{k}^{m} \mid A\begin{pmatrix*} x \\ y \end{pmatrix*} \geq 0 \}$ for some matrix $A$;
\item in $\propCat{P}_{\mathsf k}$ are polyhedra $\{ (x,y) \in \field^n \times \field^m \mid A\begin{pmatrix*} x \\ y \end{pmatrix*} + b \geq 0 \}$ for some  matrix $A$ and $b \in \field^p$.
\end{itemize}
\noindent Identities, permutations, composition and monoidal product are defined as in $\RelX{\mathsf k}$.
\end{definition}
\begin{remark}
In Definition \ref{def:pc}, $A$ is a matrix with $n+m$ columns and $p$ rows, for some $p\in \N$. Observe that the matrix $A$ gives rise also to arrows $n' \to m'$ with $n',m'$ different from $n,m$ but, such that $n'+m' = n+m$. This is justified by the isomorphism of $\field^n \times \field^m$ and $\field^{n'}\times \field^{m'}$. Note therefore that the left and the right boundaries should not be confused with inputs and outputs. This is a common feature in diagrammatic approaches relying on a notion of relational composition which is unbiased.
\end{remark}

Showing that the above are well-defined---e.g. that the composition of polyhedral cones is a polyhedral cone---requires some well-known results, which are given in Appendix~\ref{app:standardproperties}.
%
In \S\ref{sec:IH}, we recall the theory of $\LinRel{\field}$, in \S\ref{sec:polyc} we identify the theory of $\propCat{PC}_{\mathsf k}$ and, in \S\ref{sec:poly}, that of  $\propCat{P}_{\mathsf k}$.

\subsection{Ordered Props and Symmetric Monoidal Inequality Theories}
As relations $R,S\colon n \to m$ in $\RelX{\mathsf k}$ carry the partial order of inclusion $\subseteq$, it is useful to be able to state when $\dsem{c} \subseteq \dsem{d}$ for some $c,d$ in $\Syn$ (see e.g.~\cite{BonchiHPS17} for motivating examples). In order to consider such inclusions, it is convenient to look at $\RelX{\mathsf k}$ as an ordered prop.

\begin{definition}
An \emph{ordered prop} is a prop enriched over the category of posets: a symmetric strict monoidal 2-category with objects the natural numbers, monoidal product on objects given by addition, where each set of arrows $n \to m$ is a poset, with composition and monoidal product monotonic.
Similarly, a \emph{pre-ordered prop} is a prop enriched over the category of pre-orders.
A morphism of (pre-)ordered props is an identity-on-objects  symmetric strict 2-functor.
\end{definition}
Just as SMTs yield props, \emph{Symmetric Monoidal Inequalities Theories}~\cite{BonchiHPS17} (SMITs) give rise to ordered props. A SMIT is a pair $(\Sigma, I)$ where $\Sigma$ is a signature and $I$ is a set of inequations: as for equations, the underlying data  is a pair  $(t, t' \colon n \to m)$ of $\Sigma$-terms with the same arity and coarity. Unlike equations, however, we understand this data as directed: $t\leq t'$.

To obtain the free ordered prop from an SMIT, first, we construct the free pre-ordered prop: arrows are $\Sigma$-terms. The homset orders, hereafter denoted by $\mysubeq{I}$, are determined by closing $I$ by reflexivity, transitivity, $\poi$ and $\oplus$: this is the smallest precongruence (w.r.t. $\poi$ and $\oplus$) containing $I$. Then, we obtain the free ordered prop by quotienting the free pre-ordered prop by the equivalence induced by $\mysubeq{I}$, i.e.\ quotienting wrt anti-symmetry.

Any prop can be regarded as an ordered prop with the discrete ordering. Moreover
any SMT $(\Sigma, E)$ gives rise to a canonical SMIT $(\Sigma, I)$ where each equation is replaced by two inequalities $I=E \cup E^{op}$ in the obvious way.  In the remainder of this paper, we always consider SMITs but, for the sake of readability, we generically refer to them just as \emph{theories}. Such theories consist of both inequalities and equations, that are generically called \emph{axioms}. Similarly, all props considered in the paper, their morphisms and isomorphisms are ordered.

\section{The theory of Linear relations}\label{sec:IH}
In this section, we recall from~\cite{Bonchi2014b,BaezErbele-CategoriesInControl,ZanasiThesis} the theory of Interacting Hopf algebras.
The signature consists of the following set of generators, where $k$ ranges over a fixed field $\mathsf k$.
\begin{align}
    &\BcounitT \bnfSep \BcomultT \bnfSep \scalarT
    \bnfSep \WmultT \bnfSep \WunitT \bnfSep  \label{eq:SFcalculusSyntax1} \\
    &\BunitT \bnfSep \BmultT \bnfSep \scalaropT
    \bnfSep  \WcomultT \bnfSep \WcounitT  \label{eq:SFcalculusSyntax2}
\end{align}
For each generator, its arity and coarity are given by the number of dangling wires on the left and, respectively, on the right. For instance $\BcounitT$ has arity $1$ and coarity $0$.
We call $\CD$ the prop freely generated by this signature and we refer to its arrows as circuits.
We use $\CD$ as the \emph{syntax} of our starting diagrammatic language. The semantics is given as the prop morphism ${\dsemO \colon \CD \rightarrow \RelX{\mathsf k}}$ defined for the generators in~\eqref{eq:SFcalculusSyntax1} as
%
\begin{equation}\label{eq:semIH}
\begin{array}{lll}
\dsem{\Bcomult} \ = \ \{ (x, \begin{pmatrix} x\\ x \end{pmatrix}) \mid x \in \mathsf{k} \}
&
\dsem{\Wmult} \ = \ \{ (\begin{pmatrix} x\\ y \end{pmatrix}, x+y) \mid x, y \in \mathsf{k} \} \\
\dsem{\Bcounit} \ = \ \{ (x, \nullvec) \mid x \in \mathsf{k} \}
&
\dsem{\Wunit}  \ = \  \{ (\nullvec, 0) \} \quad \dsem{\tscalar{k}}  \ = \  \{ (x, k\cdot x) \mid x \in \mathsf{k} \}
\end{array}
\end{equation}
%
and, symmetrically, for the generators in~\eqref{eq:SFcalculusSyntax2}. For instance, $\dsem{\scalaropT} = \{ (k\cdot x,  x) \mid x \in \mathsf{k} \}$.
The semantics of the identities, symmetries and compositions is given by the \emph{functoriality} of $\dsemO$, e.g., $\dsem{c \scolon d} = \dsem{c} \scolon \dsem{d}$. 
Above we used $\nullvec$ for the unique element of the vector space $\field^0$.

We call $\FC$ the prop freely generated from the generators in~\eqref{eq:SFcalculusSyntax1}
and $\FCop$ the one freely generated from~\eqref{eq:SFcalculusSyntax2}. The semantics of circuits in $\FC$ can be thought of as functions taking inputs on left ports and giving output on the right ports, with the intuition for the generators as given in the Introduction.
Symmetrically, the semantics of circuits in $\FCop$ are functions with inputs on the right ports and outputs on the left. The semantics of an arbitrary circuit in $\CD$ is, in general, a relation.

\begin{example}\label{ex:cc}
Two circuits will play a special role in our exposition: $\cupd$ and $\capd$. Using the definition of $\dsemO$, it is immediate to see that their semantics forces the two ports on the right (resp. left) to carry the same value.  \[\dsem{\cupd}=\{ (\nullvec,\begin{pmatrix} x\\ x \end{pmatrix}) \mid x \in \mathsf{k} \} \qquad \dsem{\capd}=\{ (\begin{pmatrix} x\\ x \end{pmatrix},\nullvec) \mid x \in \mathsf{k} \}\]
Using these diagrams (along with $\IdnetT$ and $\symNetT$) one defines for each $n\in \N$, $\cupnd \colon 0 \to n+n$ and $\capnd \colon n+n\to 0$ with semantics
$\{ (\begin{pmatrix} x\\ x \end{pmatrix},\nullvec) \mid x \in \mathsf{k}^n \}$ and $\{ (\nullvec, \begin{pmatrix} x\\ x \end{pmatrix}) \mid x \in \mathsf{k}^n \}$. These circuits give rise, modulo the axioms that we will illustrate later, to a self-dual compact closed structure. See~\cite[Sec.5.1]{BonchiSZ17} for full details. As for identities and symmetries, also for $\cupnd$ and $\capnd$ we will sometimes omit $n$ for readability. Given an arbitrary circuit $c\colon n\to m$, its \emph{opposite} circuit $c^{\op}\colon m \to n$ is defined as illustrated below. It is easy to see that $c^{\op}$ denotes the opposite relation of $\dsem{c}$, i.e., $\dsem{c^{\op}}=\{(y,x)\in \field^m \times \field^n \mid (x,y)\in \dsem{c}\}$.
\[ \left( \cbox \right)^{\op} \coloneqq \cboxop\]
\end{example}
As for $\cupnd$ above, one can define the $n$-version of each of the generators in
\eqref{eq:SFcalculusSyntax1} and~\eqref{eq:SFcalculusSyntax2} (as well as generators~\eqref{eq:>} and~\eqref{eq:perp} that we shall introduce later). For instance $\dsem{
\tikzset{x=1em, y=2.1ex}
\InputIfFileExists{add-n.tikz}{}{\input{./tikz/add-n.tikz}}
\tikzset{x=1em, y=1.5ex}
} = \{ (\begin{pmatrix} x\\ y \end{pmatrix}, x+y) \mid x, y \in \mathsf{k}^n\}$. When clear from the context, we will omit the $n$.

A sound and complete axiomatisation for semantic equality was developed in~\cite{Bonchi2014b,BaezErbele-CategoriesInControl,ZanasiThesis}, and in~\cite{BonchiHPS17} for inclusion. 
The above signature together with the axioms, recalled in Figure~\ref{fig:ih}, form the theory of Interacting Hopf Algebras.
The resulting prop is denoted by $\IH{\field}$.

\begin{remark}
Thanks to the compact closed structure, each of the axioms and laws that we prove in the text can be read both as $c \myeq{\IH{}} d$ and $c^{\op} \myeq{\IH{}} d^{\op}$. For example, by $\bullet\text{--}coas$ we also know that $
\tikzset{x=1em, y=2.1ex}
\InputIfFileExists{co-copy-associative.tikz}{}{\input{./tikz/co-copy-associative.tikz}}
\tikzset{x=1em, y=1.5ex}
 \myeq{\IH{}} 
\tikzset{x=1em, y=2.1ex}
\InputIfFileExists{co-copy-associative-1.tikz}{}{\input{./tikz/co-copy-associative-1.tikz}}
\tikzset{x=1em, y=1.5ex}
$.
\end{remark}

\begin{theorem} \label{co:complih}
    For all circuits $c, d$ in $\CD$, $\dsem{c} \subseteq \dsem{d}$ if and only if $c \stackrel{\IH{}}{\subseteq} d$.
\end{theorem}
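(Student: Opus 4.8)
The plan is to prove the two implications separately: soundness (the ``if'' direction, $c \mysubeq{\IH{}} d \Rightarrow \dsem c \subseteq \dsem d$) and completeness (the ``only if'' direction). Soundness is the routine half. Since $\dsemO \colon \CD \to \RelX{\mathsf k}$ is a morphism of ordered props it is monotone and preserves $\poi$ and $\oplus$, so the set of pairs $(c,d)$ with $\dsem c \subseteq \dsem d$ is a precongruence with respect to $\poi$ and $\oplus$ and is closed under reflexivity and transitivity. It therefore suffices to verify that every axiom of $\IH{}$ in Figure~\ref{fig:ih} is sound, i.e. that $\dsem{t} \subseteq \dsem{t'}$ for each inequality $t\leq t'$ and $\dsem{t}=\dsem{t'}$ for each equation; this is a finite check using the definitions in~\eqref{eq:semIH}. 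Soundness then follows by induction on the derivation of $c \mysubeq{\IH{}} d$.

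For completeness I would reduce inclusion to equality, exploiting that the equational fragment of $\IH{}$ is already known to be complete for $\LinRel{\mathsf k}$~\cite{Bonchi2014b,BaezErbele-CategoriesInControl,ZanasiThesis}: $\dsem c = \dsem d$ implies $c \myeq{\IH{}} d$. The key is that, for linear relations, inclusion can be rephrased through intersection: $\dsem c \subseteq \dsem d$ iff $\dsem c = \dsem c \cap \dsem d$. The relation $\dsem c \cap \dsem d$ is itself the denotation of an explicit circuit $c\wedge d$, obtained by copying the inputs with the black comonoid $\BcomultT$, running $c$ and $d$ in parallel, and merging the outputs with the black monoid $\BmultT$; a direct computation with~\eqref{eq:semIH} gives $\dsem{c\wedge d}=\dsem c \cap \dsem d$. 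Thus, assuming $\dsem c \subseteq \dsem d$, we obtain $\dsem c = \dsem{c\wedge d}$, and equational completeness yields $c \myeq{\IH{}} c\wedge d$.

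It then remains to prove the diagrammatic inequality $c\wedge d \mysubeq{\IH{}} d$, after which $c \myeq{\IH{}} c\wedge d \mysubeq{\IH{}} d$ and transitivity close the argument. This is the heart of the matter and the only place where the genuinely order-theoretic content of $\IH{}$ enters, namely the single non-equational axiom, which in one of its equivalent forms reads $\WcounitT \leq \BcounitT$ and asserts semantically $\{(0,\nullvec)\}\subseteq\{(x,\nullvec)\mid x\in\mathsf k\}$, i.e. $\{0\}\subseteq\mathsf k$. From this axiom, together with the Hopf and Frobenius equations, one derives the \emph{relaxation law} $c \mysubeq{\IH{}} \top_{n,m}$ for every $c\colon n\to m$, where $\top_{n,m}$ is the disconnected circuit that deletes each of the $n$ left ports with $\BcounitT$ and freely generates each of the $m$ right ports with $\BunitT$, so that $\dsem{\top_{n,m}}=\mathsf k^{n+m}$ is the full relation. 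Monotonicity of $\wedge$ then gives $c\wedge d \mysubeq{\IH{}} \top_{n,m}\wedge d \myeq{\IH{}} d$, the last equality again by equational completeness (since $\mathsf k^{n+m}\cap\dsem d=\dsem d$).

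The main obstacle is exactly this last step: showing that the single inequality $\WcounitT \leq \BcounitT$ propagates, through the equational theory, to the relaxation law $c \mysubeq{\IH{}} \top_{n,m}$. Concretely one would establish it for the generators and the basic wires $\IdnetT,\symNetT,\cupnd,\capnd$ and then lift it along $\poi$ and $\oplus$; the delicate cases are those in which relaxing an input constraint must be transported across the compact closed (cup/cap) structure, which is where the Frobenius laws do the real work. Everything else — the intersection construction, the reduction to equality, and the appeal to equational completeness — is essentially bookkeeping.
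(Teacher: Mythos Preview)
The paper does not prove this theorem: it is stated as background, with the equational case cited from \cite{Bonchi2014b,BaezErbele-CategoriesInControl,ZanasiThesis} and the order-enriched extension from \cite{BonchiHPS17}. Your sketch is correct and is essentially the argument of the cited work: reduce inclusion to an equality via the meet $c\wedge d$ built from the black Frobenius structure, invoke equational completeness to get $c \myeq{\IH{}} c\wedge d$, and use the single order axiom $\WcounitT \mysubeq{\IH{}} \BcounitT$ to derive $c\wedge d \mysubeq{\IH{}} d$. The paper alludes to exactly this mechanism in its appendix, where it records that $\IH{\field}$ (inside $\IHP{\field}$) is an Abelian bicategory, so that homsets are lattices with $d \mysubeq{} c$ iff $c\wedge d \myeq{} d$; see Lemma~\ref{lemma:adjoint}, Lemma~\ref{lemma:laxcomonoid} and equation~\eqref{eq:meetsemilattice}.

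One small correction: the obstacle you flag---transporting the relaxation law across cups and caps---is largely illusory. In $\CD$ the cups and caps are not generators; they are defined from $\BunitT,\BcomultT$ (resp.\ $\BmultT,\BcounitT$). The structural induction for $c \mysubeq{\IH{}} \top_{n,m}$ therefore runs only over the ten generators of \eqref{eq:SFcalculusSyntax1}--\eqref{eq:SFcalculusSyntax2} together with $\IdnetT,\symNetT$, and each base case is a one- or two-line consequence of $\WcounitT \mysubeq{} \BcounitT$, its compact-closed reflection $\WunitT \mysubeq{} \BunitT$, and the adjunction $\IdnetT \mysubeq{} \BcounitT\poi\BunitT$ of Lemma~\ref{lemma:adjoint}. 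The inductive cases are immediate since $\top_{n,m}\poi\top_{m,p} \myeq{\IH{}} \top_{n,p}$ (by $\bullet\text{-}bo$) and $\top_{n_1,m_1}\oplus\top_{n_2,m_2} \myeq{\IH{}} \top_{n_1+n_2,m_1+m_2}$ (by interchange). Equivalently---and this is how the appendix phrases it---one proves the lax comonoid law $c\poi\BcounitT \mysubeq{} \BcounitT$ for each generator and reads off $c \mysubeq{} \top$.
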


We now come to expressivity: which relations in $\RelX{\mathsf k}$ are expressed by $\CD$? The answer is that $\CD$ captures exactly $\LinRel{\field}$ (see Definition~\ref{def:pc}). 
\begin{theorem}
    \label{pr:isoihlrel}
    $\IH{\field} \cong \LinRel{\field}$.
\end{theorem}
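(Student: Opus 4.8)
The plan is to factor the semantics $\dsemO$ through its quotient and then show that the induced functor $\overline{\dsemO}\colon \IH{\field}\to \RelX{\field}$ is an order-isomorphism onto the sub-prop $\LinRel{\field}$. By Theorem~\ref{co:complih}, $\dsem{c}\subseteq\dsem{d}$ holds in $\RelX{\field}$ exactly when $c\mysubeq{\IH{}}d$; reading this in both directions shows that $\overline{\dsemO}$ is faithful (injective on homsets, since $\dsem{c}=\dsem{d}$ forces $c=_{\IH{}}d$ by antisymmetry) and that it reflects the order, while soundness of the axioms makes it monotone. Hence $\overline{\dsemO}$ is already a faithful, order-reflecting, order-preserving embedding, and the whole content of the theorem reduces to identifying its image: the inclusion ``image $\subseteq \LinRel{\field}$'' is \emph{soundness of expressivity}, and ``image $\supseteq \LinRel{\field}$'' is \emph{fullness}.

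For the easy inclusion I would argue by structural induction on circuits. Each generator in~\eqref{eq:SFcalculusSyntax1}--\eqref{eq:SFcalculusSyntax2} denotes a vector space: for instance $\dsem{\Bcomult}$, $\dsem{\Wmult}$, $\dsem{\Wunit}$, $\dsem{\Bcounit}$ and $\dsem{\tscalar{k}}$ are each cut out by linear equations, hence lie in $\LinRel{\field}$. Since $\LinRel{\field}$ is a sub-prop of $\RelX{\field}$---closed under $\poi$ and $\oplus$ and containing all identities and symmetries, as recorded in Appendix~\ref{app:standardproperties}---the denotation of any circuit, being assembled from generators by exactly these operations, is again a linear relation. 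Thus $\overline{\dsemO}$ genuinely lands in $\LinRel{\field}$.

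The substance is the reverse inclusion. First I would show that the sub-prop $\FC$ realises \emph{every matrix}: for a linear map $A\colon \field^{k}\to \field^{l}$ there is a circuit $c_A$ in $\FC$ with $\dsem{c_A}=\{(z,Az)\mid z\in \field^{k}\}$, built by induction on the dimensions using $\Bcomult$ to duplicate inputs, $\Wmult$ and $\Wunit$ to form linear combinations, $\Bcounit$ to discard, and the scalars $\tscalar{k}$ for the entries; symmetrically $\FCop$ realises all opposite graphs $\{(Az,z)\}$, so together they yield the prop $\Mat{\field}$ and its opposite. Next I would use that every linear relation $R\colon n\to m$, being a subspace of $\field^{n}\times\field^{m}$, admits an \emph{image presentation}: there is a matrix $B=\left(\begin{smallmatrix} B_1\\ B_2\end{smallmatrix}\right)\colon \field^{k}\to\field^{n+m}$ with $R=\{(B_1 z, B_2 z)\mid z\in \field^{k}\}$. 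Bending the shared variable with the compact-closed cups of Example~\ref{ex:cc}, the circuit obtained by joining $c_{B_1}^{\op}$ and $c_{B_2}$ along a $k$-fold internal wire denotes precisely this existentially quantified relation, so $R$ lies in the image of $\overline{\dsemO}$.

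The main obstacle is this expressivity step, and within it the reconciliation of the two descriptions of a subspace: Definition~\ref{def:pc} presents $\LinRel{\field}$ by kernels, via $A\left(\begin{smallmatrix} x\\ y\end{smallmatrix}\right)=0$, whereas the circuit is naturally built from the image (spanning) presentation. Passing between kernel and image is the linear-algebraic core (existence of a basis, rank--nullity), and it is exactly the vector-space shadow of the Weyl--Minkowski correspondence that will reappear, in genuinely harder form, for cones in~\S\ref{sec:polyc}. Some care is also needed with the unbiased relational composition of Definition~\ref{def:rel} when verifying that the bent circuit computes the intended quantification, but this is routine once the matrix-realisation lemma and the image presentation are in hand.
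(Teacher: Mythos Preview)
The paper does not prove this statement; it is recalled as a known result from the literature (the axiomatisation is attributed to~\cite{Bonchi2014b,BaezErbele-CategoriesInControl,ZanasiThesis} and the ordered refinement to~\cite{BonchiHPS17}), and the isomorphism is simply stated. Your reconstruction is correct and matches the general template the paper sets out at the end of~\S\ref{sec:PROP}: given soundness and completeness (Theorem~\ref{co:complih}), the induced functor $\IH{\field}\to\RelX{\field}$ is a faithful order-embedding, so the theorem reduces to identifying the image with $\LinRel{\field}$.

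One small remark on your expressivity argument: you opt for the \emph{image} (span) presentation of a subspace, building the circuit from $c_{B_1}^{\op}$ and $c_{B_2}$ joined by cups. This is fine, but the \emph{kernel} presentation in Definition~\ref{def:pc} gives an equally direct circuit---compose the matrix $A$ (as an arrow of $\FC$, via Example~\ref{ex:matrix}) with the $n$-ary $\WcounitT$, then bend the $m$ output wires around using the compact-closed structure, exactly as in the later polyhedral normal form~\eqref{eq:polyhedralcnf} with $\greq$ replaced by equality. Both routes work and require only the matrix-realisation lemma for $\FC$ that you sketch; the kernel route has the minor advantage of matching Definition~\ref{def:pc} on the nose, so no basis-extraction or rank--nullity argument is needed to pass between presentations.
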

The above result means that $\IH{\field}$ is the theory of linear relations.
It is convenient to recall from~\cite{Lafont2003} a useful fact: circuits in $\FC$ express exactly  $\field$-matrices, as illustrated below:
\begin{example}\label{ex:matrix}
Consider the circuit $c \colon 3 \rightarrow 4$ below and its representation as a $4 \times 3$ matrix. Note that $A_{ij} = k$ whenever $k$ is the scalar encountered on the path from the $i$th port to the $j$th port. If there is no path, then $A_{ij}=0$. It is easy to check that $\dsem{c}= \{(x,y) \in \field^3\times \field^4 \mid y=Ax\}$.

\begin{minipage}[t]{0.3\textwidth}
            \begin{equation*}
                c = \begin{tikzpicture}
                    \begin{pgfonlayer}{nodelayer}
                        \node [style=black] (0) at (-0.5, 0.25) {};
                        \node [style=black] (1) at (0, 0.5) {};
                        \node [style=reg] (2) at (0, 0) {$k_2$};
                        \node [style=none] (3) at (-1, 0.25) {};
                        \node [style=white] (4) at (0.5, -0.25) {};
                        \node [style=none] (5) at (0, -0.5) {};
                        \node [style=none] (6) at (-1, -0.5) {};
                        \node [style=none] (7) at (0.5, 0.25) {};
                        \node [style=reg] (8) at (0.5, 0.75) {$k_1$};
                        \node [style=none] (9) at (1, 0.75) {};
                        \node [style=none] (10) at (1, 0.25) {};
                        \node [style=none] (11) at (1, -0.25) {};
                        \node [style=none] (12) at (-1, -1) {};
                        \node [style=black] (13) at (-0.5, -1) {};
                        \node [style=none] (14) at (1, -1) {};
                        \node [style=white] (15) at (0.5, -1) {};
                    \end{pgfonlayer}
                    \begin{pgfonlayer}{edgelayer}
                        \draw (3.center) to (0);
                        \draw [bend left, looseness=0.75] (0) to (1);
                        \draw [bend right, looseness=0.75] (0) to (2);
                        \draw [bend right, looseness=0.75] (5.center) to (4);
                        \draw [bend right, looseness=0.75] (4) to (2);
                        \draw (6.center) to (5.center);
                        \draw [bend left, looseness=0.75] (1) to (8);
                        \draw [bend right, looseness=0.75] (1) to (7.center);
                        \draw (8) to (9.center);
                        \draw (7.center) to (10.center);
                        \draw (4) to (11.center);
                        \draw (12.center) to (13);
                        \draw (15) to (14.center);
                    \end{pgfonlayer}
                \end{tikzpicture}  
            \end{equation*}
        \end{minipage}
        \begin{minipage}[t]{0.3\textwidth}
            \begin{equation*}
                A = \begin{pmatrix}
                    k_1 & 0 & 0 \\
                    1 & 0 & 0 \\
                    k_2 & 1 & 0 \\
                    0 & 0 & 0
                \end{pmatrix}
            \end{equation*}
        \end{minipage}
        \begin{minipage}[t]{0.3\textwidth}
            \begin{equation*}
               d = \begin{tikzpicture}
	\begin{pgfonlayer}{nodelayer}
		\node [style=black] (0) at (0.5, 0.25) {};
		\node [style=black] (1) at (0, 0.5) {};
		\node [style=coreg] (2) at (0, 0) {$k_2$};
		\node [style=none] (3) at (1, 0.25) {};
		\node [style=white] (4) at (-0.5, -0.25) {};
		\node [style=none] (5) at (0, -0.5) {};
		\node [style=none] (6) at (1, -0.5) {};
		\node [style=none] (7) at (-0.5, 0.25) {};
		\node [style=coreg] (8) at (-0.5, 0.75) {$k_1$};
		\node [style=none] (9) at (-1, 0.75) {};
		\node [style=none] (10) at (-1, 0.25) {};
		\node [style=none] (11) at (-1, -0.25) {};
		\node [style=none] (12) at (1, -1) {};
		\node [style=black] (13) at (0.5, -1) {};
		\node [style=none] (14) at (-1, -1) {};
		\node [style=white] (15) at (-0.5, -1) {};
	\end{pgfonlayer}
	\begin{pgfonlayer}{edgelayer}
		\draw (3.center) to (0);
		\draw [bend right, looseness=0.75] (0) to (1);
		\draw [bend left, looseness=0.75] (0) to (2);
		\draw [bend left, looseness=0.75] (5.center) to (4);
		\draw [bend left, looseness=0.75] (4) to (2);
		\draw (6.center) to (5.center);
		\draw [bend right, looseness=0.75] (1) to (8);
		\draw [bend left, looseness=0.75] (1) to (7.center);
		\draw (8) to (9.center);
		\draw (7.center) to (10.center);
		\draw (4) to (11.center);
		\draw (12.center) to (13);
		\draw (15) to (14.center);
	\end{pgfonlayer}
\end{tikzpicture}
            \end{equation*}
        \end{minipage}

Dually, circuits in $\FCop$ are ``reversed'' matrices: inputs on the right and outputs on the left. For instance $d\colon 4 \to 3$ again encodes $A$, but its semantics is $\dsem{d}= \{(y,x) \in \field^4\times \field^3 \mid y=Ax\}$.
\end{example}

\begin{figure*}
\begin{alignat*}{3}
&\;
\tikzset{x=1em, y=2.1ex}
\InputIfFileExists{add-associative.tikz}{}{\input{./tikz/add-associative.tikz}}
\tikzset{x=1em, y=1.5ex}
 \myeq{\circ-as}  
\tikzset{x=1em, y=2.1ex}
\InputIfFileExists{add-associative-1.tikz}{}{\input{./tikz/add-associative-1.tikz}}
\tikzset{x=1em, y=1.5ex}

\qquad \quad &&\;
\tikzset{x=1em, y=2.1ex}
\InputIfFileExists{add-commutative.tikz}{}{\input{./tikz/add-commutative.tikz}}
\tikzset{x=1em, y=1.5ex}
\myeq{\circ-co}

\tikzset{x=1em, y=2.1ex}
\InputIfFileExists{add.tikz}{}{\input{./tikz/add.tikz}}
\tikzset{x=1em, y=1.5ex}
\qquad \quad &&\;
\tikzset{x=1em, y=2.1ex}
\InputIfFileExists{add-unital-left.tikz}{}{\input{./tikz/add-unital-left.tikz}}
\tikzset{x=1em, y=1.5ex}
\myeq{\circ-unl}
\tikzset{x=1em, y=2.1ex}
\InputIfFileExists{id.tikz}{}{\input{./tikz/id.tikz}}
\tikzset{x=1em, y=1.5ex}

\\
&
\tikzset{x=1em, y=2.1ex}
\InputIfFileExists{copy-associative.tikz}{}{\input{./tikz/copy-associative.tikz}}
\tikzset{x=1em, y=1.5ex}
\myeq{\bullet-coas} 
\tikzset{x=1em, y=2.1ex}
\InputIfFileExists{copy-associative-1.tikz}{}{\input{./tikz/copy-associative-1.tikz}}
\tikzset{x=1em, y=1.5ex}
\qquad \quad &&
\tikzset{x=1em, y=2.1ex}
\InputIfFileExists{copy-commutative.tikz}{}{\input{./tikz/copy-commutative.tikz}}
\tikzset{x=1em, y=1.5ex}
\myeq{\bullet-coco} 
\tikzset{x=1em, y=2.1ex}
\InputIfFileExists{copy.tikz}{}{\input{./tikz/copy.tikz}}
\tikzset{x=1em, y=1.5ex}
\qquad \quad &&
\tikzset{x=1em, y=2.1ex}
\InputIfFileExists{copy-unital-left.tikz}{}{\input{./tikz/copy-unital-left.tikz}}
\tikzset{x=1em, y=1.5ex}
\myeq{\bullet-counl}
\tikzset{x=1em, y=2.1ex}
\InputIfFileExists{id.tikz}{}{\input{./tikz/id.tikz}}
\tikzset{x=1em, y=1.5ex}

\end{alignat*}
\hrule
\vspace{0.7pt}
\begin{equation*}

\tikzset{x=1em, y=2.1ex}
\InputIfFileExists{add-copy-bimonoid.tikz}{}{\input{./tikz/add-copy-bimonoid.tikz}}
\tikzset{x=1em, y=1.5ex}
\myeq{\circ\bullet-bi}
\tikzset{x=1em, y=2.1ex}
\InputIfFileExists{add-copy-bimonoid-1.tikz}{}{\input{./tikz/add-copy-bimonoid-1.tikz}}
\tikzset{x=1em, y=1.5ex}
 \qquad 
\tikzset{x=1em, y=2.1ex}
\InputIfFileExists{add-copy-bimonoid-unit.tikz}{}{\input{./tikz/add-copy-bimonoid-unit.tikz}}
\tikzset{x=1em, y=1.5ex}
\myeq{\circ\bullet-biun} 
\tikzset{x=1em, y=2.1ex}
\InputIfFileExists{add-bimonoid-unit-1.tikz}{}{\input{./tikz/add-bimonoid-unit-1.tikz}}
\tikzset{x=1em, y=1.5ex}
 \qquad 
\tikzset{x=1em, y=2.1ex}
\InputIfFileExists{add-copy-bimonoid-counit.tikz}{}{\input{./tikz/add-copy-bimonoid-counit.tikz}}
\tikzset{x=1em, y=1.5ex}
\myeq{\bullet\circ-biun} 
\tikzset{x=1em, y=2.1ex}
\InputIfFileExists{add-copy-bimonoid-counit-1.tikz}{}{\input{./tikz/add-copy-bimonoid-counit-1.tikz}}
\tikzset{x=1em, y=1.5ex}
\qquad
\tikzset{x=1em, y=2.1ex}
\InputIfFileExists{bone-white-black.tikz}{}{\input{./tikz/bone-white-black.tikz}}
\tikzset{x=1em, y=1.5ex}
\myeq{\circ\bullet-bo}
\tikzset{x=1em, y=2.1ex}
\InputIfFileExists{empty-diag.tikz}{}{\input{./tikz/empty-diag.tikz}}
\tikzset{x=1em, y=1.5ex}

\end{equation*}
\hrule
\vspace{2.7pt}
\begin{equation*}
   
\tikzset{x=1em, y=2.1ex}
\InputIfFileExists{reals-add.tikz}{}{\input{./tikz/reals-add.tikz}}
\tikzset{x=1em, y=1.5ex}
\;\myeq{add}\;
\tikzset{x=1em, y=2.1ex}
\InputIfFileExists{reals-add-1.tikz}{}{\input{./tikz/reals-add-1.tikz}}
\tikzset{x=1em, y=1.5ex}
 \qquad 
\tikzset{x=1em, y=2.1ex}
\InputIfFileExists{zero.tikz}{}{\input{./tikz/zero.tikz}}
\tikzset{x=1em, y=1.5ex}
\;\myeq{zer}\;
\tikzset{x=1em, y=2.1ex}
\InputIfFileExists{reals-zero.tikz}{}{\input{./tikz/reals-zero.tikz}}
\tikzset{x=1em, y=1.5ex}
 \qquad
   
\tikzset{x=1em, y=2.1ex}
\InputIfFileExists{reals-copy.tikz}{}{\input{./tikz/reals-copy.tikz}}
\tikzset{x=1em, y=1.5ex}
\;\myeq{dup}\; 
\tikzset{x=1em, y=2.1ex}
\InputIfFileExists{reals-copy-1.tikz}{}{\input{./tikz/reals-copy-1.tikz}}
\tikzset{x=1em, y=1.5ex}
 \qquad 
\tikzset{x=1em, y=2.1ex}
\InputIfFileExists{reals-delete.tikz}{}{\input{./tikz/reals-delete.tikz}}
\tikzset{x=1em, y=1.5ex}
\;\myeq{del}\;
\tikzset{x=1em, y=2.1ex}
\InputIfFileExists{delete.tikz}{}{\input{./tikz/delete.tikz}}
\tikzset{x=1em, y=1.5ex}

\end{equation*}
\begin{equation*}
   
\tikzset{x=1em, y=2.1ex}
\InputIfFileExists{reals-multiplication.tikz}{}{\input{./tikz/reals-multiplication.tikz}}
\tikzset{x=1em, y=1.5ex}
\;\myeq{\times}\;
\tikzset{x=1em, y=2.1ex}
\InputIfFileExists{reals-multiplication-1.tikz}{}{\input{./tikz/reals-multiplication-1.tikz}}
\tikzset{x=1em, y=1.5ex}
 \qquad    
\tikzset{x=1em, y=2.1ex}
\InputIfFileExists{reals-sum.tikz}{}{\input{./tikz/reals-sum.tikz}}
\tikzset{x=1em, y=1.5ex}
\;\myeq{+}\;
\tikzset{x=1em, y=2.1ex}
\InputIfFileExists{reals-sum-1.tikz}{}{\input{./tikz/reals-sum-1.tikz}}
\tikzset{x=1em, y=1.5ex}
\qquad   
\tikzset{x=1em, y=2.1ex}
\InputIfFileExists{reals-scalar-zero.tikz}{}{\input{./tikz/reals-scalar-zero.tikz}}
\tikzset{x=1em, y=1.5ex}
\;\myeq{0}\;
\tikzset{x=1em, y=2.1ex}
\InputIfFileExists{reals-scalar-zero-1.tikz}{}{\input{./tikz/reals-scalar-zero-1.tikz}}
\tikzset{x=1em, y=1.5ex}

\end{equation*}
 \hrule
 \vspace{2.7pt}
 \begin{equation*}
     
\tikzset{x=1em, y=2.1ex}
\InputIfFileExists{scalar-division.tikz}{}{\input{./tikz/scalar-division.tikz}}
\tikzset{x=1em, y=1.5ex}
\;\myeq{r-inv}\; 
\tikzset{x=1em, y=2.1ex}
\InputIfFileExists{id.tikz}{}{\input{./tikz/id.tikz}}
\tikzset{x=1em, y=1.5ex}
 \qquad\quad 
\tikzset{x=1em, y=2.1ex}
\InputIfFileExists{id.tikz}{}{\input{./tikz/id.tikz}}
\tikzset{x=1em, y=1.5ex}
\;\myeq{r-coinv}\;
\tikzset{x=1em, y=2.1ex}
\InputIfFileExists{scalar-co-division.tikz}{}{\input{./tikz/scalar-co-division.tikz}}
\tikzset{x=1em, y=1.5ex}
\quad \text{ for } k\neq 0, k\in \field
 \end{equation*}
 \hrule
 \vspace{0.7pt}
 \begin{equation*}
 
\tikzset{x=1em, y=2.1ex}
\InputIfFileExists{copy-Frobenius-left.tikz}{}{\input{./tikz/copy-Frobenius-left.tikz}}
\tikzset{x=1em, y=1.5ex}
\myeq{\bullet-fr1} 
\tikzset{x=1em, y=2.1ex}
\InputIfFileExists{copy-Frobenius.tikz}{}{\input{./tikz/copy-Frobenius.tikz}}
\tikzset{x=1em, y=1.5ex}
\myeq{\bullet-fr2} 
\tikzset{x=1em, y=2.1ex}
\InputIfFileExists{copy-Frobenius-right.tikz}{}{\input{./tikz/copy-Frobenius-right.tikz}}
\tikzset{x=1em, y=1.5ex}
 \qquad 
\tikzset{x=1em, y=2.1ex}
\InputIfFileExists{copy-special.tikz}{}{\input{./tikz/copy-special.tikz}}
\tikzset{x=1em, y=1.5ex}
\myeq{\bullet-sp}
\tikzset{x=1em, y=2.1ex}
\InputIfFileExists{id.tikz}{}{\input{./tikz/id.tikz}}
\tikzset{x=1em, y=1.5ex}
\qquad 
\tikzset{x=1em, y=2.1ex}
\InputIfFileExists{bone-black.tikz}{}{\input{./tikz/bone-black.tikz}}
\tikzset{x=1em, y=1.5ex}
\myeq{\bullet-bo}
\tikzset{x=1em, y=2.1ex}
\InputIfFileExists{empty-diag.tikz}{}{\input{./tikz/empty-diag.tikz}}
\tikzset{x=1em, y=1.5ex}

 \end{equation*}
\vspace{0.7pt}
\hrule
\vspace{1.7pt}
\begin{equation*}

\tikzset{x=1em, y=2.1ex}
\InputIfFileExists{add-Frobenius-left.tikz}{}{\input{./tikz/add-Frobenius-left.tikz}}
\tikzset{x=1em, y=1.5ex}
\myeq{\circ-fr1} 
\tikzset{x=1em, y=2.1ex}
\InputIfFileExists{add-Frobenius.tikz}{}{\input{./tikz/add-Frobenius.tikz}}
\tikzset{x=1em, y=1.5ex}
\myeq{\circ-fr2} 
\tikzset{x=1em, y=2.1ex}
\InputIfFileExists{add-Frobenius-right.tikz}{}{\input{./tikz/add-Frobenius-right.tikz}}
\tikzset{x=1em, y=1.5ex}

\qquad 
\tikzset{x=1em, y=2.1ex}
\InputIfFileExists{add-special.tikz}{}{\input{./tikz/add-special.tikz}}
\tikzset{x=1em, y=1.5ex}
\myeq{\circ-sp}
\tikzset{x=1em, y=2.1ex}
\InputIfFileExists{id.tikz}{}{\input{./tikz/id.tikz}}
\tikzset{x=1em, y=1.5ex}
\qquad
\tikzset{x=1em, y=2.1ex}
\InputIfFileExists{bone-white.tikz}{}{\input{./tikz/bone-white.tikz}}
\tikzset{x=1em, y=1.5ex}
\myeq{\circ-bo}
\tikzset{x=1em, y=2.1ex}
\InputIfFileExists{empty-diag.tikz}{}{\input{./tikz/empty-diag.tikz}}
\tikzset{x=1em, y=1.5ex}

\end{equation*}
\vspace{0.7pt}
\hrule
\vspace{0.7pt}
\begin{equation*}
    
\tikzset{x=1em, y=2.1ex}
\InputIfFileExists{ccwhite.tikz}{}{\input{./tikz/ccwhite.tikz}}
\tikzset{x=1em, y=1.5ex}
 \myeq{cc-1} 
\tikzset{x=1em, y=2.1ex}
\InputIfFileExists{cc-black.tikz}{}{\input{./tikz/cc-black.tikz}}
\tikzset{x=1em, y=1.5ex}
 \qquad \quad 
\tikzset{x=1em, y=2.1ex}
\InputIfFileExists{ccwhiteop.tikz}{}{\input{./tikz/ccwhiteop.tikz}}
\tikzset{x=1em, y=1.5ex}
 \myeq{cc-2} 
\tikzset{x=1em, y=2.1ex}
\InputIfFileExists{cc-blackop.tikz}{}{\input{./tikz/cc-blackop.tikz}}
\tikzset{x=1em, y=1.5ex}
 
    \qquad\qquad\qquad\WcounitT \mysubeq{\circ\bullet-inc} \BcounitT
\end{equation*}
\caption{Axioms of Interacting Hopf Algebras ($\IH{\field}$).\label{fig:ih}}
\end{figure*}
\begin{figure}
\input{figures/polyc/ihp-axioms}
\caption{Axioms of $\IHP{\field}$}\label{fig:axiom:ihp}
\end{figure}

\section{The Theory of Polyhedral cones}\label{sec:polyc}
Hereafter, we assume $\field$ to be an \emph{ordered field}, namely a field equipped with a total order $\leq$ such that for all $i,j,k \in \field$: (a) if $i \leq j$, then $i + k \leq j + k$; (b) if $0 \leq i$ and $0 \leq j$, then $0 \leq ij$.

We extend the signature in~\eqref{eq:SFcalculusSyntax1} and~\eqref{eq:SFcalculusSyntax2} with the following generator
\begin{equation}\label{eq:>}
\greq
\end{equation}
and denote the resulting free prop $\CDP$.
The morphism $\dsempO \colon \CDP \rightarrow \RelX{\mathsf k}$ behaves as $\eqref{eq:semIH}$ for the generators in~\eqref{eq:SFcalculusSyntax1} and~\eqref{eq:SFcalculusSyntax2}, whereas for $\greq$, it is defined as hinted by our syntax: $\dsemp{\greq}=\{\, (x, y) \mid x, y \in \mathsf{k}, x \geq y \,\}$.

\begin{example}
Let \input{figures/polyc/pnf-def1} be a diagram in $\FC$ denoting some matrix $A$ (see Example \ref{ex:matrix}). Consider the following circuit in $\CDP$
\begin{equation}\label{eq:polyhedralcnf}
\input{figures/poly/Ihpgeneric}
\end{equation}
It easy to check that its semantics is the relation $C= \{ (x,y) \in \mathsf{k}^{n}\times \mathsf{k}^{m} \mid A\begin{pmatrix*} x \\ y \end{pmatrix*} \geq 0 \}$. Thus~\eqref{eq:polyhedralcnf} denotes a \emph{polyhedral cone}, i.e., the set of solutions of some system of linear inequations.
\end{example}

We denote by $\IHP{\mathsf k}$ the prop generated by the theory consisting of this signature (namely,~\eqref{eq:SFcalculusSyntax1}, \eqref{eq:SFcalculusSyntax2} and~\eqref{eq:>}), the axioms of $\IH{\mathsf k}$ and the axioms in Figure \ref{fig:axiom:ihp}, where $\leqd$ is just $\greq^{\op}$ (see Example~\ref{ex:cc}). The first two rows of axioms describe the interactions of $\greq$ with the generators in~\eqref{eq:SFcalculusSyntax1}.
The third row asserts that $\geq$ is antisymmetric and satisfies an appropriate spider condition.
In the last axiom, the right-to-left inclusion states that for all
$k,l\in\field$, there exists an upper bound $u$, i.e.\ $u\geq k$, $u\geq l$.
The left-to-right inclusion is redundant.

The axioms are perhaps surprising: e.g.\ reflexivity and transitivity are not included. As a taster for working with the diagrammatic calculus, we prove these properties below.
\begin{remark}
We will often use the alternative antipode notation $\tantipode$ for the scalar $\tscalar{-1}$.
\end{remark}
\input{figures/polyc/tran-proof}
The derivation above proves transitivity. The following derivation 
\input{figures/polyc/lemma-reflx-proof}
is used below to show that $\geq$ is reflexive:
\input{figures/polyc/reflx-proof}
\begin{remark}
When we annotate equalities with $\IH{}$, we are making use of multiple unmentioned derived laws presented in related works. These can be seen to hold also by appealing to Theorem~\ref{co:complih}.
\end{remark}

Routine computations confirm that all the axioms are sound.
To prove completeness, we give diagrammatic proofs of several well-known results.

\begin{proposition}[Fourier-Motzkin elimination]\label{prop:FMElimitation}
For each arrow $A \colon n \rightarrow m$ of $\FC$, there exists an arrow $B \colon n-1 \rightarrow l$ of $\FC$ such that
\input{figures/polyc/fm1}
\end{proposition}
The proof, in Appendix \ref{app:FMelimination}, 
mimics the Fourier-Motzkin elimination, an algorithm for eliminating variables from a system of linear inequations (i.e. projecting a polyhedral cone).

The next step is a normal form theorem. A circuit $c \colon n \rightarrow m$ of $\CDP$ is said to be in \emph{polyhedral normal form} if there is an arrow \input{figures/polyc/pnf-def1} of $\FC$, such that $c=\eqref{eq:polyhedralcnf}$.
\begin{theorem}[First Normal Form]
    \label{th:polynf}
    For each arrow $c \colon n \rightarrow m$ of $\CDP$, there is another arrow $d \colon n \rightarrow m$ of $\CDP$ in polyhedral normal form such that $c \myeq{\IHP{}} d$.
\end{theorem}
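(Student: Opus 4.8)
The plan is to prove the statement by induction on the structure of $c$. Let $\mathcal{N}$ be the class of circuits of $\CDP$ that are $\IHP{}$-equal to a circuit in polyhedral normal form, i.e.\ of the shape \eqref{eq:polyhedralcnf} for some matrix in $\FC$. Since every arrow of $\CDP$ is obtained from the generators by the operations $\poi$ and $\oplus$, it suffices to show that $\mathcal{N}$ contains every generator and is closed under $\poi$ and $\oplus$.

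For the base case, the new generator $\geq$ is already in polyhedral normal form: it is exactly \eqref{eq:polyhedralcnf} for the matrix $(1 \ {-}1)$ with $n=m=1$, since $\{(x,y)\mid x-y\geq 0\}$ is its semantics. Every remaining generator is inherited from $\IH{\field}$ and denotes a linear relation $\{(x,y)\mid M\binom{x}{y}=0\}$. By completeness of $\IH{}$ (Theorem~\ref{co:complih}) each such generator is $\IH{}$-equal, hence $\IHP{}$-equal, to the circuit obtained by feeding $(x,y)$ into the matrix $\vec{M}$ (bending the right boundary around by a cup) and forcing the $p$ outputs to be $0$. It then remains to replace each equality constraint by a pair of inequality constraints: a wire $w$ is forced to $0$ precisely when $w\geq 0$ and $-w\geq 0$, a rewriting justified in $\IHP{}$ by the antisymmetry axiom together with the reflexivity of $\geq$ proved above. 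Since negating a wire is the scalar $-1$, itself in $\FC$, this produces the normal form \eqref{eq:polyhedralcnf} for the matrix $\binom{M}{-M}$.

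For closure under $\oplus$, if $c_1,c_2$ are in normal form with matrices $A_1,A_2$, then $c_1\oplus c_2$ denotes the cone cut out by the block-diagonal matrix $\begin{pmatrix} A_1 & 0\\ 0 & A_2\end{pmatrix}$; reorganising the boundary wires with the symmetries of the prop and using that $\FC$ is closed under $\oplus$, one rewrites $c_1\oplus c_2$ to \eqref{eq:polyhedralcnf} for this matrix, the bookkeeping being routine with the bimonoid and Frobenius laws of $\IH{}$. Closure under $\poi$ is the crux. Given normal forms $c_1\colon n\to m$ and $c_2\colon m\to o$ with matrices $A_1,A_2$, the composite identifies the $m$ middle wires $y$, so it denotes $\{(x,z)\mid \exists y.\, A_1\binom{x}{y}\geq 0 \wedge A_2\binom{y}{z}\geq 0\}$. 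First I would use copying to route the shared $y$ into both $A_1$ and $A_2$ and the compact structure to route $x,z$ to the boundary, rewriting the composite to a single matrix $C\in\FC$ acting on $(x,z,y)$ tested against $0$, with the $m$ wires carrying $y$ left internal (hence existentially quantified); this stays within $\FC$ because matrices are closed under the copy, tensor and permutation operations involved. Then I would eliminate the internal wires one at a time, each step being one application of Fourier--Motzkin elimination (Proposition~\ref{prop:FMElimitation}), which removes a single internal input of $C$ and replaces $C$ by a smaller matrix while preserving the denoted cone. After $m$ applications the result is a genuine polyhedral normal form for $c_1\poi c_2$.

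The main obstacle is exactly this sequential case: reshaping the composite into the precise form to which Proposition~\ref{prop:FMElimitation} applies—grouping the wires to be eliminated, keeping $C$ inside $\FC$, and correctly threading the boundary through the cups and caps of the compact structure—and then iterating the elimination $m$ times while verifying that each intermediate circuit remains of the required shape. The base-case conversion of equalities into paired inequalities is conceptually simple but also warrants care, since it must be carried out entirely within the axioms of $\IHP{}$ rather than by a semantic appeal.
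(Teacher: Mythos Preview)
Your proposal is correct and follows essentially the same inductive structure as the paper's proof: base cases for the generators (with linear relations converted to pairs of inequalities via antisymmetry, which is precisely the derived law (L2) in the paper), closure under $\oplus$ via block-diagonal matrices, and closure under $\poi$ by assembling a single $\FC$ matrix on all wires and then iterating Fourier--Motzkin elimination (Proposition~\ref{prop:FMElimitation}) to remove the internal ones. The one minor imprecision is your claim that $\greq$ is ``exactly'' \eqref{eq:polyhedralcnf}: it is only $\IHP{}$-equal to that shape, and the paper supplies the short explicit derivation using the compact closed structure and the derived law (L1).
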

The proof is by induction on the structure of $\CDP$.
The only challenging case is sequential composition, which uses the Fourier-Moztkin elimination. Details are in Appendix \ref{app:polynf}.

Diagrams in $\CDP$ enjoy a second normal form: an arrow $c \colon n \rightarrow m$ of $\CDP$ is said to be in \textit{finitely generated normal form} if there is an arrow \input{figures/polyc/fnf-def1} of $\FCop$, such that
\[ c =  \input{figures/polyc/fnf-def2} \]
As recalled in Example \ref{ex:matrix}, the semantics of the circuit in $\FCop$ is $\dsem{\input{figures/polyc/fnf-def1}}=\{(u,z)\in \field^{n+m}\times \field^{p} \mid u=Vz\}$ for some $(n+m)\times p$ matrix $V$. Then, it is easy to check that $\dsemp{c} = \{ (x,y) \in \mathsf{k}^{n}\times \mathsf{k}^{m} \mid \exists z\in \field^{p} \text{ s.t. } \begin{pmatrix*} x \\ y \end{pmatrix*} = Vz, z \geq 0 \}$. The matrix $V$ can be regarded as a set of column vectors $\{ v_1, \dots, v_p \}$ and $\dsemp{c}$ as the \emph{conic combination} of those vectors, defined as $\mathsf{cone}(V) = \{ z_1 v_1 + \ldots + z_p v_p \mid z_i \in \mathsf{k}, \,z_i\geq 0\}$. Sets of vectors generated in this way are known as \emph{finitely generated cones}, which justifies the name of the normal form.

To prove the existence of this normal form, we introduce the polar operator, an important construction in convex analysis which, in our approach, has a simple inductive definition.

\begin{definition}
    The \textit{polar} operator $\cdot^{\circ} \colon \CDP \rightarrow \CDP$ is the functor inductively defined as:
\[
\begin{array}{ccccc}
\Bcomult \ \longmapsto \ \Wcomult
&
\Bcounit \ \longmapsto \ \Wcounit
&
\tscalar{k} \ \longmapsto \ \tcoscalar{k}
&
\Wmult \ \longmapsto \Bmult
&
\Wunit \ \longmapsto \ \Bunit
\\
\Bmult \ \longmapsto \Wmult
&
\Bunit \ \longmapsto \ \Wunit
&
\tcoscalar{k} \ \longmapsto \ \tscalar{k}
&
\Wcomult \ \longmapsto \ \Bcomult
&
\Wcounit \ \longmapsto \ \Bcounit
\end{array}
\]
\[
\begin{array}{ccc}
\input{figures/polyc/polarpolar-geq} & (c \scolon d)^{\circ} = c^\circ \scolon d^\circ  & (c \oplus d)^{\circ} = c^\circ \oplus d^\circ
\end{array}
\]
\end{definition}
The polar operator enjoys the following useful properties.
\begin{proposition} \label{th:polarpolar}
For all arrows $c,d \colon n \rightarrow m$ in $\CDP$, it holds that
\begin{enumerate}
    \item if $c \stackrel{\IHP{}}{\subseteq} d$ then $(d)^\circ \stackrel{\IHP{}}{\subseteq} (c)^\circ$;
    \item $(c^\circ)^\circ \myeq{\IHP{}} c$;
    \item if $c$ is an arrow of $\FC$, then $c^\circ$ is an arrow of $\FCop$.
\end{enumerate}
\end{proposition}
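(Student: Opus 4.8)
The plan rests on the fact that $\cdot^{\circ}$ is a prop morphism: it is fixed on generators and defined on composites by $(c;d)^{\circ} = c^{\circ};d^{\circ}$ and $(c\oplus d)^{\circ} = c^{\circ}\oplus d^{\circ}$. Consequently all three items reduce to statements about generators or about the base cases of a derivation. Item 3 is the most direct: inspecting the table defining $\cdot^{\circ}$, each of the five generators of $\FC$ (those in~\eqref{eq:SFcalculusSyntax1}) is sent to one of the generators of $\FCop$ (those in~\eqref{eq:SFcalculusSyntax2})---e.g.\ $\Bcomult \mapsto \Wcomult$ and $\Wmult \mapsto \Bmult$---and since $\cdot^{\circ}$ preserves $;$ and $\oplus$, a straightforward structural induction shows that every composite of $\FC$-generators lands in $\FCop$. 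The generator $\greq$ never intervenes here, as it does not belong to $\FC$.

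For item 1 I would induct on the derivation of $c \stackrel{\IHP{}}{\subseteq} d$ in the precongruence generated by the axioms. Reflexivity is trivial; transitivity follows by composing the two reversed inclusions supplied by the induction hypothesis; and the closure rules for $;$ and $\oplus$ follow from functoriality of $\cdot^{\circ}$ together with monotonicity of $;$ and $\oplus$. It therefore suffices to treat the base case, namely to show for each generating axiom that its image under $\cdot^{\circ}$, with the inequality reversed, is derivable. On the $\IH{}$-fragment the operator $\cdot^{\circ}$ acts precisely as the black/white colour swap, under which the theory $\IH{}$ is symmetric, so every $\IH{}$-equation is sent to an $\IH{}$-equation and the single inclusion $\Wcounit \mysubeq{} \Bcounit$ is sent to itself. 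The remaining obligation is to check, one at a time, the polar-reverse of each axiom of Figure~\ref{fig:axiom:ihp}; after unfolding the diagram $\greq^{\circ}$, each reduces to an inclusion already available from P1--P6, antisym, spider and direction.

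For item 2, since $\cdot^{\circ}$ is a functor so is $(\cdot^{\circ})^{\circ}$, and it is enough to verify that it agrees with the identity on generators up to $\myeq{\IHP{}}$. For the ten black/white generators the double polar returns the generator verbatim (e.g.\ $(\Bcomult^{\circ})^{\circ} = \Wcomult^{\circ} = \Bcomult$), so the only genuine content is $((\greq)^{\circ})^{\circ} \myeq{\IHP{}} \greq$. I would unfold $\greq^{\circ}$ (copy, with one branch asserting non-negativity via $\greq;\Wcounit$), apply $\cdot^{\circ}$ once more, and use the counit law of copy to simplify the resulting diagram to $\Wcomult;(\greq\oplus\id);(\Wcounit\oplus\id)$. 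Rewriting the leftmost part by the already-established lemma~\eqref{eq:lemma-reflx-proof}, then cancelling a white comultiplication against its counit, turns this into $\greq;\greq$, which collapses to $\greq$ by the transitivity law derived earlier in the section.

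The crux is the base-case bookkeeping of item 1. Because $\greq^{\circ}$ is itself a composite diagram rather than a single box, computing and then simplifying the polar of the spider and antisymmetry axioms is where the real diagrammatic effort lies; by contrast, once those checks are in place, the involution of item 2 is the short computation sketched above and item 3 is routine.
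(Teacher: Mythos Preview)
Your proposal is correct and follows the same approach as the paper: structural induction for items~2 and~3, and induction on the derivation (reducing to an axiom-by-axiom check) for item~1, with the spider axiom identified as the substantive case. The only minor divergence is in the computation of $((\greq)^\circ)^\circ$: you route through lemma~\eqref{eq:lemma-reflx-proof} and transitivity to reach $\greq$, while the paper performs a direct $\IHP{}$ rewrite to $\greq;\Wcomult;(\Wcounit\oplus\id)$ before collapsing with the white counit law---both arrive at the same intermediate form $\Wcomult;(\greq\oplus\id);(\Wcounit\oplus\id)$ and are equivalent up to reordering.
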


\begin{proposition}
    \label{th:polarfingen}
    For each arrow $c \colon n \rightarrow m$ of $\CDP$ in polyhedral normal form there is an arrow $d \colon n \rightarrow m$ of $\CDP$ in finitely generated normal form, such that $(c)^\circ \myeq{\IHP{}} d$.
\end{proposition}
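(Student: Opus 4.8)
The plan is to compute $(c)^{\circ}$ directly by pushing the polar functor through the polyhedral normal form $c=\eqref{eq:polyhedralcnf}$ and then massaging the result with the axioms of $\IHP{}$ until it matches the finitely generated normal form. Recall that $c$ decomposes into three pieces: the matrix diagram $\overrightarrow{A}$ of $\FC$ (into whose upper $n$ ports the $x$-wires pass straight); the bending gadget on the lower $m$ wires, which is a $\Bunit$ followed by a $\Bcomult$ (i.e.\ a cup); and the predicate $\greq^{p}\poi\Wcounit^{p}$ asserting $A\binom{x}{y}\geq 0$. Since $\cdot^{\circ}$ is a strict monoidal functor with $(c\poi d)^{\circ}=c^{\circ}\poi d^{\circ}$, I would apply it componentwise: the matrix becomes $\overrightarrow{A}^{\circ}$, the cup becomes the ``anti-cup'' $\Wunit\poi\Wcomult$, the $n$ upper wires are unaffected since $\id^{\circ}=\id$, and the predicate becomes $(\greq^{\circ})^{p}\poi\Bcounit^{p}$.

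Next I would simplify each polarised piece. First, by Proposition~\ref{th:polarpolar}(3), $\overrightarrow{A}^{\circ}$ is again an arrow of $\FCop$, hence a reversed matrix; this supplies the generator matrix of the target normal form. Second, the predicate collapses: by its defining diagram $\greq^{\circ}$ copies its input and constrains one branch to be nonnegative, so post-composing with $\Bcounit$ deletes the free branch, and the copy--counit law of $\IH{}$ gives $\greq^{\circ}\poi\Bcounit\myeq{\IH{}}\greq\poi\Wcounit$; applying this in each of the $p$ coordinates turns $(\greq^{\circ})^{p}\poi\Bcounit^{p}$ back into the familiar predicate $\greq^{p}\poi\Wcounit^{p}$. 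Third, the anti-cup $\Wunit\poi\Wcomult$ is provably equal in $\IH{}$ to an ordinary cup $\Bunit\poi\Bcomult$ with an antipode $\tscalar{-1}$ inserted on one of its two legs; the antipode can then be slid into $\overrightarrow{A}^{\circ}$, negating the corresponding rows and producing a modified reversed matrix $\overleftarrow{V}$ still living in $\FCop$, since $\FCop$ is closed under composition with scalars.

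Finally, reassembling the simplified pieces leaves exactly a reversed matrix $\overleftarrow{V}$ of $\FCop$, the $m$ lower wires bent by ordinary cups, and the predicate $\greq^{p}\poi\Wcounit^{p}$ on the remaining $p$ wires---that is, a circuit $d$ in finitely generated normal form with $(c)^{\circ}\myeq{\IHP{}}d$, as required. The computation is routine once the pieces are identified; I expect the only real friction to be the diagrammatic bookkeeping in this last step, namely tracking precisely which wires are bent and sliding the antipodes through $\overrightarrow{A}^{\circ}$, so that the tangle of anti-cups, matrix box and $\greq^{\circ}$-gadgets is genuinely rewritten into the rigid shape of the finitely generated normal form, with the signs chosen consistently so that $\overleftarrow{V}$ indeed generates the polar cone of $\dsemp{c}$.
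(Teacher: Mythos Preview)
Your proposal is correct and follows essentially the same route as the paper's proof: apply $\cdot^{\circ}$ componentwise to the polyhedral normal form, simplify $\greq^{\circ}\poi\Bcounit$ to $\greq\poi\Wcounit$ via the copy--delete law, rewrite the white cup as a black cup with an antipode on one leg, and absorb the antipode into $\overrightarrow{A}^{\circ}$ to obtain a reversed matrix in $\FCop$. The paper presents this as a single diagrammatic rewrite labelled $\myeq{\IH{}}$, but the content is exactly what you describe.
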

\begin{proof} 
    \input{figures/polyc/polarfingen-proof}
\end{proof}

\begin{theorem}[Second Normal Form]
    \label{co:fingennf}
    For each arrow $c \colon n \rightarrow m$ of $\CDP$, there is an arrow $d \colon n \rightarrow m$ of $\CDP$ in finitely generated normal form such that $c \myeq{\IHP{}} d$.
\end{theorem}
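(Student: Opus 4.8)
The plan is to exploit the involutivity of the polar operator in order to reduce the general case to Proposition~\ref{th:polarfingen}, which already manufactures a finitely generated normal form out of any arrow in polyhedral normal form. The subtle point is that Proposition~\ref{th:polarfingen} turns polyhedral into finitely generated only \emph{after} taking a polar, so the reduction must be run "in reverse": I would normalise the polar of $c$ rather than $c$ itself, and then polar back.

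Concretely, let $c \colon n \to m$ be an arbitrary arrow of $\CDP$. First I would form its polar $c^\circ$ and apply the First Normal Form theorem (Theorem~\ref{th:polynf}) to it, obtaining an arrow $e \colon n \to m$ of $\CDP$ in polyhedral normal form with $c^\circ \myeq{\IHP{}} e$ (note that $\cdot^\circ$ preserves arity and coarity, as is clear from its action on the generators, so the types line up). Next I would take the polar of both sides of this equation. By Proposition~\ref{th:polarpolar}(1) the operator $\cdot^\circ$ is antitone with respect to $\stackrel{\IHP{}}{\subseteq}$; since $\myeq{\IHP{}}$ is just mutual inclusion, antitonicity forces $\cdot^\circ$ to send equivalent circuits to equivalent circuits, whence $(c^\circ)^\circ \myeq{\IHP{}} e^\circ$. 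Combining this with the involutivity $(c^\circ)^\circ \myeq{\IHP{}} c$ supplied by Proposition~\ref{th:polarpolar}(2) yields $c \myeq{\IHP{}} e^\circ$.

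Finally, because $e$ is in polyhedral normal form, Proposition~\ref{th:polarfingen} provides an arrow $d \colon n \to m$ of $\CDP$ in finitely generated normal form with $e^\circ \myeq{\IHP{}} d$. Transitivity of $\myeq{\IHP{}}$ then gives $c \myeq{\IHP{}} d$, as required. I do not expect any genuine obstacle here: all the real work — the diagrammatic Fourier--Motzkin elimination, the inductive behaviour of the polar, and the double-polar identity — has been discharged in the preceding results, so the argument is a pure composition of quoted facts with no new diagram manipulation. The one place where a careless attempt could go wrong is the choice to normalise $c^\circ$ instead of $c$; applying Theorem~\ref{th:polynf} directly to $c$ and then Proposition~\ref{th:polarfingen} would compute a finitely generated form for $c^\circ$, not for $c$, and one genuinely needs the involution of Proposition~\ref{th:polarpolar}(2) to recover $c$ itself.
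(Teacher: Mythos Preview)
Your proposal is correct and follows essentially the same approach as the paper's proof: apply the First Normal Form to $c^\circ$, use Proposition~\ref{th:polarpolar}(1) and~(2) to transfer the resulting equality back to $c$, and then invoke Proposition~\ref{th:polarfingen}. Your commentary on why one must normalise $c^\circ$ rather than $c$ is accurate and matches exactly the structure of the paper's argument.
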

\begin{proof}
    By Theorem \ref{th:polynf}, there exists an arrow $p \colon n \rightarrow m$ of $\CDP$ in polyhedral normal form, such that $c^\circ \myeq{\IHP{}} p$. By Proposition \ref{th:polarpolar}.1, $p^\circ \myeq{\IHP{}} (c^\circ)^\circ$ and, by Proposition \ref{th:polarpolar}.2 $, (c^\circ)^\circ \myeq{\IHP{}} c$, thus $p^\circ \myeq{\IHP{}} c$. Since $p$ is in polyhedral normal form, by Proposition \ref{th:polarfingen}, there exists a circuit $d$ in finitely generated normal form such that $d \myeq{\IHP{}} p^\circ \myeq{\IHP{}} c$.
\end{proof}
An immediate consequence of the two normal form theorems is the well-known Weyl-Minkowski theorem,
which states that every polyhedral cone is finitely generated and, vice-versa, every finitely generated cone is polyhedral.
It is worth emphasising that neither the polyhedral nor the finitely generated normal form are unique: different matrices may give rise to the same cone. However, with the finitely generated normal form, proving completeness requires only a few more lemmas, which are given in Appendix \ref{app:completenessproofs}.

\begin{theorem}[Completeness]\label{th:compl2}
    For all circuits $c, d \in \CDP$, if $\dsemp{c} \subseteq \dsemp{d}$ then $c \mysubeq{\IHP{}} d$.
\end{theorem}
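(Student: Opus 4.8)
The plan is to combine the two normal form theorems with the completeness of $\IH{}$ for the linear part, reducing a semantic inclusion of polyhedral cones to finitely many scalar nonnegativity facts. First I would use soundness together with Theorem~\ref{co:fingennf} and Theorem~\ref{th:polynf} to replace $c$ by a provably equal circuit $c'$ in finitely generated normal form and $d$ by a provably equal circuit $d'$ in polyhedral normal form. Since $c \myeq{\IHP{}} c'$ and $d \myeq{\IHP{}} d'$ give $\dsemp{c'} = \dsemp{c} \subseteq \dsemp{d} = \dsemp{d'}$, and since $c' \mysubeq{\IHP{}} d'$ would yield $c \mysubeq{\IHP{}} d$ by transitivity, the problem reduces to the case where $c' = \mathsf{cone}(V)$ for some matrix $V$ with columns $v_1,\dots,v_p$, and $d' = \{\, w \mid Aw \geq 0 \,\}$ for some matrix $A$.

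Second, at the semantic level the inclusion $\mathsf{cone}(V) \subseteq \{\, w \mid Aw \geq 0 \,\}$ holds if and only if $A v_i \geq 0$ for every generator $v_i$: each generator must lie in the target, and conversely the half-space constraints $Aw \geq 0$ are preserved under nonnegative linear combinations. The core of the argument is to turn each of these finitely many arithmetic facts into a diagrammatic inclusion. I would show that the finitely generated normal form $c'$ decomposes, up to $\IHP{}$, as a superposition of its single-generator rays $\mathsf{cone}(v_i) = \{\, \lambda v_i \mid \lambda \geq 0 \,\}$, realized through the additive-monoid and copy-comonoid structure. Because $d'$ is a cone it is closed under addition and under nonnegative scaling, and these closures are reflected diagrammatically by its spider and bimonoid structure; monotonicity of $\poi$ and $\oplus$ in the ordered prop then lets me conclude $c' \mysubeq{\IHP{}} d'$ from the per-generator inclusions $\mathsf{cone}(v_i) \mysubeq{\IHP{}} d'$.

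Third, for a fixed generator I would prove $\mathsf{cone}(v_i) \mysubeq{\IHP{}} d'$ as follows. Composing $v_i$, scaled by a free nonnegative parameter, with the matrix block for $A$ is a purely linear computation, so by Theorem~\ref{co:complih} (equivalently Theorem~\ref{pr:isoihlrel}) the result is provably equal in $\IH{}$ to the constant $\lambda b_i$, where $b_i = A v_i$. This reduces the claim to showing that each nonnegative constant $b_i \geq 0$ is provably accepted by the $\geq$-predicate, that is, that the state $b_i$ is $\mysubeq{\IHP{}}$ the ``$\geq 0$'' constraint. This last, genuinely order-theoretic base fact is where axioms P4 and P5 (positive scalars commuting past $\geq$) do the work. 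I expect this interface between the $\IH{}$-level matrix reasoning and the order-level P-axioms---correctly realizing the generator decomposition of the finitely generated normal form and discharging the scalar nonnegativity facts---to be the main obstacle. The convex-geometric content, namely the Fourier--Motzkin elimination of Proposition~\ref{prop:FMElimitation} and the polar duality of Propositions~\ref{th:polarpolar} and~\ref{th:polarfingen}, has already been absorbed into the normal form theorems, so no further geometric input is needed at this stage.
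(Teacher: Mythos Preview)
Your approach is plausible and can be made to work, but it differs from the paper's proof in a key organizational choice. The paper puts \emph{both} $c$ and $d$ into finitely generated normal form (via Theorem~\ref{co:fingennf}), obtaining matrices $V_c, V_d$ with $\mathsf{cone}(V_c) \subseteq \mathsf{cone}(V_d)$. The core step (Lemma~\ref{lm:cones}) then extracts a single non-negative matrix $C$ with $V_c = V_d C$, and the diagrammatic inclusion follows in one stroke by sliding the $\leq$-generator past $C$ (Lemmas~\ref{lm:movegeq} and~\ref{lm:movegeq2}). That sliding lemma is exactly your ``P4 and P5 do the work'' observation, but packaged once for an arbitrary non-negative matrix rather than entry-by-entry.

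Your route---$c$ in FG form, $d$ in polyhedral form---trades this single factorization for a doubly-indexed family of scalar checks: one per generator $v_i$ and per constraint row $a_j$, each producing a non-negative entry $(Av_i)_j$. To glue these back together you rely on the join/meet structure: that $\mathsf{cone}(V)$ decomposes as $\bigsqcup_i \mathsf{cone}(v_i)$, that $d'$ decomposes as $\bigcap_j H_j$, and that $d' \sqcup d' \subseteq d'$. These facts hold because $\IHP{\field}$ is an Abelian bicategory (the lax monoid law of Lemma~\ref{lemma:laxcomonoid}), but the paper establishes that structure separately in an appendix and does \emph{not} invoke it for completeness. Moreover, once you unpack your single-generator-against-single-constraint step, what you are doing is exactly the paper's sliding argument applied to the $1{\times}1$ non-negative matrix $(Av_i)_j$. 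So your plan is sound, but the paper's choice to keep both sides in the \emph{same} normal form buys a shorter, lattice-free proof: one non-negative factorization instead of $p\cdot q$ scalar facts reassembled via join and meet.
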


We now come to the problem of expressivity: what is the image of $\CDP$ through $\dsemO$? It turns out that $\CDP$ denotes exactly the arrows of $\propCat{PC}_{\mathsf k}$ (see Definition \ref{def:pc}).


%
\begin{proposition}[Expressivity] \label{th:ihpiso}
    For each arrow $C \colon n \rightarrow m$ in $\propCat{PC}_{\mathsf k}$ there exists a circuit $c \colon n \rightarrow m$ of $\CDP$, such that $C = \dsemp{c}$. Vice-versa,
    for each circuit $c \colon n \rightarrow m$ of $\CDP$ there exists an arrow $C \colon n \rightarrow m$ of $\propCat{PC}_{\mathsf k}$, such that $\dsemp{c} = C$.

\end{proposition}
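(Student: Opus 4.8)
The plan is to reduce both directions to work that is already done: the First Normal Form theorem (Theorem~\ref{th:polynf}), the soundness of the axioms of $\IHP{}$ (recorded above as routine), and Lafont's characterisation of $\FC$ as the free prop of $\field$-matrices (Example~\ref{ex:matrix}). Neither direction should then require genuinely new input beyond the semantic computation already carried out for the circuit in~\eqref{eq:polyhedralcnf}; in particular, no fresh convex-geometric argument is needed, since Fourier--Motzkin and Weyl--Minkowski have been absorbed into the normal form machinery.

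For the first direction (every polyhedral cone is denotable), I would take an arbitrary $C \colon n \to m$ in $\propCat{PC}_{\mathsf k}$, so that $C = \{\,(x,y) \mid A\begin{pmatrix} x \\ y \end{pmatrix} \geq 0\,\}$ for some $p \times (n+m)$ matrix $A$. By Example~\ref{ex:matrix} there is a diagram $\overrightarrow{A} \colon n+m \to p$ of $\FC$ denoting $A$. Substituting this diagram into the template~\eqref{eq:polyhedralcnf}, i.e.\ routing the last $m$ of its $n+m$ input wires to the right boundary via the compact-closed cup structure and postcomposing the $p$ output wires with $\greq$ and a co-zero, yields a circuit $c \colon n \to m$ in polyhedral normal form. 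The semantic computation performed immediately below~\eqref{eq:polyhedralcnf} then gives $\dsemp{c} = C$, as required.

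For the converse (every circuit denotes a cone), I would start from an arbitrary $c \colon n \to m$ of $\CDP$. By Theorem~\ref{th:polynf} there is a circuit $d \colon n \to m$ in polyhedral normal form with $c \myeq{\IHP{}} d$. Since all the axioms of $\IHP{}$ are sound, this entails $\dsemp{c} = \dsemp{d}$; and, once more by the computation under~\eqref{eq:polyhedralcnf}, $\dsemp{d}$ is exactly the polyhedral cone cut out by the matrix encoded in the $\FC$-subdiagram $\overrightarrow{A}$ of $d$. Hence $\dsemp{c}$ lies in $\propCat{PC}_{\mathsf k}$, and one may take $C \coloneqq \dsemp{c}$.

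The remaining points are bookkeeping rather than real obstacles. First, in the surjectivity direction one must verify that the cup-and-cap wiring of~\eqref{eq:polyhedralcnf} splits the single $n+m$ boundary of $\overrightarrow{A}$ into the intended $n$/$m$ input/output partition while preserving the matrix interpretation; this is precisely the unbiased relational composition noted after Definition~\ref{def:pc}, and is insensitive to how $n+m$ is apportioned. Second, the converse direction hinges on soundness being in place, which the text supplies, and on the fact that the polyhedral normal form literally has the shape~\eqref{eq:polyhedralcnf}, so that its denotation is read off directly. The mild subtlety worth flagging is that the matrix $A$ in the definition of $\propCat{PC}_{\mathsf k}$ is permitted an arbitrary number $p$ of rows, so one genuinely needs the full strength of Example~\ref{ex:matrix} (every matrix, not merely square or full-rank ones, is realisable in $\FC$) to close the surjectivity argument.
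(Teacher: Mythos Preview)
Your proof is correct on both directions. The first direction (every polyhedral cone is denotable) matches the paper's argument essentially verbatim: build the circuit~\eqref{eq:polyhedralcnf} from a matrix representation of the cone and read off its semantics.

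For the converse, however, you take a genuinely different route from the paper. You invoke the First Normal Form theorem (Theorem~\ref{th:polynf}) together with soundness to rewrite an arbitrary circuit into polyhedral normal form, whose denotation is manifestly a cone. The paper instead proceeds by a direct structural induction on $\CDP$: each generator of~\eqref{eq:SFcalculusSyntax1} and~\eqref{eq:SFcalculusSyntax2} denotes a linear relation (hence a polyhedral cone), $\greq$ denotes the cone $\{(x,y)\mid x-y\geq 0\}$, and the inductive cases follow because $\propCat{PC}_{\mathsf k}$ is closed under relational composition and cartesian product (the closure lemmas of Appendix~\ref{app:standardproperties}). Your approach is valid and nicely reuses the normal-form machinery already in place; the paper's approach is more elementary and shows that expressivity is logically prior to, and independent of, the heavier Fourier--Motzkin apparatus underlying Theorem~\ref{th:polynf}. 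In particular, the paper's argument makes transparent that $\dsempO$ factors through $\propCat{PC}_{\mathsf k}$ for purely structural reasons, without any axiomatic rewriting at all.
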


By Theorem \ref{th:compl2} and Proposition \ref{th:ihpiso} it follows that
\begin{corollary}
$\IHP{\field} \cong \propCat{PC}_{\mathsf k}$
\end{corollary}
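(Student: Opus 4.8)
The plan is to instantiate the general factorisation scheme of \S\ref{sec:PROP} with $\Syn = \CDP$, $\PSigmaE = \IHP{\field}$, $\IM = \propCat{PC}_{\mathsf k}$ and $\Sem = \RelX{\mathsf k}$, and to exhibit the iso $\cong$ appearing in that diagram. Since the substantive content is already packaged in Theorem~\ref{th:compl2} and Proposition~\ref{th:ihpiso}, the corollary follows by routine bookkeeping, provided the order structure is tracked with care.

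First I would observe that the routine soundness check noted after Figure~\ref{fig:axiom:ihp} makes $\dsempO \colon \CDP \to \RelX{\mathsf k}$ monotone with respect to $\mysubeq{\IHP{}}$, so it descends to a well-defined ordered-prop morphism $F \colon \IHP{\field} \to \RelX{\mathsf k}$ on the quotient. Combining soundness with the completeness Theorem~\ref{th:compl2} then yields the biconditional
\[ c \mysubeq{\IHP{}} d \quad\Longleftrightarrow\quad \dsemp{c} \subseteq \dsemp{d}, \]
which says exactly that $F$ is \emph{order-reflecting} on each homset, and not merely order-preserving. In particular, if $\dsemp{c} = \dsemp{d}$, then both $c \mysubeq{\IHP{}} d$ and $d \mysubeq{\IHP{}} c$, whence $c \myeq{\IHP{}} d$ by antisymmetry of the ordered prop; hence $F$ is injective on arrows.

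Next I would appeal to expressivity (Proposition~\ref{th:ihpiso}): its second clause confines the image of $F$ to $\propCat{PC}_{\mathsf k}$, while its first clause shows every cone $C \colon n \to m$ is the denotation of some circuit, so that $F$ is surjective onto the arrows of $\propCat{PC}_{\mathsf k}$. Corestricting, $F$ becomes a functor $\IHP{\field} \to \propCat{PC}_{\mathsf k}$ that is the identity on objects and a bijection on every homset. Since it both preserves and reflects inclusion by the biconditional above, each homset bijection is an isomorphism of posets, and therefore $F$ is an isomorphism of ordered props, establishing $\IHP{\field} \cong \propCat{PC}_{\mathsf k}$.

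I expect no real obstacle here: all the weight rests on Theorem~\ref{th:compl2} and Proposition~\ref{th:ihpiso}. The single point requiring attention is to secure an iso of \emph{ordered} props rather than a bare prop iso; this is exactly why order-reflection (the $\Leftarrow$ direction, i.e.\ completeness) is needed alongside order-preservation, and why antisymmetry must be invoked to pass from $\subseteq$-equality of denotations to $\myeq{\IHP{}}$-equality of circuits.
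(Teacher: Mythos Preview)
Your proposal is correct and takes essentially the same approach as the paper: the paper simply states that the corollary follows from Theorem~\ref{th:compl2} and Proposition~\ref{th:ihpiso}, and you have spelled out the routine factorisation argument (soundness plus completeness giving an order-embedding, expressivity giving surjectivity onto $\propCat{PC}_{\mathsf k}$) that the paper leaves implicit. Your attention to the ordered-prop structure via order-reflection and antisymmetry is appropriate and matches what the general scheme in \S\ref{sec:PROP} requires.
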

The above allows us to conclude that $\IHP{\field}$ is the theory of polyhedral cones.



\section{The theory of Polyhedra}\label{sec:poly}

In~\cite{BonchiPSZ19}, the signature of $\CD$ was extended with an additional generator
\begin{equation}\label{eq:perp}
\One
\end{equation}
with semantics $\dsem{\One}=\{(\nullvec, 1)\}$. The three leftmost equations in Figure~\ref{fig:aihpaxioms} provide a complete axiomatisation for semantic equality. In terms of expressivity, the resulting calculus expresses exactly \emph{affine spaces}, namely sets of solutions of $Ax +b = 0$ for some matrix $A$ and vector $b$.


Here we extend $\CDP$ with~\eqref{eq:perp}. Let $\ACDP$ be the prop freely generated by~\eqref{eq:SFcalculusSyntax1}, \eqref{eq:SFcalculusSyntax2}, \eqref{eq:>} and~\eqref{eq:perp}. The circuits of $\ACDP$ can denote \emph{polyhedra}, namely sets 
$P = \{\, x \in \field^n \mid Ax +b \geq 0 \,\}$. Observe that the empty set $\emptyset$ is a polyhedron, but not a polyhedral cone.

\begin{example}Let \input{figures/polyc/pnf-def1} and \input{figures/poly/b-diag} be circuits in $\FC$ denoting, respectively some matrix $A$ and some vector $b$. Consider the following circuit in $\ACDP$.
\begin{equation}\label{eq:polynf}
\input{figures/poly/Nf1-general}
\end{equation}
It is easy to check that its semantics is the relation
$P=\{\, (x,y) \in \field^n \times \field^m \mid A\begin{pmatrix*} y \\ x \end{pmatrix*} + b \geq 0 \,\}$. Another useful circuit is $\onezero$: $\dsem{\onezero}= \{(\nullvec, 1)\} ; \{(0, \nullvec)\} = \emptyset$. Intuitively, it behaves as a logical false, since for any relation $R$ in $\RelX{\field}$, $R \oplus \emptyset = \emptyset = \emptyset \oplus R$.
\end{example}

\begin{figure}[t]
    \input{figures/poly/aihp-axioms}
    \caption{Axioms of $\AIHP{\field}$}
    \label{fig:aihpaxioms}
\end{figure}
In order to obtain a complete axiomatisation, it is enough to add to the three axioms in~\cite{BonchiPSZ19}, only one axiom: $AP1$ in Figure~\ref{fig:aihpaxioms}. Intuitively, $AP1$ states that $1\geq 0$. The prop freely generated by~\eqref{eq:SFcalculusSyntax1}, \eqref{eq:SFcalculusSyntax2}, \eqref{eq:>}, \eqref{eq:perp} and the axioms in Figures~\ref{fig:ih}, \ref{fig:axiom:ihp} and~\ref{fig:aihpaxioms} is denoted by $\AIHP{\mathsf k}$. 

With these axioms, any circuit in $\ACDP$ can be shown equivalent to one of the form~\eqref{eq:polynf}. This is shown using the first normal form (Theorem~\ref{th:polynf}) for $\CDP$ and the following lemma.

\begin{lemma}\label{lm:nf2}
For any  $c \colon n \rightarrow m$ of $\ACDP$, there exists $c' \colon n+1 \rightarrow m$ of $\CDP$ such that
\input{figures/poly/nf2-def}
\end{lemma}

\begin{theorem}\label{th:nf}
For all $c$ of $\ACDP$, there exist $d$ in the form of~\eqref{eq:polynf} such that $c \myeq{\AIHP{}} d$.
\end{theorem}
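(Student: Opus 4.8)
The plan is to reduce the statement to the first normal form for $\CDP$ (Theorem~\ref{th:polynf}) by isolating the single affine generator $\One$, and then to absorb that $\One$ into the matrix part of the resulting polyhedral normal form, turning one column of the matrix into the offset vector $b$.

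First I would apply Lemma~\ref{lm:nf2} to $c$: this yields a circuit $c' \colon n+1 \to m$ of $\CDP$ such that $c \myeq{\AIHP{}}$ the composite obtained by feeding $\One$ into the extra input of $c'$ and leaving the remaining $n$ inputs free. Since $c'$ lives in $\CDP$, I may then invoke the First Normal Form (Theorem~\ref{th:polynf}) to replace $c'$, up to $\myeq{\IHP{}}$ (hence up to $\myeq{\AIHP{}}$), by a circuit in polyhedral normal form \eqref{eq:polyhedralcnf}: that is, by a matrix circuit $\overrightarrow{A'}$ of $\FC$ --- with $A'$ of size $p \times (n+1+m)$ --- whose $n+1$ left wires and $m$ right wires are all routed into $A'$ (the right wires being copied by the black structure), subject to the constraint that the $p$ outputs of $A'$ are $\geq 0$. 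At this point $c$ is expressed as $\One$ plugged into one designated column-input of this polyhedral normal form.

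Next I would push the $\One$ through the matrix circuit. Writing $A' = (\,b \mid A\,)$ in block-column form, where $b \in \field^p$ is the column corresponding to the input fed by $\One$ and $A$ is the remaining $p \times (n+m)$ block, the key fact --- provable from the matrix representation of $\FC$ circuits (Example~\ref{ex:matrix}) together with the laws of $\IH{}$ --- is that feeding the constant $1$ of $\One$ into that column turns $\overrightarrow{A'}$ into the circuit computing $A(\cdot) + b$: concretely, $\One$ into the first column denotes the \emph{state} $\overrightarrow{b} \colon 0 \to p$ (the vector $b$ seen as a constant), which is then added pointwise to $\overrightarrow{A}$ applied to the remaining wires. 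Re-associating the adders and re-organising the black copy structure on the $m$ right (output) wires by the Frobenius and bimonoid laws of $\IH{}$, the resulting diagram is exactly the affine polyhedral shape \eqref{eq:polynf}, with matrix $A$ and offset $b$; the final reindexing of left and right boundaries is handled by permutations, as permitted by the remark following Definition~\ref{def:pc}. This subsumes the empty case uniformly: if $c$ denotes $\emptyset$, the procedure simply yields some $A, b$ with $A\binom{y}{x}+b \geq 0$ unsatisfiable.

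The main obstacle is the diagrammatic surgery of the last paragraph: cleanly separating the constant column of $\overrightarrow{A'}$, recognising the $\One$-fed column as the state $\overrightarrow{b}$, and re-assembling the adder-plus-copy plumbing into the precise picture \eqref{eq:polynf}. None of this is conceptually deep, but it requires careful bookkeeping with the $\IH{}$ equations (commutativity and associativity of addition, the bialgebra and Frobenius laws, and the graphical matrix calculus); this is where I would spend most of the effort and be most prone to error.
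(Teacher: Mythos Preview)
Your proposal is correct and follows essentially the same route as the paper: apply Lemma~\ref{lm:nf2} to isolate the $\One$, put the resulting $\CDP$ circuit in polyhedral normal form via Theorem~\ref{th:polynf}, and then split the matrix block-columnwise so that the column fed by $\One$ becomes the constant vector $b$. The paper dispatches your ``main obstacle'' in one line by simply invoking the block decomposition of the matrix $D$ in $\FC$ (so that $\One$ composed with the $b$-column is literally the state $\overrightarrow{b}\colon 0\to p$); the adder/copy plumbing is already in place from the polyhedral normal form and needs no further surgery beyond a permutation of input wires, so your worry there is somewhat overstated.
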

To prove completeness, the notion of homogenization is pivotal. The \emph{homogenization} of a polyhedron $P= \{ x \in \field^n \mid Ax +b \geq 0 \}$ is the cone $P^H = \{ (x, y) \in \field^{n+1} \mid Ax + by \geq 0, y \geq 0 \}$.
Diagrammatically, this amounts to replace the $\One$ in~\eqref{eq:polynf} with $
\tikzset{x=1em, y=2.1ex}
\InputIfFileExists{polargeq.tikz}{}{\input{./tikz/polargeq.tikz}}
\tikzset{x=1em, y=1.5ex}
$, obtaining the following diagram
\begin{equation}
    \input{figures/poly/homog}
\end{equation}

\begin{lemma}\label{th:polyinc}
Let $P_1, P_2 \subseteq \field^n$ be two non-empty polyhedra. Then,
$P_1 \subseteq P_2$ iff $P_1^H \subseteq P_2^H$.
\end{lemma}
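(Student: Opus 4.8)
The plan is to exploit the fact that the homogenization $P^H$ meets the hyperplane $\{y=1\}$ in a faithful copy of $P$, while it meets $\{y=0\}$ in the recession cone of $P$. Writing $P_i = \{\, x \in \field^n \mid A_i x + b_i \geq 0 \,\}$ for $i \in \{1,2\}$, one reads off directly from the definition of $\cdot^H$ that $P_i^H \cap \{y=1\} = \{\, (x,1) \mid x \in P_i \,\}$ and $P_i^H \cap \{y=0\} = \{\, (x,0) \mid A_i x \geq 0 \,\}$. These two slices are the only ingredients needed.

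For the easy implication, $P_1^H \subseteq P_2^H \Rightarrow P_1 \subseteq P_2$, I would restrict to the slice $y=1$: given $x \in P_1$, the point $(x,1)$ lies in $P_1^H$, hence in $P_2^H$, and the defining inequalities of $P_2^H$ specialised at $y=1$ are exactly $A_2 x + b_2 \geq 0$, i.e.\ $x \in P_2$. This direction requires no hypothesis on the $P_i$.

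The substantive direction is $P_1 \subseteq P_2 \Rightarrow P_1^H \subseteq P_2^H$. Take $(x,y) \in P_1^H$, so $A_1 x + b_1 y \geq 0$ and $y \geq 0$, and split on the sign of $y$. When $y>0$ I rescale: dividing the inequalities by $y$ shows $x/y \in P_1 \subseteq P_2$, and multiplying $A_2(x/y) + b_2 \geq 0$ back through by $y>0$ gives $A_2 x + b_2 y \geq 0$, whence $(x,y) \in P_2^H$. The delicate case is $y=0$, which is precisely where non-emptiness is used and where I expect the main obstacle to lie: here $A_1 x \geq 0$ merely says that $x$ is a recession direction of $P_1$, and I must upgrade this to $A_2 x \geq 0$.

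To handle $y=0$ I would invoke the standard geometric description of the recession cone of a \emph{non-empty} polyhedron, namely $\mathrm{rec}(P_i) = \{\, d \mid A_i d \geq 0 \,\}$ together with the characterisation that, for any base point $p \in P_i$, a direction $d$ lies in $\mathrm{rec}(P_i)$ iff $p + td \in P_i$ for all $t \geq 0$. Fixing some $p \in P_1$ — which exists exactly because $P_1 \neq \emptyset$ — and using $A_1 x \geq 0$, the whole ray $p + tx$ stays in $P_1 \subseteq P_2$ for all $t \geq 0$; concretely, $(A_2 p + b_2) + t\,(A_2 x) \geq 0$ for every $t \geq 0$, and letting $t \to \infty$ forbids any negative coordinate of $A_2 x$, so $A_2 x \geq 0$. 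Hence $(x,0) \in P_2^H$, which together with the $y>0$ case yields $P_1^H \subseteq P_2^H$. The passage to the limit is the crux, and it is the availability of the base point $p$ — i.e.\ the non-emptiness assumption — that makes it legitimate and cannot be dropped.
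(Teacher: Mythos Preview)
Your proof is correct and follows essentially the same route as the paper's: slice at $y=1$ for the easy direction, and for the hard direction split on $y>0$ (rescale) versus $y=0$ (recession-cone argument via a base point in $P_1$). The only cosmetic difference is that where you ``let $t \to \infty$'', the paper---working over an arbitrary ordered field---records the equivalent purely algebraic fact as a standalone lemma (if $\alpha v \leq w$ for all $\alpha > 0$ then $v \leq 0$).
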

Using Theorem~\ref{th:nf} and Lemma~\ref{th:polyinc}, we can reduce completeness for \emph{non-empty} polyhedra to completeness of polyhedral cones.
\begin{theorem}\label{th:compl}
Let $c, d \colon n \rightarrow m$ in $\ACDP$ denote non-empty polyhedra. If
$\dsem{c} \subseteq \dsem{d}$, then $c \stackrel{\AIHP{}}{\subseteq} d$.
\end{theorem}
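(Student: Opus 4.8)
The plan is to reduce, via homogenization, the completeness statement for non-empty polyhedra to the already established completeness for polyhedral cones (Theorem~\ref{th:compl2}). The four ingredients are the normal form of Theorem~\ref{th:nf}, the diagrammatic homogenization, Lemma~\ref{th:polyinc}, and the new axiom $\law{AP1}$ used to ``de-homogenize''.

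First I would put both circuits in normal form. By Theorem~\ref{th:nf} there are circuits $c', d' \colon n \to m$ of the shape~\eqref{eq:polynf} with $c \myeq{\AIHP{}} c'$ and $d \myeq{\AIHP{}} d'$; write $P_c = \dsem{c'}$ and $P_d = \dsem{d'}$, which by hypothesis are non-empty polyhedra with $P_c \subseteq P_d$. Next I would homogenize at the diagrammatic level: replacing the $\One$ occurring in $c'$ and $d'$ by the gadget $\tikzfig{polargeq}$ produces circuits $c^H, d^H$ of $\CDP$ which, crucially, contain no occurrence of $\One$ and are therefore genuine cone circuits. A routine semantic check confirms $\dsem{c^H} = P_c^H$ and $\dsem{d^H} = P_d^H$, the algebraic homogenizations. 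Since $P_c, P_d$ are non-empty and $P_c \subseteq P_d$, Lemma~\ref{th:polyinc} yields $P_c^H \subseteq P_d^H$, that is $\dsem{c^H} \subseteq \dsem{d^H}$. Completeness for cones (Theorem~\ref{th:compl2}) then gives $c^H \mysubeq{\IHP{}} d^H$, and a fortiori $c^H \mysubeq{\AIHP{}} d^H$, as every axiom of $\IHP{}$ is an axiom of $\AIHP{}$.

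Finally I would de-homogenize. Plugging $\One$ into the homogenizing wire of $c^H$ recovers $c'$: the constant $1$ supplied by $\One$ forces the fresh variable to be $1$, and the attached constraint $y \geq 0$ degenerates to the redundant $1 \geq 0$, which is discharged by axiom $\law{AP1}$ together with the $\One$-copy and $\One$-delete laws of Figure~\ref{fig:aihpaxioms}; the same manipulation takes $d^H$ to $d'$. Because $\mysubeq{\AIHP{}}$ is a precongruence with respect to $\scolon$ and $\oplus$, applying the same context (plugging $\One$) to both sides of $c^H \mysubeq{\AIHP{}} d^H$ preserves the inequality, giving $c' \mysubeq{\AIHP{}} d'$. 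Composing with $c \myeq{\AIHP{}} c'$ and $d \myeq{\AIHP{}} d'$ yields $c \mysubeq{\AIHP{}} d$, as required.

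The main obstacle is the de-homogenization step: showing diagrammatically that substituting $\One$ for the homogenizing variable in $c^H$ collapses back to $c'$ modulo $\AIHP{}$. This is precisely where $\law{AP1}$ earns its keep, being the only axiom beyond those of~\cite{BonchiPSZ19} and $\IHP{}$, and encoding the fact $1 \geq 0$ needed to erase the homogenizing constraint. A secondary point of care is the indispensability of the non-emptiness hypothesis: Lemma~\ref{th:polyinc} fails for empty polyhedra, so this argument is confined to the non-empty case, the empty case being treated separately via $\onezero$.
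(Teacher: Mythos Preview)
Your proposal is correct and follows essentially the same route as the paper: normalize via Theorem~\ref{th:nf}, homogenize by swapping $\One$ for the open wire with the attached $\geq 0$ constraint, invoke Lemma~\ref{th:polyinc} and cone completeness (Theorem~\ref{th:compl2}), then de-homogenize by plugging $\One$ back in and using $\law{AP1}$ together with the $\law{dup}$ axiom to collapse the redundant $1\geq 0$ constraint. The paper's proof makes the de-homogenization explicit as a short chain of diagrammatic equalities, but the content is exactly what you describe.
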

Completeness for empty polyhedra requires a few additional lemmas, given in Appendix~\ref{app:poly}.
%
\begin{theorem}\label{th:acompl2}
For all circuits $c$ in $\ACDP$, if $\dsem{c}= \emptyset$ then $c \myeq{\AIHP{}} 
\tikzset{x=1em, y=2.1ex}
\InputIfFileExists{one-false-disconnect.tikz}{}{\input{./tikz/one-false-disconnect.tikz}}
\tikzset{x=1em, y=1.5ex}
$ 
\end{theorem}

\begin{corollary}
[Completeness]\label{co:compl}
For all circuits $c,d$ in $\ACDP$, if
$\dsem{c} \subseteq \dsem{d}$ then $c \mysubeq{\AIHP{}} d$.
\end{corollary}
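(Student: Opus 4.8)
The plan is to assemble the corollary from the two completeness results already established, via a case analysis on whether the polyhedron $\dsem{c}$ is empty. The elementary observation driving the split is that the relational monoidal product of $\RelX{\field}$ satisfies $R \oplus \emptyset = \emptyset$ (its defining conjunction becomes unsatisfiable), so if $\dsem{c} \subseteq \dsem{d}$ and $\dsem{c} \neq \emptyset$, then $\dsem{d} \supseteq \dsem{c}$ is non-empty as well. In this first case both $c$ and $d$ denote non-empty polyhedra and Theorem~\ref{th:compl} gives immediately $c \mysubeq{\AIHP{}} d$. This also rules out the configuration $\dsem{c} \neq \emptyset$, $\dsem{d} = \emptyset$, which the hypothesis forbids.

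The remaining case is $\dsem{c} = \emptyset$, where the inclusion holds vacuously but a syntactic derivation must still be produced. Here I would first apply Theorem~\ref{th:acompl2} to get $c \myeq{\AIHP{}} \tikzfig{one-false-disconnect}$, reducing the goal to showing that the canonical false circuit is a bottom element of the homset, i.e.\ that $\tikzfig{one-false-disconnect} \mysubeq{\AIHP{}} d$ for every $d \colon n \to m$. I would establish this from two ingredients. The first is the local inequality $\onezero \mysubeq{\AIHP{}} \ZeronetT$, proved by the short derivation $\onezero = \One \poi \Wcounit \mysubeq{\circ\bullet-inc} \One \poi \Bcounit \myeq{del} \ZeronetT$, where the first step uses the inclusion axiom $\Wcounit \mysubeq{\circ\bullet-inc} \Bcounit$ together with monotonicity of composition, and the second uses axiom $del$. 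The second ingredient is that $\dsem{d \oplus \onezero} = \dsem{d} \oplus \emptyset = \emptyset$, so Theorem~\ref{th:acompl2} applies once more, now to the auxiliary circuit $d \oplus \onezero$, yielding $d \oplus \onezero \myeq{\AIHP{}} \tikzfig{one-false-disconnect}$.

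Combining the two ingredients gives the chain $\tikzfig{one-false-disconnect} \myeq{\AIHP{}} d \oplus \onezero \mysubeq{\AIHP{}} d \oplus \ZeronetT = d$, where the inequality is monotonicity of $\oplus$ applied to $\onezero \mysubeq{\AIHP{}} \ZeronetT$ and the final equality is the unit law of the strict monoidal structure. Chaining with $c \myeq{\AIHP{}} \tikzfig{one-false-disconnect}$ delivers $c \mysubeq{\AIHP{}} d$ and closes the empty case. The only content beyond invoking the two theorems is this bottom-element argument; its \emph{crux} is the idea of re-applying Theorem~\ref{th:acompl2} to $d \oplus \onezero$, which lets an arbitrary $d$ be ``absorbed'' alongside the false constant and then discharged through $\onezero \mysubeq{\AIHP{}} \ZeronetT$. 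I do not expect a serious obstacle, the only point requiring care being that the canonical false circuit of Theorem~\ref{th:acompl2} is used consistently at the correct type $n \to m$ throughout.
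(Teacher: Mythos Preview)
Your proposal is correct and follows the same case split the paper leaves implicit: non-empty $\dsem{c}$ goes to Theorem~\ref{th:compl}, empty $\dsem{c}$ goes to Theorem~\ref{th:acompl2}. The paper states the corollary without proof, so you are filling in exactly the intended gap.

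The one place where your route diverges slightly from the paper's available machinery is the bottom-element step in the empty case. You re-apply Theorem~\ref{th:acompl2} to $d \oplus \onezero$ and then discharge via $\onezero \mysubeq{\AIHP{}} \ZeronetT$; this is a neat self-contained trick. The paper, however, already has Lemma~\ref{lm:inc} in Appendix~\ref{app:poly} (used in the proof of Theorem~\ref{th:acompl2} itself), which states directly that the false-disconnect circuit is below every $c\colon n\to m$. Invoking that lemma gives $\tikzfig{one-false-disconnect} \mysubeq{\AIHP{}} d$ in one step, without the second application of Theorem~\ref{th:acompl2}. Your version buys independence from that lemma at the cost of a slightly longer detour; the paper's version is more direct since the lemma is already on the shelf. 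Either way the argument closes.
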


Finally, we characterise the semantic image of circuits in $\ACDP$.
%

\begin{proposition}[Expressivity]\label{th:aihpiso}
    For each arrow $P \colon n \rightarrow m$ in $\propCat{P}_{\mathsf k}$ there exist a circuit $c \colon n \rightarrow m$ in $\ACDP$, such that $P = \dsemp{c}$. Vice-versa, for each circuit $c \colon n \rightarrow m$ of $\ACDP$ there exists an arrow $P \colon n \rightarrow m$ of $\propCat{P}_{\mathsf k}$, such that $\dsemp{c} = P$.
\end{proposition}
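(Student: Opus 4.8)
The plan is to mirror the expressivity argument for polyhedral cones (Proposition~\ref{th:ihpiso}), proving the two inclusions between the image of $\ACDP$ under $\dsempO$ and the prop $\propCat{P}_{\mathsf k}$ of polyhedra. Both directions are short once the preparatory results are in hand: the first is the Example computation around~\eqref{eq:polynf} read in reverse, and the second is an immediate consequence of the normal form theorem together with soundness.

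For the inclusion $\propCat{P}_{\mathsf k} \subseteq \mathrm{Im}(\dsempO)$ — every polyhedron is denoted by some circuit — I would proceed as follows. Given a polyhedron $P \colon n \to m$, by Definition~\ref{def:pc} it comes with a $p \times (n+m)$ matrix $A$ and a vector $b \in \field^p$ such that $P = \{(x,y) \mid A\begin{pmatrix} x \\ y \end{pmatrix} + b \geq 0\}$. By Example~\ref{ex:matrix}, circuits of $\FC$ express arbitrary $\field$-matrices, so I can realise $A$ as a box $\overrightarrow{A}$ and the column vector $b$ as a box $\overrightarrow{b}$ in $\FC$; feeding the generator $\One$ (whose semantics is the constant $1$) into $\overrightarrow{b}$ produces the constant $b$, and plugging both boxes into the template~\eqref{eq:polynf} yields a circuit $c$ of $\ACDP$ with $\dsemp{c} = P$. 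The only bookkeeping is a reordering of the columns of $A$ to match the orientation of the left and right boundary wires, which is harmless by the remark following Definition~\ref{def:pc}. Note that this construction already covers the empty polyhedron, which arises from an infeasible system (e.g.\ $A = 0$ and $b$ with a negative entry).

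For the converse inclusion $\mathrm{Im}(\dsempO) \subseteq \propCat{P}_{\mathsf k}$ — every circuit denotes a polyhedron — the heavy lifting has already been done by the normal form theorem. Given any $c$ of $\ACDP$, Theorem~\ref{th:nf} supplies a circuit $d$ in the form~\eqref{eq:polynf} with $c \myeq{\AIHP{}} d$. By soundness of the axioms of $\AIHP{\field}$ (which is routine to check, the new axiom $AP1$ merely recording the sound fact $1 \geq 0$) we obtain $\dsemp{c} = \dsemp{d}$, and the Example shows that $\dsemp{d} = \{(x,y) \mid A\begin{pmatrix} y \\ x \end{pmatrix} + b \geq 0\}$, i.e.\ an arrow of $\propCat{P}_{\mathsf k}$. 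Hence $\dsemp{c} \in \propCat{P}_{\mathsf k}$.

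Since the real content is packaged into Theorem~\ref{th:nf} and the soundness verification, I do not expect a genuine obstacle in this proposition itself; the argument is a routine assembly of already-established facts. The only point requiring care is the matching of arities and the orientation of the boundary wires between the two presentations of a polyhedron (the swap of $\begin{pmatrix} x \\ y \end{pmatrix}$ for $\begin{pmatrix} y \\ x \end{pmatrix}$ above), which I would dispatch once and for all by appealing to the isomorphism $\field^n \times \field^m \cong \field^{n'} \times \field^{m'}$ whenever $n+m = n'+m'$, as noted after Definition~\ref{def:pc}.
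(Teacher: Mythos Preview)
Your argument is correct, and the first direction (every polyhedron is the semantics of some circuit) matches the paper's proof exactly: both simply point at the circuit~\eqref{eq:polynf} and verify its semantics.

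For the converse direction the two proofs diverge. The paper proceeds by structural induction on $\ACDP$: it checks that each generator denotes a polyhedron (the only new case beyond Proposition~\ref{th:ihpiso} being $\One$, which is the singleton $\{(\bullet,1)\}$, exhibited explicitly as the solution set of $-y+1\geq 0 \wedge y-1\geq 0$), and then invokes the closure of polyhedra under sequential composition and monoidal product established in Appendix~\ref{app:standardproperties}. Your route instead pushes everything through Theorem~\ref{th:nf} and soundness: rewrite $c$ into the shape~\eqref{eq:polynf} and read off the polyhedron. This is perfectly valid and there is no circularity, since Theorem~\ref{th:nf} depends only on Lemma~\ref{lm:nf2} and the cone normal form (Theorem~\ref{th:polynf}), neither of which uses the present proposition. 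The trade-off is that your argument imports the full Fourier--Motzkin machinery hidden inside the normal-form theorem, whereas the paper's inductive proof needs only the comparatively lightweight fact that $\propCat{P}_{\mathsf k}$ is closed under the prop operations. On the other hand, your approach makes the proposition a one-line corollary of results already on the table, which is economical from an expository standpoint.
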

Indeed, $\AIHP{\field}$ is the theory of polyhedra:
\begin{corollary}\label{cor:isopol}
$\AIHP{\field} \cong \propCat{P}_{\mathsf k}$
\end{corollary}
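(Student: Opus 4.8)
The plan is to assemble the isomorphism from the completeness and expressivity results by reusing the general recipe of \S\ref{sec:PROP}, exactly as was done earlier to obtain $\IHP{\field} \cong \propCat{PC}_{\mathsf k}$. First I would observe that soundness of all the axioms --- those of $\IH{\field}$, of Figure~\ref{fig:axiom:ihp}, and of Figure~\ref{fig:aihpaxioms} (in particular the new axiom $AP1$ and the affine axioms from~\cite{BonchiPSZ19}) --- guarantees that the semantics morphism $\dsempO \colon \ACDP \to \RelX{\field}$ sends $\mysubeq{\AIHP{}}$-related circuits to $\subseteq$-related relations. Hence $\dsempO$ factors through the quotient, inducing an identity-on-objects, monotone morphism of ordered props $F \colon \AIHP{\field} \to \RelX{\field}$.

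Next I would corestrict $F$ to its image. By the expressivity statement (Proposition~\ref{th:aihpiso}), every arrow $P \colon n \to m$ of $\propCat{P}_{\mathsf k}$ is of the form $\dsemp{c}$ for some circuit $c$, and conversely every circuit of $\ACDP$ denotes an arrow of $\propCat{P}_{\mathsf k}$. Thus $F$ factors through the inclusion $\propCat{P}_{\mathsf k} \hookrightarrow \RelX{\field}$ and is full, i.e.\ surjective on every homset.

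The remaining point is to upgrade $F$ to an order-isomorphism on homsets, and this is where the two directions must be combined with care: soundness gives that $c \mysubeq{\AIHP{}} d$ implies $\dsemp{c} \subseteq \dsemp{d}$, while completeness (Corollary~\ref{co:compl}) gives the converse. Together they yield that $c \mysubeq{\AIHP{}} d$ holds if and only if $\dsemp{c} \subseteq \dsemp{d}$, so $F$ is an order-embedding on each homset; in particular $\dsemp{c} = \dsemp{d}$ forces $c \myeq{\AIHP{}} d$, so $F$ is injective on homsets. An identity-on-objects morphism of ordered props that is a full order-embedding on every homset is an isomorphism of ordered props, which is the claim.

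The work here is purely organisational: all the mathematical content sits in Corollary~\ref{co:compl} and Proposition~\ref{th:aihpiso}. The only genuinely delicate point --- and the one I would be most careful about --- is that the isomorphism is asserted in the \emph{ordered} setting, so it is not enough for $F$ to be bijective on homsets; one must invoke the full inequality form of completeness (rather than mere equality of denotations) to know that $F$ reflects the order, ensuring that $F^{-1}$ is monotone and hence that $F$ is an iso of ordered props.
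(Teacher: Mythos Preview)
Your proposal is correct and follows essentially the same approach as the paper. The paper does not give an explicit proof of this corollary, treating it as immediate from Corollary~\ref{co:compl} (completeness) and Proposition~\ref{th:aihpiso} (expressivity) via the general factorisation recipe of \S\ref{sec:PROP}, exactly as was done for $\IHP{\field} \cong \propCat{PC}_{\mathsf k}$; you have simply spelled that argument out in full, with appropriate care for the ordered-prop structure.
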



%

\begin{example}[Flow networks]\label{ex:fnet}
Consider again flow networks, previously mentioned in the Introduction: edges with capacity $k$ can be expressed in $\ACDP$ by the diagram in \eqref{eq:flowneedge} hereafter referred as $
\tikzset{x=1em, y=2.1ex}
}
\tikzset{x=1em, y=1.5ex}
$. Observe that $\dsem{
\tikzset{x=1em, y=2.1ex}
}
\tikzset{x=1em, y=1.5ex}
}=\{(x,x) \mid 0 \leq x\leq k\}$ is exactly the expected meaning of an edge in a flow network. Nodes with $n$ incoming edges and $m$ outgoing edges can be encoded by the diagram $\;
\tikzset{x=1em, y=2.1ex}
\InputIfFileExists{flownet/vertex.tikz}{}{\input{./tikz/flownet/vertex.tikz}}
\tikzset{x=1em, y=1.5ex}
$. Again the semantics is the expected one: the total incoming flow must be equal to the total outgoing flow. For an example of the encoding, check the flow network in \eqref{eq:flownet} and the corresponding diagram in \eqref{eq:flownetd}. 

The axioms in $\AIHP{\field}$ can be exploited to compute the maximum flow of a network. By using the following two derived laws (proved in Appendix \ref{app:flow}) 

\begin{minipage}{0.4\textwidth}
\begin{equation}\label{eq:dlaw2}
    
\tikzset{x=1em, y=2.1ex}
\InputIfFileExists{flownet/law2-1.tikz}{}{\input{./tikz/flownet/law2-1.tikz}}
\tikzset{x=1em, y=1.5ex}
 \myeq{\AIHP{}} 
\tikzset{x=1em, y=2.1ex}
\begin{tikzpicture}
	\begin{pgfonlayer}{nodelayer}
		\node [style=none] (0) at (-1.5, 0) {};
		\node [style=edge] (1) at (0, 0) {$k+l$};
		\node [style=none] (2) at (1.5, 0) {};
	\end{pgfonlayer}
	\begin{pgfonlayer}{edgelayer}
		\draw (0.center) to (1.center);
		\draw (1.center) to (2.center);
	\end{pgfonlayer}
\end{tikzpicture}
}
\tikzset{x=1em, y=1.5ex}

\end{equation}
\end{minipage}
\begin{minipage}{0.5\textwidth}
\begin{equation}\label{eq:dlaw1}
 
\tikzset{x=1em, y=2.1ex}
\InputIfFileExists{flownet/law1-1.tikz}{}{\input{./tikz/flownet/law1-1.tikz}}
\tikzset{x=1em, y=1.5ex}
 \myeq{\AIHP{}} 
\tikzset{x=1em, y=2.1ex}
\InputIfFileExists{flownet/law1-2.tikz}{}{\input{./tikz/flownet/law1-2.tikz}}
\tikzset{x=1em, y=1.5ex}

 \quad (k+q \leq l)
\end{equation}
\end{minipage}


\noindent
one can transform the diagram in \eqref{eq:flownetd} into $
\tikzset{x=1em, y=2.1ex}
}
\tikzset{x=1em, y=1.5ex}
$
\begin{align*}
    &
\tikzset{x=1em, y=2.1ex}
\begin{tikzpicture}
	\begin{pgfonlayer}{nodelayer}
		\node [style=edge] (29) at (1.75, 2.5) {$1$};
		\node [style=white] (41) at (-1.5, 0) {};
		\node [style=none] (42) at (-2.5, 0) {};
		\node [style=edge] (47) at (1.75, 0.5) {$2$};
		\node [style=white] (67) at (0.75, 1.5) {};
		\node [style=white] (69) at (5.25, 1) {};
		\node [style=white] (73) at (2.75, -0.5) {};
		\node [style=edge] (74) at (1.75, -1.5) {$2$};
		\node [style=none] (91) at (3.75, 2.5) {};
		\node [style=edge] (92) at (3.75, -0.5) {$5$};
		\node [style=edge] (94) at (-0.25, 1.5) {$6$};
		\node [style=none] (95) at (-0.25, -1.5) {};
		\node [style=none] (96) at (6.25, 1) {};
	\end{pgfonlayer}
	\begin{pgfonlayer}{edgelayer}
		\draw (42.center) to (41);
		\draw [bend left=45] (67) to (29);
		\draw [bend right=45] (67) to (47);
		\draw [bend right=45] (74) to (73);
		\draw (94) to (67);
		\draw (73) to (92);
		\draw (69) to (96.center);
		\draw [bend left] (41) to (94);
		\draw [bend right] (41) to (95.center);
		\draw [bend right] (92) to (69);
		\draw [bend right] (69) to (91.center);
		\draw [bend right=45] (73) to (47);
		\draw (29) to (91.center);
		\draw (95.center) to (74);
	\end{pgfonlayer}
\end{tikzpicture}
}
\tikzset{x=1em, y=1.5ex}
 \myeq{\eqref{eq:dlaw1}} 
\tikzset{x=1em, y=2.1ex}
\begin{tikzpicture}
	\begin{pgfonlayer}{nodelayer}
		\node [style=edge] (29) at (1.75, 2.5) {$1$};
		\node [style=white] (41) at (-0.5, 0) {};
		\node [style=none] (42) at (-1.5, 0) {};
		\node [style=edge] (47) at (1.75, 0.5) {$2$};
		\node [style=white] (67) at (0.75, 1.5) {};
		\node [style=white] (69) at (4, 1) {};
		\node [style=white] (73) at (2.75, -0.5) {};
		\node [style=edge] (74) at (1.75, -1.5) {$2$};
		\node [style=none] (91) at (2.75, 2.5) {};
		\node [style=none] (95) at (0.75, -1.5) {};
		\node [style=none] (96) at (5, 1) {};
	\end{pgfonlayer}
	\begin{pgfonlayer}{edgelayer}
		\draw (42.center) to (41);
		\draw [bend left=45] (67) to (29);
		\draw [bend right=45] (67) to (47);
		\draw [bend right=45] (74) to (73);
		\draw (95.center) to (74);
		\draw (29) to (91.center);
		\draw (69) to (96.center);
		\draw [bend right] (41) to (95.center);
		\draw [bend right] (69) to (91.center);
		\draw [bend right=45] (73) to (47);
		\draw [bend right] (73) to (69);
		\draw [bend left] (41) to (67);
	\end{pgfonlayer}
\end{tikzpicture}
}
\tikzset{x=1em, y=1.5ex}
 \myeq{\circ-as} 
\tikzset{x=1em, y=2.1ex}
\begin{tikzpicture}
	\begin{pgfonlayer}{nodelayer}
		\node [style=edge] (29) at (1.75, 2.5) {$1$};
		\node [style=white] (41) at (-0.5, 0) {};
		\node [style=none] (42) at (-1.5, 0) {};
		\node [style=edge] (47) at (1.75, 0.5) {$2$};
		\node [style=white] (67) at (0.75, 1.5) {};
		\node [style=edge] (74) at (1.75, -1.5) {$2$};
		\node [style=none] (95) at (0.75, -1.5) {};
		\node [style=white] (96) at (4, 0) {};
		\node [style=none] (97) at (5, 0) {};
		\node [style=white] (98) at (2.75, 1.5) {};
		\node [style=none] (99) at (2.75, -1.5) {};
	\end{pgfonlayer}
	\begin{pgfonlayer}{edgelayer}
		\draw (42.center) to (41);
		\draw [bend left=45] (67) to (29);
		\draw [bend right=45] (67) to (47);
		\draw (95.center) to (74);
		\draw [bend right] (41) to (95.center);
		\draw [bend left] (41) to (67);
		\draw (97.center) to (96);
		\draw [bend left] (96) to (99.center);
		\draw [bend right] (96) to (98);
		\draw (74) to (99.center);
		\draw [bend right=45] (47) to (98);
		\draw [bend right=45] (98) to (29);
	\end{pgfonlayer}
\end{tikzpicture}
}
\tikzset{x=1em, y=1.5ex}
 \myeq{\eqref{eq:dlaw2}} 
\tikzset{x=1em, y=2.1ex}
\begin{tikzpicture}
	\begin{pgfonlayer}{nodelayer}
		\node [style=edge] (29) at (0, 1) {$3$};
		\node [style=edge] (47) at (0, -1) {$2$};
		\node [style=white] (67) at (-1, 0) {};
		\node [style=white] (98) at (1, 0) {};
		\node [style=none] (99) at (-2, 0) {};
		\node [style=none] (100) at (2, 0) {};
	\end{pgfonlayer}
	\begin{pgfonlayer}{edgelayer}
		\draw [bend left=45] (67) to (29);
		\draw [bend right=45] (67) to (47);
		\draw [bend right=45] (47) to (98);
		\draw [bend right=45] (98) to (29);
		\draw (67) to (99.center);
		\draw (98) to (100.center);
	\end{pgfonlayer}
\end{tikzpicture}
}
\tikzset{x=1em, y=1.5ex}
 \myeq{\eqref{eq:dlaw2}} 
\tikzset{x=1em, y=2.1ex}
}
\tikzset{x=1em, y=1.5ex}

\end{align*}
meaning that $\dsem{\eqref{eq:flownetd}}=\{(x,x) \mid 0\leq x \leq 5\}$, i.e., the maximum flow of  \eqref{eq:flownetd} is exactly 5.
\end{example}

\newcommand{\pre}[1]{{^\circ{#1}}}
\newcommand{\post}[1]{{#1^\circ}}

\newcommand{\markingone}{v_1}
\newcommand{\markingtwo}{v_2}
\newcommand{\firingstep}{f}
\newcommand{\irr}[1]{\overline{#1}}

\newcommand{\Pl}{\mathsf{Pl}}
\newcommand{\encodingPetri}[1]{\mathcal{E}(#1)}
\newcommand{\Circ}{\mathsf{Circ}} 
\newcommand{\CircP}{\Circ_{p}} 
\tikzset{ha/.style={mat,rounded rectangle,rounded rectangle left arc=none}}
\tikzset{mat/.style={draw,fill=white,rectangle,node font=\scriptsize}}

\tikzstyle{none}=[inner sep=0pt]
\tikzstyle{plain}=[inner sep=0pt]
\tikzstyle{black}=[circle, draw=black, fill=black, inner sep=0pt, minimum size=4pt]
\tikzstyle{black-faded}=[circle, draw=light-gray, fill=light-gray, inner sep=0pt, minimum size=4pt]
\tikzstyle{white}=[circle, draw=black, fill=white, inner sep=0pt, minimum size=4.5pt]
\tikzstyle{white-faded}=[circle, draw=light-gray, fill=white, inner sep=0pt, minimum size=4.5pt]
\tikzstyle{delay}=[fill=black, regular polygon, regular polygon sides=3,rotate=-90, scale=.55]
\tikzstyle{delay-op}=[fill=black, regular polygon, regular polygon sides=3,rotate=90, scale=.55]
\tikzstyle{reg}=[draw, fill=white, rounded rectangle, rounded rectangle left arc=none, minimum height=1.2em, minimum width=1.4em, node font={\scriptsize}]
\tikzstyle{coreg}=[draw, fill=white, rounded rectangle, rounded rectangle right arc=none, minimum height=1.2em, minimum width=1.4em, node font={\scriptsize}]
\tikzstyle{rn}=[circle, draw=red, fill=red, inner sep=0pt, minimum size=4pt]

\tikzstyle{pl}=[circle,thick,draw=black!75,fill=white,minimum size=9pt]
\tikzstyle{port}=[circle, fill,inner sep=1.2pt]

\newcommand{\place}{
\begin{tikzpicture}
	\begin{pgfonlayer}{nodelayer}
		\node [style=place] (0) at (0, 0) {};
		\node [style=none] (1) at (-1.5, 0) {};
		\node [style=none] (2) at (1.2, 0) {};
	\end{pgfonlayer}
	\begin{pgfonlayer}{edgelayer}
		\draw [style=arrow] (1.center) to (0);
		\draw (0) to (2.center);
	\end{pgfonlayer}
\end{tikzpicture}
}

\newcommand\register[1][$x$]{
  \tikz {
    \node[ha] (ha) {#1};
    \draw (ha.west) -- ++(-0.75, 0);
    \draw (ha.east) -- ++(0.75, 0);
  }
}

\newcommand{\sym}{
  \tikz {
    \draw (0,  0.5) .. controls (0.5,  0.5) and (0.5, -0.5) .. (1, -0.5);
    \draw (0, -0.5) .. controls (0.5, -0.5) and (0.5,  0.5) .. (1,  0.5);
  }
}

\newcommand{\Petri}{\mathsf{Petri}} %
\newcommand{\RC}{\mathsf{Rc}} 
\newcommand{\isoRC}{I} 
\newcommand{\RCS}{\RC_s} 
\newcommand{\CircS}{\Circ_s} 
\newcommand{\AddRel}{\mathsf{AddRel}}
\newcommand{\isoRCS}{\isoRC_s} 
\newcommand{\posreals}{{\mathbb{R}_{+}}}
\newcommand{\semantics}[1]{[\![ #1 ]\!]} 

\let\tinymatrix\smallmatrix
\let\endtinymatrix\endsmallmatrix
\patchcmd{\tinymatrix}{\scriptstyle}{\scriptscriptstyle}{}{}
\patchcmd{\tinymatrix}{\scriptstyle}{\scriptscriptstyle}{}{}
\patchcmd{\tinymatrix}{\vcenter}{\vtop}{}{}
\patchcmd{\tinymatrix}{\bgroup}{\bgroup\scriptsize}{}{}

\newcommand{\diagstate}[4]{
\tikzset{x=1em, y=2.1ex}
\begin{tikzpicture}
	\begin{pgfonlayer}{nodelayer}
		\node [style=none] (0) at (-0.75, 0.5) {};
		\node [style=none] (1) at (-0.75, 1) {};
		\node [style=none] (2) at (-0.75, -0.5) {};
		\node [style=none] (3) at (0.75, -0.5) {};
		\node [style=none] (4) at (-0.75, -1) {};
		\node [style=none] (5) at (0.75, 0.5) {};
		\node [style=none] (6) at (2.5, -0.5) {};
		\node [style=none] (7) at (0.75, -0.5) {};
		\node [style=none] (8) at (0.75, -1) {};
		\node [style=none] (9) at (0.75, 1) {};
		\node [style=none] (10) at (0, 0) {$#1$};
		\node [style=none] (11) at (3.25, -0.5) {$#3$};
		\node [style=none] (12) at (-3.25, -0.5) {$#2$};
		\node [style=none] (13) at (-2.5, -0.5) {};
		\node [style=none] (14) at (-0.75, -0.5) {};
		\node [style=none] (15) at (0.75, 0.5) {};
		\node [style=none] (16) at (2.5, 0.5) {};
		\node [style=none] (17) at (-2.5, 0.5) {};
		\node [style=none] (18) at (-0.75, 0.5) {};
		\node [style=none] (19) at (-3.25, 0.5) {$#4$};
		\node [style=none] (20) at (3.25, 0.5) {$#4$};
	\end{pgfonlayer}
	\begin{pgfonlayer}{edgelayer}
		\draw [in=180, out=0, looseness=1.25] (7.center) to (6.center);
		\draw [semithick] (3.center) to (8.center);
		\draw [semithick] (4.center) to (2.center);
		\draw [semithick] (0.center) to (1.center);
		\draw [semithick] (9.center) to (5.center);
		\draw [semithick] (1.center) to (9.center);
		\draw [semithick] (5.center) to (3.center);
		\draw [semithick] (8.center) to (4.center);
		\draw [semithick] (2.center) to (0.center);
		\draw [in=180, out=0, looseness=1.25] (13.center) to (14.center);
		\draw [in=180, out=0, looseness=1.25] (15.center) to (16.center);
		\draw [in=180, out=0, looseness=1.25] (17.center) to (18.center);
	\end{pgfonlayer}
\end{tikzpicture}
\tikzset{x=1em, y=1.5ex}
}

\newcommand{\traceform}[4]{
\tikzset{x=1em, y=2.1ex}
\begin{tikzpicture}
	\begin{pgfonlayer}{nodelayer}
		\node [style=none] (0) at (-0.75, 1.25) {};
		\node [style=none] (1) at (0.75, 1) {};
		\node [style=none] (2) at (-0.75, 1.75) {};
		\node [style=none] (3) at (-0.75, -0.25) {};
		\node [style=none] (4) at (0.75, -0.25) {};
		\node [style=none] (5) at (-0.75, -0.75) {};
		\node [style=none] (6) at (-0.75, 1) {};
		\node [style=none] (7) at (0.75, 1.25) {};
		\node [style=none] (8) at (4.5, 0) {};
		\node [style=none] (9) at (0.75, 0) {};
		\node [style=none] (10) at (0.75, -0.75) {};
		\node [style=none] (11) at (0.75, 1.75) {};
		\node [style=none] (12) at (0, 0.5) {$#1$};
		\node [style=none] (13) at (5.5, 0) {$#3$};
		\node [style=none] (14) at (2.25, 2.5) {};
		\node [style=none] (15) at (-1.5, 2.5) {};
		\node [style=none] (16) at (-4.1, 0) {$#2$};
		\node [style=none] (17) at (-3.25, 0) {};
		\node [style=none] (18) at (-0.75, 0) {};
		\node [style=reg] (19) at (1.75, 1) {\tiny $x$};
		\node [style=none] (20) at (2.25, 1) {};
		\node [style=none] (21) at (3.25, 2.75) {$#4$};
		\node [style=none] (22) at (-1.5, 1) {};
	\end{pgfonlayer}
	\begin{pgfonlayer}{edgelayer}
		\draw [in=180, out=0, looseness=1.25] (9.center) to (8.center);
		\draw [semithick] (4.center) to (10.center);
		\draw [semithick] (5.center) to (3.center);
		\draw [semithick] (0.center) to (2.center);
		\draw [semithick] (11.center) to (7.center);
		\draw [semithick] (2.center) to (11.center);
		\draw [semithick] (7.center) to (4.center);
		\draw [semithick] (10.center) to (5.center);
		\draw [semithick] (3.center) to (0.center);
		\draw (15.center) to (14.center);
		\draw [in=180, out=0, looseness=1.25] (17.center) to (18.center);
		\draw (1.center) to (19);
		\draw (19) to (20.center);
		\draw (6.center) to (22.center);
		\draw [bend right=90, looseness=1.75] (20.center) to (14.center);
		\draw [bend left=90, looseness=1.75] (22.center) to (15.center);
	\end{pgfonlayer}
\end{tikzpicture}
\tikzset{x=1em, y=1.5ex}
}

\newcommand{\diagbox}[3]{
\tikzset{x=1em, y=2.1ex}
\begin{tikzpicture}
	\begin{pgfonlayer}{nodelayer}
		\node [style=none] (0) at (-0.75, 0.5) {};
		\node [style=none] (1) at (-0.75, 1) {};
		\node [style=none] (2) at (-0.75, -0.5) {};
		\node [style=none] (3) at (0.75, -0.5) {};
		\node [style=none] (4) at (-0.75, -1) {};
		\node [style=none] (5) at (0.75, 0.5) {};
		\node [style=none] (6) at (2.5, 0) {};
		\node [style=none] (7) at (0.75, 0) {};
		\node [style=none] (8) at (0.75, -1) {};
		\node [style=none] (9) at (0.75, 1) {};
		\node [style=none] (10) at (0, 0) {$#1$};
		\node [style=none] (11) at (3.25, 0) {$#3$};
		\node [style=none] (12) at (-3.25, 0) {$#2$};
		\node [style=none] (13) at (-2.5, 0) {};
		\node [style=none] (14) at (-0.75, 0) {};
	\end{pgfonlayer}
	\begin{pgfonlayer}{edgelayer}
		\draw [in=180, out=0, looseness=1.25] (7.center) to (6.center);
		\draw [semithick] (3.center) to (8.center);
		\draw [semithick] (4.center) to (2.center);
		\draw [semithick] (0.center) to (1.center);
		\draw [semithick] (9.center) to (5.center);
		\draw [semithick] (1.center) to (9.center);
		\draw [semithick] (5.center) to (3.center);
		\draw [semithick] (8.center) to (4.center);
		\draw [semithick] (2.center) to (0.center);
		\draw [in=180, out=0, looseness=1.25] (13.center) to (14.center);
	\end{pgfonlayer}
\end{tikzpicture}
\tikzset{x=1em, y=1.5ex}
}

\section{Adding states to polyhedra}\label{app:state}
We have shown that $\ACDP$ with its associated equational theory $\AIHP{\field}$ provides a sound and complete calculus for polyhedra. In this section, we extend the calculus with a canonical notion of \emph{state}. Our development follows step-by-step the general recipe illustrated in~\cite[\S 4]{BHPSZ-popl19}. 

\subsection{The Calculus of Stateful Polyhedral Processes}\label{sec:CSFG}


We call $\CSFG$ the prop freely generated by~\eqref{eq:SFcalculusSyntax1}, \eqref{eq:SFcalculusSyntax2}, \eqref{eq:>}, \eqref{eq:perp} and the following.
\begin{equation}\label{eq:register}
\circuitXT
\end{equation}
Intuitively, the register $\circuitXT$ is a synchronous buffer holding a value $k\in \field$:  when it receives $l\in \field$ on the left port, it emits $k$ on the right one and stores $l$. To give a formal semantics to such behaviour we exploit a ``state bootstrapping'' technique that appears in several places in the literature, e.g. in the setting of cartesian bicategories~\cite{katis1997bicategories} and geometry of interaction~\cite{hoshino2014memoryful}.


\begin{definition}[Stateful processes~\cite{katis1997bicategories}]\label{def:stateful}
Let $\mathsf{T}$ be a prop. Define $\mathsf{St}(\mathsf{T})$ as the prop where:
\begin{itemize}
\item morphisms $n\to n$ are pairs $(s, c)$ where $s\in\mathbb{N}$ and $c\colon s+n \to s+m$ is a morphism of $\mathsf{T}$, quotiented by the smallest equivalence relation including every instance of
\[\diagstate{c}{n}{m}{s} \quad \sim \quad 
\tikzset{x=1em, y=2.1ex}
\InputIfFileExists{process-equiv.tikz}{}{\input{./tikz/process-equiv.tikz}}
\tikzset{x=1em, y=1.5ex}
\]
for a permutation $\sigma: s\to s$; the order is defined as $(s, c)\mysubeq{\mathsf{St}(\mathsf{T})} (s,d)$ if and only $c \mysubeq{\mathsf{T}} d$.
\item the composition of $(s,c) \colon n \to m$ and $(t,d)\colon m \to o$ is $(s+t,e)$ where $e$ is the arrow of $\mathsf{T}$ given by
\[
\tikzset{x=1em, y=2.1ex}
\InputIfFileExists{process-composition.tikz}{}{\input{./tikz/process-composition.tikz}}
\tikzset{x=1em, y=1.5ex}
\]
\item the monoidal product of $(s_1,c_1)\colon n_1 \to m_1$ and  $(s_2,c_2)\colon n_2 \to m_2$ is $(s_1+s_2, e)$ where $e$ is given by
\[
\tikzset{x=1em, y=2.1ex}
\InputIfFileExists{process-tensor.tikz}{}{\input{./tikz/process-tensor.tikz}}
\tikzset{x=1em, y=1.5ex}
\]
\item the identity on $n$ is $(0,id_n)$ and the symmetry of $n,m$ is $(0,\sigma_{n,m})$.
\end{itemize}
\end{definition}


We use $\mathsf{St}(\RelX{\mathsf k})$ as our semantic domain: in an arrow $(s,R)\colon n \to m$, $s$ records the number of registers while $R\colon s+n \to s+m$ is a relation $R\subseteq \field^s \times \field^n \times \field^s \times \field^m$ containing quadruples $(u,l,v,r)$ representing transitions: $u$ and $v$ are the starting and arrival state (namely vectors in $\field^s$, holding a value in $\field$ to each of the $s$ registers), while $l$ and $r$ are vectors of values occurring on the left and the right ports. The equivalence relation $\sim$ ensures that registers remains anonymous: it equates arrows that only differ by a bijective relabelling of their lists of registers. This is, therefore, a syntactic form of equivalence similar in flavour to $\alpha$-equivalence, since it discards intentional details not relevant for the dynamics of processes.

We can now give the semantics of $\CSFG$ as the morphism $\dsemsO\colon \CSFG \to \mathsf{St}(\RelX{\mathsf k})$ defined:
\[\dsems{\circuitXT} = (1, \{(k,l,l,k)\mid l,k\in \field \}) \quad \text{ and } \quad  \dsems{o} = (0, \dsem{o})\]
for all generators $o$ in~\eqref{eq:SFcalculusSyntax1}, \eqref{eq:SFcalculusSyntax2}, \eqref{eq:>}, \eqref{eq:perp}. For instance $\dsems{\BcounitT}=(0,\{(\nullvec,k)\mid k\in \field\})$. The semantics of $\circuitX$ is the expected behaviour: from any state $k$ (the stored value), it makes a transition to state $l$ when $l$ is on the left port and $k$ is on the right. This can be restated as a structural operational semantics (sos) axiom $(\circuitXT, k) \dtrans{l}{k} (\circuitXT, l)$ where the labels above and under the arrow stand, respectively, for the values on the left and right ports.


Theorem 30 in~\cite{BHPSZ-popl19} ensures that no other data is needed for an axiomatisation: let $\SAIHP$ be the prop generated by~\eqref{eq:SFcalculusSyntax1}, \eqref{eq:SFcalculusSyntax2}, \eqref{eq:>}, \eqref{eq:perp}, \eqref{eq:register} and the axioms in Figures~\ref{fig:ih}, \ref{fig:axiom:ihp} and~\ref{fig:aihpaxioms}.

\begin{theorem}\label{thm:stateful}
For all $c,d$ in $\CSFG$, if
$\dsems{c} \subseteq \dsems{d}$ then $c \mysubeq{\SAIHP{}} d$. Moreover $\SAIHP \cong \mathsf{St}(\propCat{P}_{\mathsf k})$.
\end{theorem}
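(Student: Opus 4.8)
The plan is to derive the statement from the general ``state bootstrapping'' result, Theorem~30 of~\cite{BHPSZ-popl19}, rather than by a direct induction on $\CSFG$. That theorem provides a uniform recipe: whenever a prop $\mathsf{T}$ is presented by a complete (in)equational theory $(\Sigma,E)$ and carries a chosen special Frobenius (equivalently, self-dual compact closed) structure definable from $\Sigma$, the prop presented by $\Sigma$ extended with a single register generator~\eqref{eq:register} and \emph{no further axioms} is isomorphic to $\mathsf{St}(\mathsf{T})$, and the evident semantics functor is complete. The whole proof then consists of instantiating this recipe and verifying its hypotheses.

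First I would fix the base data: take $\mathsf{T} = \propCat{P}_{\mathsf k}$, presented by the generators~\eqref{eq:SFcalculusSyntax1},~\eqref{eq:SFcalculusSyntax2},~\eqref{eq:>},~\eqref{eq:perp} and the axioms of Figures~\ref{fig:ih},~\ref{fig:axiom:ihp} and~\ref{fig:aihpaxioms}. Corollary~\ref{cor:isopol} gives exactly the required completeness input, namely $\AIHP{}\cong\propCat{P}_{\mathsf k}$. The structural hypothesis is supplied by the cups $\cupnd$ and caps $\capnd$ of Example~\ref{ex:cc}, which together with the Frobenius axioms of Figure~\ref{fig:ih} equip $\AIHP{}$ with the self-dual compact closed structure that Theorem~30 requires, and whose image under $\dsempO$ is the standard such structure on $\RelX{\field}$. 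Granting these two ingredients, Theorem~30 applies and yields the isomorphism $\SAIHP\cong\mathsf{St}(\propCat{P}_{\mathsf k})$; the completeness clause $\dsems{c}\subseteq\dsems{d}\Rightarrow c\mysubeq{\SAIHP{}}d$ is then read off as the fact that the semantics functor $\dsemsO$ factors through this isomorphism. Concretely, the isomorphism identifies a stateful process $(s,c)$ with the $\CSFG$-circuit obtained by placing $s$ registers on the state wires of $c$ (viewed in $\ACDP$) and closing those wires with $\cupnd$ and $\capnd$; the compact closed structure lets one trace the register wires so that every equation between stateful circuits reduces to one in $\AIHP{}$, discharged by Corollary~\ref{cor:isopol}.

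The step I expect to be the main obstacle is checking that Theorem~30, which is phrased for plain symmetric monoidal theories, lifts to the \emph{ordered} setting of SMITs used throughout this paper, since our completeness claim concerns inclusion $\subseteq$ rather than equality. This reduces to verifying that the homset bijection underlying $\SAIHP\cong\mathsf{St}(\mathsf{T})$ is in fact an order isomorphism: as the order on $\mathsf{St}(\mathsf{T})$ is defined pointwise by $(s,c)\mysubeq{}(s,d)$ iff $c\mysubeq{\mathsf{T}}d$, and composition with registers and closure with cups and caps are monotone operations of an ordered prop, monotonicity holds in both directions. Soundness of the register is immediate from its well-defined relational semantics, so with this routine refinement both conclusions of the theorem follow.
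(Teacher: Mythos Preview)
Your proposal is correct and takes essentially the same approach as the paper: invoke Theorem~30 of~\cite{BHPSZ-popl19} with the compact closed structure of Example~\ref{ex:cc}, feed in Corollary~\ref{cor:isopol} as the base completeness result, and then observe that the resulting prop isomorphism is in fact an isomorphism of \emph{ordered} props because both directions are monotone. The only cosmetic difference is that the paper instantiates $\mathsf{T}$ as $\AIHP{\field}$ and then applies $\mathsf{St}(-)$ to the iso $\AIHP{\field}\cong\propCat{P}_{\mathsf k}$, whereas you fold that step in by taking $\mathsf{T}=\propCat{P}_{\mathsf k}$ directly; this makes no material difference.
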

\begin{proof}[Proof of Theorem~\ref{thm:stateful}]
We make more clear the correspondence with~\cite{BHPSZ-popl19}. Considering the following diagram.
\[\xymatrix{{\CSFG} \ar@{->>}[r]^q \ar@(ur,ul)[rrrr]^{\dsemsO} & \SAIHP  \ar[r]^{F} & \mathsf{St}(\AIHP{\field}) \ar[r]^{\mathsf{St}(\cong)} & \mathsf{St}(\propCat{P}_{\mathsf k}) \ar@{>->}[r]^{\mathsf{St}(\iota)} &\mathsf{St}(\RelX{\mathsf k}) } \]
The morphism $\mathsf{St}(\iota)$ is just the obvious extension of the inclusion $\iota\colon \propCat{P}_{\mathsf k} \to \RelX{\mathsf k}$. Similarly, $\mathsf{St}(\cong)$ is the extension of the isomorphism shown in Corollary~\ref{cor:isopol}. The morphism $q$ is just the obvious quotient from $\CSFG$ to $\SAIHP$. The interesting part is provided by the morphism $F\colon \SAIHP \to \mathsf{St}(\AIHP{\field})$ defined in~\cite[\S 4.1]{BHPSZ-popl19}: take $\mathsf{T}$ as $\AIHP{\field}$ and $\mathsf{T}+\mathsf{X}$ as $\SAIHP$. By Theorem 30 in~\cite{BHPSZ-popl19}, since $\AIHP{\field}$ is compact closed, then $F$ is an isomorphism of props. To see that it is an isomorphism of ordered props, it is immediate to check that both $F$ and its inverse $G$ defined in~\cite[\S 4.1]{BHPSZ-popl19} preserves the order.
\end{proof}

We conclude by observing 
the semantics can be presented with intuitive sos rules.
Indeed, the same rules as in~\cite[\S 2]{BHPSZ-popl19}---interpreted over a field rather than the naturals---and:
\[\derivationRule{\scriptstyle x\mathrel{\geq} y}{(\greq, \bullet) \dtrans{x}{y} (\greq, \bullet)} \qquad (\One, \bullet) \dtrans{\bullet}{1} (\One, \bullet)\]
This diagrammatic language is, therefore, similar in flavour to traditional process calculi, and we call it the \emph{calculus of stateful polyhedral processes}. Theorem~\ref{thm:stateful} affirms that it expresses exactly the stateful polyhedral processes.

\subsection{Bounded Continuous Petri Nets}
\label{sec:petri}
Hereafter $\field$ is fixed to be the real numbers $\mathbb{R}$ and the set of non-negative reals is denoted by $\posreals = \{\,r\in\mathbb{R}\;|\;r\geq 0\,\}$.
A \emph{continuous} Petri net~\cite{DavidAlla10} differs from a (discrete) Petri net in that:
\begin{itemize}
\item markings are real valued -- that is, places hold a non-negative real number of tokens,
\item transitions can consume and produce non-negative real numbers of tokens,
\item transitions can be fired a non-negative real number amount of times -- for example a transition can be fired $0.5$ times, producing and consuming half the tokens.
\end{itemize}

\begin{definition}[Continuous Petri nets and their semantics]\label{def:Petri-net}
A Petri net $\mathcal{P}=(P,T,\pre{-},\post{-})$ consists of a finite set of places $P$, a finite set of transitions $T$, and functions
$\pre{-},\post{-}\from T\to \posreals^P$.
Given ${\mathbf{y}}, {\mathbf{z}}\in \posreals^P$, we write
${\mathbf{y}}\to{\mathbf{z}}$ if there exists ${\mathbf{t}}\in \posreals^T$
such that $\pre{{\mathbf{t}}}\leq {\mathbf{y}}$ and
${\mathbf{z}}={\mathbf{y}}-\pre{{\mathbf{t}}}+\post{{\mathbf{t}}}$,
where $\pre{{\mathbf{t}}}$ and $\post{{\mathbf{t}}}$ are the evident
liftings of $\pre{()}$ and $\post{()}$, e.g.\ $\pre{\mathbf{t}}(p) = \sum_{s\in T}\mathbf{t}(s)\cdot\pre{s}(p)$.
The (step) operational semantics of $\mathcal{P}$ is the relation $\osem{\mathcal{P}}=\{({\mathbf{y}}, {\mathbf{z}}) \mid {\mathbf{y}}\to{\mathbf{z}}\} \subseteq \posreals^P\times \posreals^P$.
\end{definition}

As for ordinary Petri nets, one can consider \emph{bounded nets}: each place has a maximum capacity $c\in \posreals\cup \{\top\}$: a place with capacity $\top$ is unbounded. The above definition is therefore extended with a boundary function $\mathbf{b} \in (\posreals\cup \{\top\})^P$ and the transition relation ${\mathbf{y}}\to{\mathbf{z}}$ is modified by additionally requiring that $\mathbf{y},\mathbf{z}  \leq \mathbf{b}$. Since $r \leq \top$ for all $r\in \posreals$, continuous Petri nets are instances of bounded continuous nets where every place is unbounded.

\medskip

To encode continuous Petri nets and their bounded variant as 
stateful polyhedral processes,
it is convenient to introduce syntactic sugar: the circuit below left is an adder that takes only positive values as inputs, the central circuit models a place, and the last one a transition.
\[ \input{figures/petri/positive-add} \coloneqq \input{figures/petri/positive-add-2} \qquad \input{figures/petri/place} \coloneqq \input{figures/petri/place-2} \quad
\input{figures/petri/transition} \coloneqq \input{figures/petri/transition-2}\]
Observe that for $\input{figures/petri/place}$, it is essential the use of
$\input{figures/petri/positive-add}$ and and its opposite $\input{figures/petri/positive-add-op}$. Indeed, replacing them by ordinary adders $\WmultT$ and $\WcomultT$, would give as semantics the whole space $\mathbb{R}^2\times \mathbb{R}^2$, while as defined above
$\dsems{\input{figures/petri/place}}=(1,\{(m,i,m-o+i,o) \mid i,o,m \in \posreals,\, o \geq m\})$, modelling exactly the expected behaviour of a place.
In the diagrams below $\tscalar{c}$ is either a scalar $r\in \posreals$ or $\top= \BcounitT \BunitT$.
\[\begin{tikzpicture}
	\begin{pgfonlayer}{nodelayer}
		\node [style=reg] (0) at (0, 0) {$x$};
		\node [style=none] (1) at (-0.5, 0) {};
		\node [style=none] (2) at (0.5, 0) {};
		\node [style=none] (3) at (0.25, -0.25) {$c$};
	\end{pgfonlayer}
	\begin{pgfonlayer}{edgelayer}
		\draw (1.center) to (0);
		\draw (0) to (2.center);
	\end{pgfonlayer}
\end{tikzpicture} \coloneqq \input{figures/petri/cap-reg-2} \qquad \input{figures/petri/cap-place} \coloneqq \input{figures/petri/cap-place-2}\]
The leftmost diagram models a buffer with capacity $c$, while the rightmost a place with capacity $c$. Since $\top= \BcounitT \BunitT$, it holds that $\begin{tikzpicture}
	\begin{pgfonlayer}{nodelayer}
		\node [style=reg] (0) at (0, 0) {$x$};
		\node [style=none] (1) at (-0.5, 0) {};
		\node [style=none] (2) at (0.5, 0) {};
		\node [style=none] (3) at (0.25, -0.25) {\scriptsize $\top$};
	\end{pgfonlayer}
	\begin{pgfonlayer}{edgelayer}
		\draw (1.center) to (0);
		\draw (0) to (2.center);
	\end{pgfonlayer}
\end{tikzpicture} \myeq{\SAIHP} \circuitXT$ and $\input{figures/petri/cap-place-top} \myeq{\SAIHP} \input{figures/petri/place}$. 

\medskip


By choosing an ordering on places and transitions, the functions $\pre{-},\post{-}\from T\to \posreals^P$ can be regarded as $\posreals$-matrices of type $|T| \to |P|$ and thus can be encoded as $\FC$ circuits, hereafter denoted by respectively $W^-$ and $W^+$. The ordering on $P$ also makes the boundary function $b$ a vector $\begin{pmatrix*} c_1 \\ \vdots \\ c_{|P|} \end{pmatrix*}$ in $(\posreals\cup \{\top\})^{|P|}$: we write \input{figures/petri/cap-place-b} for \input{figures/petri/cap-place-b-2}.
Any bounded Continuous Petri net $\mathcal{P}$ can be encoded as the following circuit $d_{\mathcal{P}} \colon 0 \to 0$ in $\CSFG$.
$$
\tikzset{x=1em, y=2.1ex}
\InputIfFileExists{encoding-Petri-a.tikz}{}{\input{./tikz/encoding-Petri-a.tikz}}
\tikzset{x=1em, y=1.5ex}
$$ 
%
It is easy to show that $\mathcal{P}$ and $d_\mathcal{P}$ have the same semantics.

\begin{proposition}\label{prop:encPT1}
For all bounded continuous Petri net $\mathcal{P}$, $\osem{\mathcal{P}} \sim \dsems{d_\mathcal{P}}$.
\end{proposition}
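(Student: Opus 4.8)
The plan is to compute $\dsems{d_\mathcal{P}}$ explicitly by exploiting the functoriality of $\dsemsO$, and then to identify the resulting relation, up to $\sim$, with $\osem{\mathcal{P}}$. First I would observe that $d_\mathcal{P}\colon 0\to 0$ is assembled by sequential and monoidal composition from: the $|P|$ place gadgets (each contributing exactly one register), the $|T|$ transition gadgets, and the two $\FC$-circuits $W^-$ and $W^+$ encoding the matrices $\pre{-}$ and $\post{-}$, all wired together by copying and addition. Since $\dsemsO$ is a morphism of (ordered) props, $\dsems{d_\mathcal{P}}=(s,R)$ with $s$ the total number of registers, i.e.\ $s=|P|$, one per place in the chosen ordering. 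As $d_\mathcal{P}$ has empty left and right boundaries, $R\subseteq \field^{|P|}\times\field^0\times\field^{|P|}\times\field^0$ is essentially a relation on pairs of markings $(\mathbf{y},\mathbf{z})\in\posreals^{|P|}\times\posreals^{|P|}$.

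Next I would unfold $R$. Recall from the text that an uncapped place has semantics $(1,\{(m,i,\,m-o+i,\,o)\})$ subject to the causal constraint relating $o$ and $m$, that a transition gadget forces its firing value to lie in $\posreals$, and that the capacity gadget constrains a register's value to the interval $[0,c]$ (the bound disappearing when $c=\top$). Threading these through the composition, the internal wires carrying the transition outputs supply a single firing vector $\mathbf{t}\in\posreals^{|T|}$; the circuit $W^-$ distributes $W^-\mathbf{t}=\pre{\mathbf{t}}$ as the consumption, and $W^+$ distributes $W^+\mathbf{t}=\post{\mathbf{t}}$ as the production, at each place. The place semantics then yields $\mathbf{z}=\mathbf{y}-\pre{\mathbf{t}}+\post{\mathbf{t}}$ together with $\pre{\mathbf{t}}\le\mathbf{y}$, and the capacity gadgets add $\mathbf{y},\mathbf{z}\le\mathbf{b}$. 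Hence
\[ R = \{\,(\mathbf{y},\mathbf{z}) \mid \exists\, \mathbf{t}\in\posreals^{|T|}.\ \pre{\mathbf{t}}\le\mathbf{y},\ \mathbf{z}=\mathbf{y}-\pre{\mathbf{t}}+\post{\mathbf{t}},\ \mathbf{y},\mathbf{z}\le\mathbf{b}\,\}, \]
which is precisely the bounded transition relation $\mathbf{y}\to\mathbf{z}$ of Definition~\ref{def:Petri-net}, so $R=\osem{\mathcal{P}}$ as subsets of $\posreals^{|P|}\times\posreals^{|P|}$.

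Finally I would account for $\sim$. The operational semantics $\osem{\mathcal{P}}$ is defined intrinsically, without reference to an ordering of $P$, whereas $\dsems{d_\mathcal{P}}$ fixes such an ordering in order to index the $|P|$ registers and to present $W^\pm$ as matrices. Regarding $\osem{\mathcal{P}}$ as the stateful process $(|P|,\{(\mathbf{y},\nullvec,\mathbf{z},\nullvec)\mid \mathbf{y}\to\mathbf{z}\})$, any two choices of ordering differ by a permutation of the registers, which is exactly what the equivalence $\sim$ of $\mathsf{St}(\RelX{\mathsf k})$ quotients out. Thus the equality of relations established above upgrades to $\osem{\mathcal{P}}\sim\dsems{d_\mathcal{P}}$.

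I expect the main obstacle to be the bookkeeping in the middle step: one must verify that the copy/add wiring feeding $W^-$ and $W^+$ genuinely produces a \emph{single} shared existential firing vector $\mathbf{t}$ common to all places --- rather than independent per-place witnesses --- and that the non-negativity enforced locally at each transition gadget survives the passage through $W^\pm$. Making this precise amounts to decomposing $d_\mathcal{P}$ into explicitly labelled sub-diagrams and computing their composite relation step by step, which is routine but notationally heavy.
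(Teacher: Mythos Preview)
Your proposal is correct and follows essentially the same approach as the paper: compute $\dsems{d_\mathcal{P}}$ by decomposing the circuit into the transition part (yielding a shared $\mathbf{t}\in\posreals^{|T|}$), the $W^\pm$ matrices, and the bounded place gadgets, then identify the resulting relation with the step semantics and invoke $\sim$ to absorb the chosen ordering on $P$. The paper's proof merely makes your middle step explicit by literally cutting $d_\mathcal{P}$ into three sequential pieces and composing their semantics one by one, which is exactly the ``notationally heavy'' bookkeeping you anticipate.
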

\begin{proof}[Proof of Proposition~\ref{prop:encPT1}]
In order to compute $\dsems{d_\mathcal{P}}$, it is convenient to cut $d_\mathcal{P}$ in three parts. The leftmost part of $d_\mathcal{P}$ has the following semantics
\[ \dsems{
\tikzset{x=1em, y=2.1ex}
\InputIfFileExists{net-part-1.tikz}{}{\input{./tikz/net-part-1.tikz}}
\tikzset{x=1em, y=1.5ex}
} = \quad (0,\{(\nullvec, \,\nullvec, \, \nullvec, \, \begin{pmatrix*} t \\ t \end{pmatrix*}) \mid t \in \posreals^{|T|}\})\]
The central part
\[\dsems{
\tikzset{x=1em, y=2.1ex}
\InputIfFileExists{net-part-2.tikz}{}{\input{./tikz/net-part-2.tikz}}
\tikzset{x=1em, y=1.5ex}
} = \quad (0\{(\nullvec, \begin{pmatrix*} x_1 \\ x_2 \end{pmatrix*}, \nullvec, \,\begin{pmatrix*} y_1 \\ y_2 \end{pmatrix*}) \mid W^-x_1=y_1, \, W^+x_2=y_2\})\]
By definition of $\dsemsO$, the composition of the two semantics above is the pair \[(0,\{(\nullvec, \,\nullvec, \, \nullvec, \, \begin{pmatrix*} y_1 \\ y_2 \end{pmatrix*}) \mid \exists t \in \posreals^{|T|} \text{ s.t. } W^-t=y_1, \, W^+t=y_2 \})\]

The right-most part
\[\dsem{
\tikzset{x=1em, y=2.1ex}
\InputIfFileExists{net-part-3.tikz}{}{\input{./tikz/net-part-3.tikz}}
\tikzset{x=1em, y=1.5ex}
} =\quad \{ (y,\begin{pmatrix*} i \\ o \end{pmatrix*}, y-o+i, \nullvec) \mid i,o,y \in \posreals^{|T|},\, o \geq y\}\]
The semantics of rightmost part is the pair
\[\dsems{
\tikzset{x=1em, y=2.1ex}
\InputIfFileExists{net-part-3c.tikz}{}{\input{./tikz/net-part-3c.tikz}}
\tikzset{x=1em, y=1.5ex}
} =\quad (|P|, \{ (y,\begin{pmatrix*} o \\ i \end{pmatrix*}, y-o+i ,\nullvec ) \mid i,o,y \in \posreals^{|P|},\, o \geq y, \, y-o+i\leq b, \, y\leq b\})\]
By composing everything we obtain $(|P|, \{ (y,\nullvec, y-i+o ,\nullvec ) \mid \exists t \in \posreals^{|T|} \text{ s.t. }  i,o,y \in \posreals^{|P|},\, o \geq y, \, y-o+i\leq b, \, y\leq b, \, W^-t=o, \, W^+t=i\})$
that is
$\dsems{d_\mathcal{P}} = (|P|,\{ (y,\nullvec, z ,\nullvec ) \mid \exists t \in \posreals^{|T|} \text{ s.t. } W^-t \geq y, \, y-W^-t+W^+t=z\leq b, \, y\leq b\})$.

Since the equivalence is stated modulo $\sim$, then it is safe to fix an ordering on $P$ and $T$. Thus, rather than considering $(y,z)\in \osem{P}$ as functions in $\posreals^{P}$ they can be regarded as vectors in $\posreals^{|P|}$. One can thus conclude by observing that $y \to z$ if and only if there exists $t \in \posreals^{|T|}$ such that  $W^-t \geq y$, $y-W^-t+W^+t=z$ and $y,z\leq b$.
\end{proof}



%

\section{Conclusions and Future Work}\label{sec:conc}
We have introduced the theories of polyhedral cones and the one of polyhedra. In other words, we have identified suitable sets of generators and axioms for which we proved completeness and expressivity. As side results, we get an inductive definition of the notion of polar cone, as well as an understanding of Weyl-Minkowski theorem as a normal form result.

As shown by Example \ref{ex:fnet}, the theory of polyhedra allows us to  represent networks with bounded resources, not expressible in $\IH{}$, and to manipulate them as symbolic expressions.

Indeed the passage from linear relations to polyhedra is a reflection of the fact that, operationally, we are able to consider several patterns of computations important in computer science, as opposed to purely linear patterns, traditionally studied in system/control theory.

For instance, as shown in \S\ref{app:state}, the addition to $\AIHP{}$ of a single generator, $\circuitXT$, directly gives us a concurrent extension of the signal flow calculus \cite{BaezErbele-CategoriesInControl,Bonchi2015}, introduced as a compositional account for linear dynamical systems, that is expressive enough to encode continuous Petri nets \cite{DavidAlla10}.
%



\bibliography{bibliography}

\appendix

\section{More details on the props $\propCat{PC}_{\mathsf k}$ and $\propCat{P}_{\mathsf k}$}\label{app:standardproperties}
In this appendix we show that monoidal product and composition as given in $\propCat{PC}_{\mathsf k}$ and $\propCat{P}_{\mathsf k}$ are well defined. For $\LinRel{\field}$, see e.g.~\cite{ZanasiThesis}. Our proof is based on four well-known operations on polyhedra and polyhedral cones. We illustrate the proof for polyhedral cones, as the one for polyhedra is completely analogous.

\begin{lemma}[Projection]
    \label{lm:polyproj}
    Let $C \subseteq \mathsf{k}^n$ be a polyhedral cone, then \[C' = \{ v \mid \exists x . \begin{pmatrix} v \\ x \end{pmatrix} \in C \}\] is a polyhedral cone.
\end{lemma}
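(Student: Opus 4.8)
The plan is to prove the lemma by Fourier--Motzkin elimination, applied to one coordinate of $x$ at a time. Write $C = \{ w \in \mathsf{k}^{n} \mid A w \geq 0 \}$ for some $p \times n$ matrix $A$, and suppose the coordinates are split as $w = \begin{pmatrix} v \\ x \end{pmatrix}$ with $x \in \mathsf{k}^{r}$. First I would reduce to the case $r = 1$: a full projection is the composite of $r$ single-coordinate projections, and $C' = C$ when $r = 0$, so an induction on $r$ suffices once the single-variable step is established.

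For the single-variable step, let $x$ be the coordinate to be eliminated and let $a_i$ denote its coefficient in the $i$-th inequality, so that the $i$-th constraint reads $b_i v + a_i x \geq 0$, where $b_i$ is the corresponding row acting on $v$. I would partition the row indices into $P = \{ i \mid a_i > 0 \}$, $N = \{ i \mid a_i < 0 \}$ and $Z = \{ i \mid a_i = 0 \}$. Dividing each row in $P$ by $a_i$ and each row in $N$ by $-a_i$ (legal since $\mathsf{k}$ is an ordered field, so positive elements are invertible and these rescalings preserve the solution set), the constraints become a lower bound $x \geq \ell_j(v)$ for each $j \in N$, an upper bound $x \leq u_i(v)$ for each $i \in P$, and $v$-only constraints $b_k v \geq 0$ for $k \in Z$, where each $\ell_j$ and $u_i$ is linear in $v$.

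The key step, where the ordered-field hypothesis does the real work, is the observation that for a fixed $v$ satisfying the $Z$-constraints there exists $x \in \mathsf{k}$ with $\ell_j(v) \leq x \leq u_i(v)$ for all $i \in P$, $j \in N$ if and only if $\ell_j(v) \leq u_i(v)$ for every pair $(i,j) \in P \times N$ (if $P$ or $N$ is empty, one side is unbounded and such an $x$ always exists, so no new inequality is generated). Hence $C'$ is cut out by the finite system consisting of $b_k v \geq 0$ for $k \in Z$ together with $u_i(v) - \ell_j(v) \geq 0$ for $(i,j) \in P \times N$. Since each new inequality is a nonnegative combination of the original homogeneous inequalities, it is itself homogeneous, so the output has the form $\{ v \mid B v \geq 0 \}$, i.e. a polyhedral cone.

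To assemble, one inclusion is immediate because every inequality of the new system is a consequence of $A w \geq 0$; for the reverse, any $v$ satisfying the new system admits a witness $x$ by the interval argument above. I expect the main obstacle to be not any single computation but keeping the case analysis clean --- in particular handling empty $P$ or $N$ (so that $x$ may be taken arbitrarily large or small) and verifying that homogeneity is preserved, so that the result is genuinely a cone rather than a general polyhedron. For the polyhedron variant one simply permits affine right-hand sides throughout, and the same elimination goes through verbatim.
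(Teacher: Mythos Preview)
Your argument is correct: Fourier--Motzkin elimination, applied one coordinate at a time, is the standard proof that projections of polyhedral cones are polyhedral cones, and your reduction to the single-variable case together with the interval argument is exactly how it goes. One small slip worth fixing: with the inequalities written as $b_i v + a_i x \geq 0$, the indices $i \in P$ (where $a_i > 0$) yield \emph{lower} bounds $x \geq -(b_i/a_i)v$ and the indices $j \in N$ yield \emph{upper} bounds, so your labelling of which side gives $\ell$ and which gives $u$ is reversed. This does not affect the argument, since the pairwise comparison $\ell \leq u$ and the resulting system $Bv \geq 0$ come out the same.

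As for comparison with the paper: the paper does not actually prove this lemma. It is listed in Appendix~\ref{app:standardproperties} among the ``well-known results'' used to show that $\propCat{PC}_{\mathsf k}$ is a well-defined prop, and is simply stated without proof. The paper does give a \emph{diagrammatic} version of Fourier--Motzkin elimination (Proposition~\ref{prop:FMElimitation}), but that is carried out inside the equational theory $\IHP{}$ rather than semantically; your semantic argument is what one would supply to justify the appendix lemma directly.
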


\begin{lemma}[Intersection]
    \label{lm:polyintr}
    Let $C_1, C_2 \subseteq \mathsf{k}^n$ be polyhedral cones, then \[C_1 \cap C_2  = \{ v \mid v \in C_1, v \in C_2 \}\] is a polyhedral cone.
\end{lemma}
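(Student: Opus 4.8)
The plan is to work directly with the defining inequality representation from Definition~\ref{def:pc}. Since $C_1$ and $C_2$ are polyhedral cones in $\mathsf{k}^n$, there are matrices $A_1$ and $A_2$, with $n$ columns and (say) $p_1$ and $p_2$ rows respectively, such that $C_1 = \{\, x \in \mathsf{k}^n \mid A_1 x \geq 0 \,\}$ and $C_2 = \{\, x \in \mathsf{k}^n \mid A_2 x \geq 0 \,\}$.

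The key observation is that intersecting cones corresponds to conjoining their systems of homogeneous linear inequalities. First I would form the block matrix
\[
A = \begin{pmatrix} A_1 \\ A_2 \end{pmatrix},
\]
which has $n$ columns and $p_1 + p_2$ rows. Because the coordinatewise order on $\mathsf{k}^{p_1 + p_2}$ decomposes into the orders on the two blocks, one has $A x \geq 0$ if and only if both $A_1 x \geq 0$ and $A_2 x \geq 0$. Consequently
\[
C_1 \cap C_2 = \{\, x \in \mathsf{k}^n \mid A x \geq 0 \,\},
\]
which is by definition a polyhedral cone, completing the argument.

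There is no substantial obstacle here: the statement follows immediately once both cones are presented by their constraint matrices, since stacking the matrices realises the intersection. The only point worth stating explicitly is the equivalence between the single condition $Ax \geq 0$ and the pair of conditions, which is just the componentwise reading of the order. The analogous claim for polyhedra in Definition~\ref{def:pc} would be handled in exactly the same way, stacking both the matrices $A_i$ and the offset vectors $b_i$.
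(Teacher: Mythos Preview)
Your argument is correct: stacking the constraint matrices $A_1$ and $A_2$ into a single block matrix $A$ and using that $Ax \geq 0$ holds componentwise is exactly the standard way to exhibit the intersection as a polyhedral cone. The paper itself states this lemma without proof, treating it as a well-known fact used only to justify that $\propCat{PC}_{\mathsf k}$ is closed under composition; your write-up supplies precisely the routine verification one would expect here.
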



\begin{lemma}[Extension]
    \label{lm:polyext}
    Let $C \subseteq \mathsf{k}^n$ be a polyhedral cone, then \[ \tilde{C} = \{ \begin{pmatrix*} v \\ x \end{pmatrix*} \in \mathsf{k}^{n+1} \mid v \in C \} \] is a polyhedral cone.

We say that $\tilde{C}_n$ is the $n$-extension of $C$ to denote $n$ repeated applications of $\tilde{(\cdot)}$ to $C$.
\end{lemma}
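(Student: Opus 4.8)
The plan is to unfold the definition of polyhedral cone from Definition~\ref{def:pc} and simply extend the defining matrix by a zero column for the new, unconstrained coordinate. Since $C \subseteq \mathsf{k}^n$ is a polyhedral cone, there is a $p \times n$ matrix $A$ with $C = \{ v \in \mathsf{k}^n \mid Av \geq 0 \}$. The whole content of the lemma is that the fresh variable $x$ plays no role in any constraint, so it should appear with coefficient $0$ everywhere.

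Concretely, I would define the $p \times (n+1)$ matrix $B = \begin{pmatrix} A & 0 \end{pmatrix}$, obtained by appending to $A$ a single zero column (a zero vector of length $p$). For any $\binom{v}{x} \in \mathsf{k}^{n+1}$ one computes $B\binom{v}{x} = Av + 0\cdot x = Av$, so that $B\binom{v}{x} \geq 0$ holds if and only if $Av \geq 0$, i.e.\ if and only if $v \in C$. This immediately yields
\[ \tilde{C} = \left\{ \binom{v}{x} \in \mathsf{k}^{n+1} \;\middle|\; B\binom{v}{x} \geq 0 \right\}, \]
which exhibits $\tilde{C}$ as the solution set of a finite system of homogeneous linear inequalities, hence as a polyhedral cone in the sense of Definition~\ref{def:pc}.

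There is no genuine obstacle here: the only thing to observe is that adding an unconstrained variable corresponds exactly to inserting a zero column into the constraint matrix, which preserves the defining form. The claimed $n$-extension $\tilde{C}_n$ is then obtained by iterating this construction $n$ times, each step appending one further zero column; alternatively one could append an $n$-block of zero columns in one stroke and argue identically. Finally, the analogous statement for polyhedra (needed for $\propCat{P}_{\mathsf k}$) follows by the same argument, merely replacing the homogeneous system $Av \geq 0$ by the affine one $Av + b \geq 0$ and appending zero columns to $A$ while leaving $b$ unchanged. The substantive work for well-definedness lies instead in the companion lemmas on \emph{projection} and \emph{intersection}, where the cone structure must genuinely be reconstructed.
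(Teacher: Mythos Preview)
Your proof is correct and is exactly the standard argument one would give. The paper itself states this lemma without proof (it is listed among a group of ``well-known results'' in Appendix~\ref{app:standardproperties}, alongside projection, intersection, and permutation, none of which receive a proof there), so there is nothing to compare against; your construction of $B = \begin{pmatrix} A & 0 \end{pmatrix}$ is precisely what the authors would have written had they spelled it out.
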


\begin{lemma}[Permutation]
    \label{lm:polyperm}
    Let $C \subseteq \mathsf{k}^n$ be a polyhedral cone and $\pi \colon n \rightarrow n$ a permutation, then \[ \pi(C) = \{ v \mid \pi(v) \in C \} = \{ \pi(v) \mid v \in C \} \] is a polyhedral cone.
\end{lemma}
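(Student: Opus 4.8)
The plan is to reduce everything to the defining inequality of a polyhedral cone and to exploit that a permutation of coordinates is an \emph{invertible} linear map. By Definition~\ref{def:pc}, write $C = \{\, v \in \mathsf{k}^n \mid Av \geq 0 \,\}$ for some matrix $A$, and let $P \in \mathsf{k}^{n\times n}$ be the permutation matrix that realises $\pi$ as a linear map, so that $\pi(v) = Pv$. The facts I would use are that $P$ is invertible and that its inverse $P^{-1} = P^{\top}$ is again a permutation matrix; in particular, multiplying a matrix on the right by $P$ or $P^{-1}$ merely permutes its columns, hence keeps us inside the class of matrices appearing in Definition~\ref{def:pc}.

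First I would treat the description $\{\, v \mid \pi(v)\in C \,\}$. Since $\pi(v) \in C$ iff $A(Pv) \geq 0$ iff $(AP)v \geq 0$, we obtain
\[
\{\, v \mid \pi(v)\in C \,\} \;=\; \{\, v \mid (AP)v \geq 0 \,\},
\]
which is manifestly a polyhedral cone with defining matrix $AP$. Next, for the description $\{\, \pi(v) \mid v \in C \,\}$ I would perform the change of variables $w = Pv$, i.e.\ $v = P^{-1}w$, which is legitimate because $v \mapsto Pv$ is a bijection of $\mathsf{k}^n$. Then $w \in \{\pi(v)\mid v \in C\}$ iff $A(P^{-1}w) \geq 0$ iff $(AP^{-1})w \geq 0$, so
\[
\{\, \pi(v) \mid v \in C \,\} \;=\; \{\, w \mid (AP^{-1})w \geq 0 \,\},
\]
again a polyhedral cone, now with defining matrix $AP^{-1}$.

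Both displayed sets are therefore polyhedral cones: they are, respectively, the preimage and the image of $C$ under the coordinate permutation $\pi$, and this is exactly what the well-definedness arguments invoking this lemma require. I do not expect any genuine obstacle: the entire content is the harmless change of variables justified by the invertibility of $P$, together with the bookkeeping observation that right-multiplication of the defining matrix by a permutation matrix (equivalently, reordering its columns) preserves the matrix format of Definition~\ref{def:pc}. The only point worth stating carefully is that $P^{-1}=P^{\top}$ is itself a permutation matrix, so that $AP$ and $AP^{-1}$ are legitimate coefficient matrices of the required shape.
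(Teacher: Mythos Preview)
Your argument is correct and complete for what is actually needed: you show that both the preimage $\{\,v \mid \pi(v)\in C\,\}$ and the image $\{\,\pi(v)\mid v\in C\,\}$ are polyhedral cones, with defining matrices $AP$ and $AP^{-1}$ respectively. The paper does not supply a proof of this lemma; it is listed among the ``well-known results'' in Appendix~\ref{app:standardproperties} and stated without argument, so there is nothing to compare against.

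One remark worth making explicit: the lemma as stated asserts that the two sets are \emph{equal}, but you (rightly) do not attempt to prove this, because in general they are not---one is $P^{-1}(C)$ and the other is $P(C)$. The equality would require $P^2(C)=C$, which fails for a generic cone and permutation. This is harmless for the paper's purposes, since the lemma is only invoked (in the proof of Lemma~\ref{lm:polyseqc}) to justify that reordering coordinates preserves the property of being a polyhedral cone, and your argument establishes exactly that for either reading.
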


With this four operations, it is easy to show that composition and tensor product of two polyhedral cones are actually polyhedral cones.

\begin{lemma}[Tensor product]
    \label{lm:polytensor}
    Let $C_1 \subseteq \mathsf{k}^{n + m}$, $C_2 \subseteq \mathsf{k}^{p + l}$ be polyhedral cones, then \[C_1 \oplus C_2  = \{ (\begin{pmatrix*} v_1 \\ v_2 \end{pmatrix*}, \begin{pmatrix*} u_1 \\ u_2 \end{pmatrix*}) \in \mathsf{k}^{n+m} \times \mathsf{k}^{p+l} \mid (v_1, u_1) \in C_1, (v_2, u_2) \in C_2 \} \] is a polyhedral cone.
\end{lemma}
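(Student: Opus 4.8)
The plan is to build $C_1 \oplus C_2$ out of the three primitive operations already shown to preserve polyhedral cones: Extension (Lemma~\ref{lm:polyext}), Permutation (Lemma~\ref{lm:polyperm}) and Intersection (Lemma~\ref{lm:polyintr}). Reading $C_1 \subseteq \mathsf{k}^{n+m}$ as constraints on a block of coordinates $(v_1,u_1)$ and $C_2 \subseteq \mathsf{k}^{p+l}$ as constraints on a block $(v_2,u_2)$, the target set lives in the $(n+m+p+l)$-dimensional space on coordinates $(v_1,v_2,u_1,u_2)$ and is cut out by requiring $(v_1,u_1)\in C_1$ and $(v_2,u_2)\in C_2$ simultaneously, with no interaction between the two blocks.

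First I would embed $C_1$ into the large space. Applying the Extension lemma $p+l$ times turns $C_1$ into a polyhedral cone in $\mathsf{k}^{n+m+p+l}$ whose first $n+m$ coordinates are constrained to lie in $C_1$ and whose remaining $p+l$ coordinates are free; a single application of the Permutation lemma then reshuffles the coordinates into the order $(v_1,v_2,u_1,u_2)$, yielding a polyhedral cone $\bar C_1 = \{(v_1,v_2,u_1,u_2) \mid (v_1,u_1)\in C_1\}$. Symmetrically, extending $C_2$ by $n+m$ free coordinates and permuting gives a polyhedral cone $\bar C_2 = \{(v_1,v_2,u_1,u_2) \mid (v_2,u_2)\in C_2\}$ in the same ambient space.

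Next I would intersect: by the Intersection lemma, $\bar C_1 \cap \bar C_2$ is a polyhedral cone, and by construction this intersection is precisely the set of tuples $(v_1,v_2,u_1,u_2)$ with $(v_1,u_1)\in C_1$ and $(v_2,u_2)\in C_2$, i.e.\ $C_1 \oplus C_2$. One final coordinate permutation can be inserted if one wants the coordinates grouped exactly as $(\binom{v_1}{v_2},\binom{u_1}{u_2})$ in the statement. This completes the argument, and the same recipe (with affine inequalities in place of homogeneous ones, using the polyhedral versions of the four operations) handles the analogous claim for $\propCat{P}_{\mathsf k}$.

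I do not expect any genuine mathematical obstacle. Equivalently, in matrix terms: if $C_1 = \{w \mid A_1 w \ge 0\}$ and $C_2 = \{w \mid A_2 w \ge 0\}$, then $C_1 \oplus C_2$ is the solution set of the block-diagonal system $\operatorname{diag}(A_1,A_2)$ acting on a suitably reordered vector, manifestly a polyhedral cone. The only delicate point is the bookkeeping: one must track carefully which coordinates are held free at each extension step and record the exact permutations used, so that the final cone is presented on the coordinate ordering demanded by the definition of $\oplus$.
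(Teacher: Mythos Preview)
Your argument is correct, but the paper's own proof is exactly the one-line matrix observation you relegate to an aside at the end: it simply writes $C_1 \oplus C_2 = \{\, x \mid \begin{psmallmatrix} A_1 & 0 \\ 0 & A_2 \end{psmallmatrix} x \ge 0 \,\}$ and stops. Your main route through Extension, Permutation and Intersection is sound and makes the coordinate bookkeeping explicit (which the paper's proof glosses over, since strictly speaking a permutation is needed to match the interleaved ordering in the statement), but it is heavier than necessary here. The paper reserves that modular style for the sequential-composition lemma, where Projection is genuinely required; for the tensor product the direct block-diagonal description already suffices.
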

\begin{proof}
Let $C_1 = \{ x \in \mathsf{k}^{n + m} \mid A_1x \geq 0 \}$ and $C_2 = \{ x \in \mathsf{k}^{p + l} \mid A_2x \geq 0 \}$ for some matrices $A_1, A_2$. 

Then $C_1 \oplus C_2 = \{ x \in \mathsf{k}^{(n+m) + (p+l)} \mid \begin{pmatrix*} A_1 & 0 \\ 0 & A_2 \end{pmatrix*}x \geq 0 \}$
\end{proof}

\begin{lemma}[Sequential composition]
    \label{lm:polyseqc}
    Let $C_1 \subseteq \mathsf{k}^{n} \times \mathsf{k}^{m}$, $C_2 \subseteq \mathsf{k}^{m} \times \mathsf{k}^{p}$ be polyhedral cones, then \[C_1 \scolon C_2  = \{ (u, v) \in \mathsf{k}^{n} \times \mathsf{k}^{p} \mid \exists w \in \mathsf{k}^m . (u, w) \in C_1, (w, v) \in C_2 \} \] is a polyhedral cone.
\end{lemma}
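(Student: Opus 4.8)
The plan is to express $C_1 \scolon C_2$ as the result of applying the four closure operations of Lemmas~\ref{lm:polyproj}--\ref{lm:polyperm} to $C_1$ and $C_2$, in the same spirit as the proof of Lemma~\ref{lm:polytensor}. The key idea is to first build, in the common space $\mathsf{k}^{n+m+p}$ of triples $(u,w,v)$, the intermediate cone
\[ D = \{ (u,w,v) \in \mathsf{k}^{n}\times\mathsf{k}^{m}\times\mathsf{k}^{p} \mid (u,w) \in C_1,\ (w,v) \in C_2 \}, \]
and then to project away the middle component $w$, since by definition $C_1 \scolon C_2 = \{ (u,v) \mid \exists w.\ (u,w,v) \in D \}$.

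First I would show that $D$ is a polyhedral cone. By Lemma~\ref{lm:polyext}, the $p$-extension of $C_1$ is the cone $\{(u,w,v) \mid (u,w) \in C_1\}$, where the extra coordinates $v$ are left free. Symmetrically, starting from $C_2 \subseteq \mathsf{k}^{m+p}$ I would take its $n$-extension and then apply Lemma~\ref{lm:polyperm} to reorder its coordinates, obtaining $\{(u,w,v) \mid (w,v) \in C_2\}$; the permutation is needed because Lemma~\ref{lm:polyext} appends the fresh coordinates at the end, whereas here the free coordinates $u$ must sit at the front. Both cones now live in $\mathsf{k}^{n+m+p}$ with the same variable ordering, so $D$ is their intersection and is a polyhedral cone by Lemma~\ref{lm:polyintr}.

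Finally I would recover $C_1 \scolon C_2$ from $D$. Since Lemma~\ref{lm:polyproj} eliminates the \emph{last} block of coordinates, I would first use Lemma~\ref{lm:polyperm} to move $w$ to the end, producing $\{(u,v,w) \mid (u,w,v)\in D\}$, and then apply Lemma~\ref{lm:polyproj} to project out $w$; the result is precisely $C_1 \scolon C_2$, which is therefore a polyhedral cone.

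A more direct variant, closer to Lemma~\ref{lm:polytensor}, avoids the extensions: if $C_1 = \{x \mid A_1 x \geq 0\}$ and $C_2 = \{x \mid A_2 x \geq 0\}$, then $D$ is cut out by the single system of inequalities obtained by juxtaposing those of $C_1$ and $C_2$ over $(u,w,v)$, padding each constraint with zero coefficients on the coordinates it does not mention; so $D$ is manifestly a polyhedral cone and the only genuine step left is the projection. In either formulation the main obstacle is not conceptual but a matter of bookkeeping: tracking which coordinate positions the Extension and Projection lemmas act on, and inserting the right permutations (Lemma~\ref{lm:polyperm}) so that the block structure of $u$, $w$, $v$ lines up before intersecting and again before projecting.
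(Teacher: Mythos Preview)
Your proposal is correct and follows essentially the same approach as the paper: extend $C_1$ and $C_2$ into the common ambient space $\mathsf{k}^{n+m+p}$, permute the coordinates of the extended $C_2$ so that the variable orderings match, intersect, and then project out $w$. If anything, you are more careful than the paper's own sketch, which leaves the final permutation (moving $w$ to the last block before invoking Lemma~\ref{lm:polyproj}) implicit.
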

\begin{proof}
Let $C_1 = \{ x \in \mathsf{k}^{n} \times \mathsf{k}^{m} \mid A_1x \geq 0 \}$ and $C_2 = \{ x \in \mathsf{k}^{m} \times \mathsf{k}^{p} \mid A_2x \geq 0 \}$ for some matrices $A_1, A_2$. 

Let $\tilde{C_1}_p \subseteq \mathsf{k}^n \times \mathsf{k}^m \times \mathsf{k}^p, \tilde{C_2}_n \subseteq \mathsf{k}^m \times \mathsf{k}^p \times \mathsf{k}^n$ and $\pi \colon m + p + n \rightarrow m + p + n$ which swaps the last $n$ coordinates with the first $m + p$ coordinates.

Then $C_1 \scolon C_2 = \{ (u, v) \mid \exists w . (u, w, v) \in \tilde{C_1}_p \wedge \pi(\tilde{C_2}_n) \}$ is a polyhedral cone.
\end{proof}

\section{Proofs of Section~\ref{sec:polyc}}

Before proving the soundness of the Fourier-Motzkin elimination within the theory $\IHP{\field}$, it is convenient to state a few derived laws.

\begin{lemma}
\input{figures/polyc/derived-laws}
\end{lemma}
\begin{proof}
\input{figures/polyc/derived-laws-proof}
\end{proof}

\begin{lemma}[Generalised spider]\label{lm:nmspider}
    \begin{equation*}
        \input{figures/polyc/gen-spider-def}
    \end{equation*}
\end{lemma}
\begin{proof}
\input{figures/polyc/gen-spider-proof}
\end{proof}

\subsection{An axiomatic proof of Fourier Motzkin Elimination}\label{app:FMelimination}
We have now all the tools to provide an axiomatic proof of Proposition~\ref{prop:FMElimitation}.

\begin{proof}[Proof of Proposition~\ref{prop:FMElimitation}]
Consider the polyhedral cone $C = \{ x \mid Ax \geq 0 \}$, which is represented by the following diagram of $\CDP$:
\input{figures/polyc/fm2.tex}
We assume that the circuit representing $A$ is in matrix form and it is made of the black comonoid structure followed by a circuit $A'$ and the white monoid structure.

Without loss of generalities, we consider the first $h$ rows of $A$ to have positive coefficients for $x_1$, the other $q$ rows to have negative coefficients for $x_1$ and the rest have a null coefficient.

\begin{remark}
    In the illustration of the algorithm that follows, all of the scalars $\scalarT$ will be such that $k \geq 0$. Negative scalars will be rewritten as $\begin{tikzpicture}
            \begin{pgfonlayer}{nodelayer}
                \node [style=antipode] (0) at (-0.25, 0) {};
                \node [style=reg] (1) at (0.5, 0) {$k$};
                \node [style=none] (2) at (-1, 0) {};
                \node [style=none] (3) at (1.25, 0) {};
            \end{pgfonlayer}
            \begin{pgfonlayer}{edgelayer}
                \draw (2.center) to (0);
                \draw (0) to (1);
                \draw (1) to (3.center);
            \end{pgfonlayer}
        \end{tikzpicture}
        $.
\end{remark}

Since the diagram is in matrix form, we can rewrite it to make explicit the coefficients multiplying $x_1$ in each row/inequality:

\input{figures/polyc/fm3}

\begin{remark}
    In the diagram above we abused some notation to better visualize it. In particular, notice that whenever two wires are labelled with the same $x_i$ they are linked together via the black comonoid structure.
\end{remark}

By virtue of the derived law~\eqref{eq:l1}, we can rewrite the diagram as:
\input{figures/polyc/fm4}
Now, by applying the axiom (P5) the diagram can be further rewritten as:

\input{figures/polyc/fm5}

By virtue of the compact closed structure, we \textit{bend} the wire linking each $A'_i$ as follows:
\input{figures/polyc/fm6}
 
Now we are ready to \textit{eliminate} the variable $x_1$ by \textit{closing} the corresponding wire with $\Bunit$:

    \input{figures/polyc/fm7}
    The diagram results in a bunch of bent wires which we can bend again by virtue of the compact closed structure. In particular we focus on the last $q-h$ wires (i.e. those with the $\leq$ relation):
    \input{figures/polyc/fm8}
Now we can apply Lemma~\ref{lm:nmspider} to obtain the diagram in which each inequality has been combined in the very same way it happens in the Fourier-Motzkin algorithm
\input{figures/polyc/fm9}

By exploiting the derived law of $\IH{\mathsf k}$: $\begin{tikzpicture}
	\begin{pgfonlayer}{nodelayer}
		\node [style=none] (131) at (-1.25, 0.25) {};
		\node [style=none] (132) at (-1.25, -0.25) {};
		\node [style=black] (133) at (-0.75, 0) {};
		\node [style=small box] (176) at (0, 0) {$A$};
		\node [style=none] (177) at (0.75, 0) {};
	\end{pgfonlayer}
	\begin{pgfonlayer}{edgelayer}
		\draw [bend left] (131.center) to (133);
		\draw [bend right] (132.center) to (133);
		\draw (176) to (133);
		\draw (177.center) to (176);
	\end{pgfonlayer}
\end{tikzpicture}
\myeq{\IH{}}  \begin{tikzpicture}
	\begin{pgfonlayer}{nodelayer}
		\node [style=small box] (131) at (-0.5, 0.5) {$A$};
		\node [style=small box] (132) at (-0.5, -0.5) {$A$};
		\node [style=black] (133) at (0, 0) {};
		\node [style=none] (177) at (0.5, 0) {};
		\node [style=none] (178) at (-1, 0.5) {};
		\node [style=none] (179) at (-1, -0.5) {};
	\end{pgfonlayer}
	\begin{pgfonlayer}{edgelayer}
		\draw [bend left] (131) to (133);
		\draw [bend right] (132) to (133);
		\draw (133) to (177.center);
		\draw (178.center) to (131);
		\draw (179.center) to (132);
	\end{pgfonlayer}
\end{tikzpicture}$
we can \textit{replicate} each $A_i$, obtaining the following diagram:
\input{figures/polyc/fm10}

Exploiting again the compact closed structure, we perform a bending of the wires at the point marked by the dotted line. 

What we obtain is the diagram representing $hq$ inequalities:
\input{figures/polyc/fm11}
Finally, by applying again the law~\eqref{eq:l1} we obtain the desired result 
\input{figures/polyc/fm12}
\end{proof}

\subsection{Proof of the First Normal Form Theorem}\label{app:polynf}
The Fourier Motzkin elimination is essential to prove Theorem~\ref{th:polynf}.
\begin{proof}[Proof of Theorem~\ref{th:polynf}]
\input{figures/polyc/pnf-proof}
\end{proof}

\subsection{Proofs for the polar operator}
In this appendix we report the proof of Proposition~\ref{th:polarpolar}. First we prove the third point, then the second and finally the first, which require some more work.

\begin{proof}[Proof of Proposition~\ref{th:polarpolar}.3]
    The proof goes by induction on the structure of $\FC$.
    For each generator $c$ of $\FC$, it is trivial by definition that $c^\circ$ is in $\FCop$.
    The inductive cases, are also trivial. For instance, $(c;d)^\circ$ is by definition $c^\circ ; d^\circ$. By inductive hypothesis both $c^\circ$ and $d^\circ$ are in $\FCop$, so $c^\circ ; d^\circ$ is in $\FCop$.
\end{proof}

\begin{proof}[Proof of Proposition~\ref{th:polarpolar}.2]
    \input{figures/polyc/polarpolar-proof}
\end{proof}

\begin{proof}[Proof of Proposition~\ref{th:polarpolar}.1]
It is enough to check that for all equations $l=r$ and inequations $l\subseteq r$ in Figures~\ref{fig:ih} and~\ref{fig:axiom:ihp}, it holds that $l^\circ \myeq{\IHP{}} r^\circ$ and $r^\circ \mysubeq{\IHP{}} l^\circ$. For the axioms in Figures~\ref{fig:ih}, this is completely trivial. For those in Figures~\ref{fig:axiom:ihp} the only challenging case is the spider axiom. We report its proof below.

    \input{figures/polyc/whitespider-proof}
\end{proof}

\subsection{Proof of the Completeness Theorem}\label{app:completenessproofs}
In this appendix we illustrate the last steps to prove the Theorem~\ref{th:compl2}.

\begin{definition}
    An arrow $c \colon n \rightarrow m$ of $\FC$ is in \textit{non-negative matrix form} if it is in matrix form and all of the scalars $\scalar$ appearing 
    in it are such that $k \geq 0$.
\end{definition}

\begin{lemma}
    \label{lm:movegeq}
    For each arrow $c \colon n \rightarrow m$ of $\FC$ in non-negative matrix form
    \[ \input{figures/polyc/movegeq-def} \]
\end{lemma}
\begin{proof}
    \input{figures/polyc/movegeq-proof}
\end{proof}

\begin{lemma} \label{lm:movegeq2}
    For each arrow $c \colon n \rightarrow m$ of $\FC$ in non-negative matrix form
    \[ \input{figures/polyc/movegeq2-def} \]
\end{lemma}
\begin{proof}
    It follows from Lemma~\ref{lm:movegeq} and from the following derived law of $\IH{}$
    \[ \input{figures/polyc/movegeq2-proof} \]
\end{proof}

\begin{lemma}
    \label{lm:cones}
    Given two arrows $n \xrightarrow{A} m$ and $l \xrightarrow{B} m$ of $\Mat_{\mathsf k}$ such that \[\mathsf{cone}(V_A) \subseteq \mathsf{cone}(V_B) \] where $V_A$ and $V_B$ are the (finite) sets of vectors 
    consisting, respectively, of the columns of $A$ and $B$, there exists a matrix $C$ with non-negative coefficients, such that the diagram
    \[
    \begin{tikzcd}
    n \arrow[rd, "A"] \arrow[d, "C"'] &   \\
    l \arrow[r, "B"]                  & m
    \end{tikzcd} \]
    commutes in $\Mat_{\mathsf k}$.
\end{lemma}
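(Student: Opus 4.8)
The plan is to produce the matrix $C$ directly, column by column, reading off non-negative coefficients from the hypothesis $\mathsf{cone}(V_A)\subseteq\mathsf{cone}(V_B)$. Write $a_1,\dots,a_n\in\mathsf k^m$ for the columns of $A$ and $b_1,\dots,b_l\in\mathsf k^m$ for the columns of $B$, so that $V_A=\{a_1,\dots,a_n\}$ and $V_B=\{b_1,\dots,b_l\}$. First I would record the standard identification $\mathsf{cone}(V_B)=\{\,Bz\mid z\in\mathsf k^l,\ z\geq 0\,\}$: by the definition of $\mathsf{cone}$, a vector is a conic combination $\sum_j z_j b_j$ of the columns of $B$ exactly when it is the image $Bz$ of some coefficient vector $z\geq 0$, and symmetrically for $A$.

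Next, I would note that each column $a_i$ already lies in $\mathsf{cone}(V_A)$, since $a_i=1\cdot a_i+\sum_{j\neq i}0\cdot a_j$ is a non-negative combination of the $a_j$. By the hypothesis $\mathsf{cone}(V_A)\subseteq\mathsf{cone}(V_B)$ we therefore obtain $a_i\in\mathsf{cone}(V_B)$ for every $i$, and by the identification above there exists $z^{(i)}\in\mathsf k^l$ with $z^{(i)}\geq 0$ and $Bz^{(i)}=a_i$.

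Then I would assemble $C$ as the $l\times n$ matrix whose $i$-th column is $z^{(i)}$. By construction every entry of $C$ is non-negative, and the $i$-th column of $BC$ is $Bz^{(i)}=a_i$, which is precisely the $i$-th column of $A$; hence $BC=A$. This is exactly the commutativity of the triangle, since the composite arrow $n\xrightarrow{C}l\xrightarrow{B}m$ denotes the matrix product $BC$ in $\Mat_{\mathsf k}$.

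I do not expect a genuine obstacle here: the statement is essentially a repackaging of the definition of conic containment together with the observation that conic combinations of the columns of $B$ are exactly the images of the non-negative orthant under $B$. The only point requiring a little care is the bookkeeping that a per-column choice of coefficient vectors $z^{(i)}$ glues into a single matrix $C$ and that matrix multiplication reproduces the column-wise equalities $Bz^{(i)}=a_i$; once the identification $\mathsf{cone}(V_B)=\{\,Bz\mid z\geq 0\,\}$ is in place, both are immediate.
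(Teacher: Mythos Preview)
Your proof is correct and follows essentially the same approach as the paper: observe that each column $a_i$ lies in $\mathsf{cone}(V_A)\subseteq\mathsf{cone}(V_B)$, extract non-negative coefficient vectors expressing each $a_i$ as a conic combination of the $b_j$, and assemble these as the columns of $C$ to obtain $A=BC$. The paper's argument is the same, just with slightly less explicit bookkeeping.
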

\begin{proof}
    Notice that $V_A \subseteq \mathsf{cone}(V_A)$ since each vector $a \in V_A$ can be obtained as a conical combination of the others where $a$ is multiplied by $1$ and the others by $0$.

    This means that if $\mathsf{cone}(V_A) \subseteq \mathsf{cone}(V_B)$ then $V_A \subseteq \mathsf{cone}(V_B)$.
    
    Now, if $V_A \subseteq \mathsf{cone}(V_B)$, each $a_i \in V_A$ can be obtained as \[a_i = \alpha_1^ib_1 + ... + \alpha_n^ib_n\], where $b_i \in V_B$ and each $\alpha_j^i \geq 0$.

    In other words, the matrix $A$ can be obtained as a multiplication of $B$ by a matrix $C$ with non-negative coefficients:

    \[ A = B \times C \]

    where the columns $c_i$ of $C$ are given by the coefficients $\alpha_1^i \ldots \alpha_n^i$, i.e. $c_i = \begin{pmatrix*} \alpha_1^i \\ \myvdots \\ \alpha_n^i \end{pmatrix*}$.
\end{proof}

Now we have all of the right tools to prove that whenever two finitely generated cones (and equivalently, polyhedral cones) are (denotationally) equal, then also their string diagrammatic representation as arrows of $\CDP$ are equal modulo the axioms of $\IHP{}$.

\begin{lemma}
    \label{th:compl1}
    Given two arrows $A, B \colon n \rightarrow m$ of $\FC$, such that \[\mathsf{cone}(V_A) \subseteq \mathsf{cone}(V_B) \] where $V_A$ and $V_B$ are the (finite) sets of vectors 
    consisting, respectively, of the columns of (the matrices denoting) $A$ and $B$, the following holds: 
    \[  \input{figures/polyc/compl1-def} \]
\end{lemma}
\begin{proof}
    \input{figures/polyc/compl1-proof}
\end{proof}

With the above lemma and the finitely generated normal form (Theorem~\ref{co:fingennf}) it is easy to prove completeness.

\begin{proof}[Proof of Theorem~\ref{th:compl2}]
    By Theorem~\ref{co:fingennf}, there exist $c', d'$ in finitely generated normal form, such that $c \myeq{\IHP{}} c'$ and $d \myeq{\IHP{}} d'$. 

    Thus $\dsemp{c'} \subseteq \dsemp{d'}$ and we can apply Lemma~\ref{th:compl1}.
\end{proof}

\subsection{Proof for expressiveness}

\begin{proof}[Proof of Proposition~\ref{th:ihpiso}]
\input{figures/poly/ihpiso-proof}
%
For the viceversa we proceed by induction on $\CDP$. For the base cases, we have to consider the generators of $\CDP$. The semantics of all the generators in~\eqref{eq:SFcalculusSyntax1}, \eqref{eq:SFcalculusSyntax2} are linear relations and thus, polyhedral cones. For $\greq$
observe that
    \begin{align*}
        \dsemp{\greq} &=  \\
        &= \{ (x,y) \mid x, y \in \mathsf{k}^n, x \geq y \} \\
        &= \{ (x,y) \mid x, y \in \mathsf{k}^n, x - y \geq 0 \} \\
        &= \{ (x,y) \mid x, y \in \mathsf{k}^n, \begin{pmatrix} 1 & -1 \end{pmatrix} \begin{pmatrix} x \\ y \end{pmatrix} \geq 0 \}
    \end{align*}
    which is a polyhedral cone by definition.

For the inductive case, we consider $c \scolon d$ since the proof for  $c \oplus d$ is completely analogous. By definition of $\dsemO$, we have that $\dsem{c \scolon d} = \dsem{c} \scolon \dsem{d}$. By inductive hypothesis both $\dsem{c}$ and $\dsem{d}$ are polyhedral cones. Since the composition of polyhedral cones is again a polyhedral cone, we can conclude that $\dsem{c \scolon d}$ is a polyhedral cone.
\end{proof}


%
%
%
%
%
%
%
\section{More details on the props $\IHP{\field}$ and $\AIHP{\field}$}
In this appendix we investigate the categorical properties of $\IHP{\field}$ and $\AIHP{\field}$ that we omitted in the main text for lack of space. More precisely, we are going to show that $\IHP{\field}$ and $\AIHP{\field}$ form, respectively, an Abelian bicategory and a Cartesian bicategory in the sense of~\cite{Carboni1987}.

We first recall from~\cite{BonchiHPS17} the following result.
\begin{lemma}\label{lemma:adjoint} The following holds in $\IH{\field}$.
\begin{equation*}
    \begin{tikzpicture}
	\begin{pgfonlayer}{nodelayer}
		\node [style=none] (0) at (0, -0.25) {};
		\node [style=none] (1) at (0, 0.25) {};
		\node [style=white] (2) at (0.5, 0) {};
		\node [style=none] (3) at (1, 0) {};
		\node [style=white] (4) at (-0.5, 0) {};
		\node [style=none] (5) at (-1, 0) {};
	\end{pgfonlayer}
	\begin{pgfonlayer}{edgelayer}
		\draw [bend left] (4) to (1.center);
		\draw [bend left] (1.center) to (2);
		\draw [bend right] (0.center) to (2);
		\draw [bend left] (0.center) to (4);
		\draw (2) to (3.center);
		\draw (5.center) to (4);
	\end{pgfonlayer}
\end{tikzpicture}
\mysubeq{\IH{}}
\begin{tikzpicture}
	\begin{pgfonlayer}{nodelayer}
		\node [style=none] (3) at (0.25, 0) {};
		\node [style=none] (5) at (-0.25, 0) {};
	\end{pgfonlayer}
	\begin{pgfonlayer}{edgelayer}
		\draw (5.center) to (3.center);
	\end{pgfonlayer}
\end{tikzpicture}
\mysubeq{\IH{}}
\begin{tikzpicture}
	\begin{pgfonlayer}{nodelayer}
		\node [style=none] (0) at (0, -0.25) {};
		\node [style=none] (1) at (0, 0.25) {};
		\node [style=black] (2) at (0.5, 0) {};
		\node [style=none] (3) at (1, 0) {};
		\node [style=black] (4) at (-0.5, 0) {};
		\node [style=none] (5) at (-1, 0) {};
	\end{pgfonlayer}
	\begin{pgfonlayer}{edgelayer}
		\draw [bend left] (4) to (1.center);
		\draw [bend left] (1.center) to (2);
		\draw [bend right] (0.center) to (2);
		\draw [bend left] (0.center) to (4);
		\draw (2) to (3.center);
		\draw (5.center) to (4);
	\end{pgfonlayer}
\end{tikzpicture}
\qquad
\begin{tikzpicture}
	\begin{pgfonlayer}{nodelayer}
		\node [style=black] (3) at (-0.25, 0) {};
		\node [style=none] (5) at (-0.75, -0.25) {};
		\node [style=none] (6) at (-0.75, 0.25) {};
		\node [style=black] (7) at (0, 0) {};
		\node [style=none] (8) at (0.5, 0.25) {};
		\node [style=none] (9) at (0.5, -0.25) {};
	\end{pgfonlayer}
	\begin{pgfonlayer}{edgelayer}
		\draw [bend right] (5.center) to (3);
		\draw (3) to (7);
		\draw [bend right] (7) to (9.center);
		\draw [bend left] (7) to (8.center);
		\draw [bend right] (3) to (6.center);
	\end{pgfonlayer}
\end{tikzpicture}
\mysubeq{\IH{}}
\begin{tikzpicture}
	\begin{pgfonlayer}{nodelayer}
		\node [style=none] (5) at (-0.25, -0.25) {};
		\node [style=none] (6) at (-0.25, 0.25) {};
		\node [style=none] (8) at (0.25, 0.25) {};
		\node [style=none] (9) at (0.25, -0.25) {};
	\end{pgfonlayer}
	\begin{pgfonlayer}{edgelayer}
		\draw (6.center) to (8.center);
		\draw (9.center) to (5.center);
	\end{pgfonlayer}
\end{tikzpicture}
\mysubeq{\IH{}}
\begin{tikzpicture}
	\begin{pgfonlayer}{nodelayer}
		\node [style=white] (3) at (-0.25, 0) {};
		\node [style=none] (5) at (-0.75, -0.25) {};
		\node [style=none] (6) at (-0.75, 0.25) {};
		\node [style=white] (7) at (0, 0) {};
		\node [style=none] (8) at (0.5, 0.25) {};
		\node [style=none] (9) at (0.5, -0.25) {};
	\end{pgfonlayer}
	\begin{pgfonlayer}{edgelayer}
		\draw [bend right] (5.center) to (3);
		\draw (3) to (7);
		\draw [bend right] (7) to (9.center);
		\draw [bend left] (7) to (8.center);
		\draw [bend right] (3) to (6.center);
	\end{pgfonlayer}
\end{tikzpicture}
\end{equation*}
\begin{equation*}
\begin{tikzpicture}
	\begin{pgfonlayer}{nodelayer}
		\node [style=black] (6) at (-0.25, 0) {};
		\node [style=black] (8) at (0.25, 0) {};
	\end{pgfonlayer}
	\begin{pgfonlayer}{edgelayer}
		\draw (6.center) to (8.center);
	\end{pgfonlayer}
\end{tikzpicture}
\mysubeq{\IH{}}
id_I
\mysubeq{\IH{}}
\begin{tikzpicture}
	\begin{pgfonlayer}{nodelayer}
		\node [style=white] (6) at (-0.25, 0) {};
		\node [style=white] (8) at (0.25, 0) {};
	\end{pgfonlayer}
	\begin{pgfonlayer}{edgelayer}
		\draw (6.center) to (8.center);
	\end{pgfonlayer}
\end{tikzpicture}
\qquad
\begin{tikzpicture}
	\begin{pgfonlayer}{nodelayer}
		\node [style=none] (6) at (-0.75, 0) {};
		\node [style=white] (8) at (-0.25, 0) {};
		\node [style=white] (9) at (0, 0) {};
		\node [style=none] (10) at (0.5, 0) {};
	\end{pgfonlayer}
	\begin{pgfonlayer}{edgelayer}
		\draw (6.center) to (8);
		\draw (9) to (10.center);
	\end{pgfonlayer}
\end{tikzpicture}
\mysubeq{\IH{}}
\begin{tikzpicture}
	\begin{pgfonlayer}{nodelayer}
		\node [style=none] (3) at (0.25, 0) {};
		\node [style=none] (5) at (-0.25, 0) {};
	\end{pgfonlayer}
	\begin{pgfonlayer}{edgelayer}
		\draw (5.center) to (3.center);
	\end{pgfonlayer}
\end{tikzpicture}
\mysubeq{\IH{}}
\begin{tikzpicture}
	\begin{pgfonlayer}{nodelayer}
		\node [style=none] (6) at (-0.75, 0) {};
		\node [style=black] (8) at (-0.25, 0) {};
		\node [style=black] (9) at (0, 0) {};
		\node [style=none] (10) at (0.5, 0) {};
	\end{pgfonlayer}
	\begin{pgfonlayer}{edgelayer}
		\draw (6.center) to (8);
		\draw (9) to (10.center);
	\end{pgfonlayer}
\end{tikzpicture}
\end{equation*}
\end{lemma}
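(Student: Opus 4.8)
The plan is to reduce every inequation in the statement to a semantic inclusion of linear relations and settle it there, invoking completeness of $\IH{\field}$ for inclusions (Theorem~\ref{co:complih}): to establish $c \stackrel{\IH{}}{\subseteq} d$ it suffices to prove $\dsem{c} \subseteq \dsem{d}$ as relations in $\LinRel{\field}$. Since every diagram appearing in the lemma is built only from the additive (white, $\circ$) and copying (black, $\bullet$) Frobenius structures together with cups and caps, its denotation follows at once from the clauses in~\eqref{eq:semIH} and the functoriality of $\dsemO$, so the whole lemma collapses to a short list of set inclusions.

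Concretely, first I would compute the three $1\to1$ ``bubbles'' of the opening display. The white one denotes $\{(x,x)\mid x\in\field\}$ (split $x$ into $a$ and $x-a$, then re-add), the middle is the identity $\{(x,x)\}$, and the black one also denotes $\{(x,x)\}$ (copy $x$ and merge the two equal copies); hence both inclusions are in fact equalities and hold trivially. The remaining lines go the same way. For the $2\to2$ bubbles, the black composite denotes the restricted copy $\{((x,x),(x,x))\mid x\in\field\}$, the middle is $\mathit{id}_2=\{((x,y),(x,y))\}$, and the white composite denotes the equal-sum relation $\{((x,y),(a,b))\mid a+b=x+y\}$, giving the proper chain $\{((x,x),(x,x))\}\subseteq \mathit{id}_2\subseteq\{((x,y),(a,b))\mid a+b=x+y\}$. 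For the scalar line, the black composite (codelete then delete), $\mathit{id}_I$, and the white composite (zero then co-zero) all denote $\{(\nullvec,\nullvec)\}=\mathit{id}_I$. Finally, for the counit/unit line the white composite denotes $\{(0,0)\}$, the middle is $\mathit{id}_1=\{(x,x)\}$, and the black composite denotes the full relation $\field\times\field$, so $\{(0,0)\}\subseteq \mathit{id}_1\subseteq\field\times\field$. In each case completeness then delivers the stated diagrammatic inclusion.

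Alternatively one could give purely diagrammatic derivations inside $\IH{\field}$, using the Frobenius and specialness laws ($\circ$-$fr$, $\bullet$-$fr$, $\circ$-$sp$, $\bullet$-$sp$) to reduce the bubbles, together with the single directed axiom $(\circ\bullet\text{-}\mathit{inc})$ of Figure~\ref{fig:ih} (stating that the white counit lies below the black counit) to introduce the strict inclusions. I expect the genuinely ordered lines---the black $2\to2$ bubble inside $\mathit{id}_2$ and the white counit/unit inside $\mathit{id}_1$---to be the main obstacle for such a direct proof, since they cannot be obtained from the equational (Frobenius/bialgebra) fragment alone and must route through $(\circ\bullet\text{-}\mathit{inc})$ and its cup/cap transpose. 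The completeness route sidesteps this entirely, so there the only real care needed is reading the diagrams under the correct white$=$addition / black$=$copy convention and treating the empty type $\field^0$ consistently.
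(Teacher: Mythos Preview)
Your proof is correct. The paper itself does not actually prove this lemma: the sentence preceding the statement is ``We first recall from~\cite{BonchiHPS17} the following result'', and no argument is supplied. Your route via completeness (Theorem~\ref{co:complih}) is a clean and legitimate way to establish the result inside the present paper; all the semantic computations you give are accurate under the paper's white$=$addition / black$=$copy convention, and the resulting set-theoretic inclusions are exactly as you state. This buys you a uniform two-line argument for every clause, at the price of invoking a heavier imported fact (full completeness of $\IH{\field}$ for inclusions) than is strictly necessary. Your alternative sketch---deriving the inequalities directly from the Frobenius, special and bone axioms together with the single ordered axiom $(\circ\bullet\text{-}\mathit{inc})$ and its transposes---is closer in spirit to what the cited reference does, and your diagnosis is right: the equalities (the $1\to 1$ bubbles and the $0\to 0$ bones) are already axioms or one-step consequences of $\circ\text{-}sp$, $\bullet\text{-}sp$, $\circ\text{-}bo$, $\bullet\text{-}bo$, while the genuinely strict inclusions (the $2\to 2$ bubbles and the counit/unit line) are precisely the ones that must pass through $(\circ\bullet\text{-}\mathit{inc})$.
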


Observe that the above laws also holds in both $\IHP{\field}$ and $\AIHP{\field}$ as they contains all the axioms of $\IH{\field}$.

\begin{lemma}\label{lemma:laxcomonoid}
For all arrows $c\colon n\to m$ of $\IHP{\field}$, the following holds
\begin{equation*}
\begin{tikzpicture}
	\begin{pgfonlayer}{nodelayer}
		\node [style=black] (10) at (0.5, 0) {};
		\node [style=small box] (11) at (-0.25, 0) {$c$};
		\node [style=none] (12) at (1.25, 0.5) {};
		\node [style=none] (13) at (1.25, -0.5) {};
		\node [style=none] (14) at (-0.75, 0) {};
		\node [style=none] (15) at (-0.75, 0.25) {$n$};
		\node [style=none] (16) at (0.25, 0.25) {$m$};
		\node [style=none] (17) at (1, 0.75) {$m$};
		\node [style=none] (18) at (1, -0.25) {$m$};
	\end{pgfonlayer}
	\begin{pgfonlayer}{edgelayer}
		\draw (14.center) to (11);
		\draw (11) to (10);
		\draw [bend left] (10) to (12.center);
		\draw [bend right] (10) to (13.center);
	\end{pgfonlayer}
\end{tikzpicture}
\mysubeq{\IHP{}}
\begin{tikzpicture}
	\begin{pgfonlayer}{nodelayer}
		\node [style=black] (10) at (0, 0) {};
		\node [style=small box] (12) at (1, 0.5) {$c$};
		\node [style=small box] (13) at (1, -0.5) {$c$};
		\node [style=none] (14) at (-0.75, 0) {};
		\node [style=none] (15) at (-0.5, 0.25) {$n$};
		\node [style=none] (16) at (0.5, 0.75) {$n$};
		\node [style=none] (17) at (1.5, 0.75) {$m$};
		\node [style=none] (18) at (1.5, -0.25) {$m$};
		\node [style=none] (19) at (1.75, 0.5) {};
		\node [style=none] (20) at (1.75, -0.5) {};
		\node [style=none] (21) at (0.5, -0.25) {$n$};
	\end{pgfonlayer}
	\begin{pgfonlayer}{edgelayer}
		\draw [bend left] (10) to (12);
		\draw [bend right] (10) to (13);
		\draw (14.center) to (10);
		\draw (12) to (19.center);
		\draw (20.center) to (13);
	\end{pgfonlayer}
\end{tikzpicture}
\qquad
\begin{tikzpicture}
	\begin{pgfonlayer}{nodelayer}
		\node [style=black] (10) at (0.5, 0) {};
		\node [style=small box] (11) at (-0.25, 0) {$c$};
		\node [style=none] (14) at (-0.75, 0) {};
		\node [style=none] (15) at (-0.75, 0.25) {$n$};
		\node [style=none] (16) at (0.25, 0.25) {$m$};
	\end{pgfonlayer}
	\begin{pgfonlayer}{edgelayer}
		\draw (14.center) to (11);
		\draw (11) to (10);
	\end{pgfonlayer}
\end{tikzpicture}
\mysubeq{\IHP{}}
\begin{tikzpicture}
	\begin{pgfonlayer}{nodelayer}
		\node [style=black] (10) at (0.5, 0) {};
		\node [style=none] (14) at (-0.25, 0) {};
		\node [style=none] (15) at (0, 0.25) {$n$};
	\end{pgfonlayer}
	\begin{pgfonlayer}{edgelayer}
		\draw (14.center) to (10);
	\end{pgfonlayer}
\end{tikzpicture}
\end{equation*}
\begin{equation*}
\begin{tikzpicture}
	\begin{pgfonlayer}{nodelayer}
		\node [style=white] (10) at (1, 0) {};
		\node [style=small box] (12) at (0, 0.5) {$c$};
		\node [style=small box] (13) at (0, -0.5) {$c$};
		\node [style=none] (14) at (1.75, 0) {};
		\node [style=none] (15) at (1.5, 0.25) {$m$};
		\node [style=none] (16) at (0.5, 0.75) {$m$};
		\node [style=none] (17) at (-0.5, 0.75) {$n$};
		\node [style=none] (18) at (-0.5, -0.25) {$n$};
		\node [style=none] (19) at (-0.75, 0.5) {};
		\node [style=none] (20) at (-0.75, -0.5) {};
		\node [style=none] (21) at (0.5, -0.25) {$m$};
	\end{pgfonlayer}
	\begin{pgfonlayer}{edgelayer}
		\draw [bend right] (10) to (12);
		\draw [bend left] (10) to (13);
		\draw (14.center) to (10);
		\draw (12) to (19.center);
		\draw (20.center) to (13);
	\end{pgfonlayer}
\end{tikzpicture}
\mysubeq{\IHP{}}
    \begin{tikzpicture}
	\begin{pgfonlayer}{nodelayer}
		\node [style=white] (10) at (0, 0) {};
		\node [style=small box] (11) at (0.75, 0) {$c$};
		\node [style=none] (12) at (-0.75, 0.5) {};
		\node [style=none] (13) at (-0.75, -0.5) {};
		\node [style=none] (14) at (1.25, 0) {};
		\node [style=none] (15) at (1.25, 0.25) {$m$};
		\node [style=none] (16) at (0.25, 0.25) {$n$};
		\node [style=none] (17) at (-0.5, 0.75) {$n$};
		\node [style=none] (18) at (-0.5, -0.25) {$n$};
	\end{pgfonlayer}
	\begin{pgfonlayer}{edgelayer}
		\draw (14.center) to (11);
		\draw (11) to (10);
		\draw [bend right] (10) to (12.center);
		\draw [bend left] (10) to (13.center);
	\end{pgfonlayer}
\end{tikzpicture}
\qquad
\begin{tikzpicture}
	\begin{pgfonlayer}{nodelayer}
		\node [style=white] (10) at (-0.25, 0) {};
		\node [style=none] (14) at (0.5, 0) {};
		\node [style=none] (15) at (0.25, 0.25) {$m$};
	\end{pgfonlayer}
	\begin{pgfonlayer}{edgelayer}
		\draw (14.center) to (10);
	\end{pgfonlayer}
\end{tikzpicture}
\mysubeq{\IHP{}}
\begin{tikzpicture}
	\begin{pgfonlayer}{nodelayer}
		\node [style=white] (10) at (-0.75, 0) {};
		\node [style=small box] (11) at (0, 0) {$c$};
		\node [style=none] (14) at (0.5, 0) {};
		\node [style=none] (15) at (0.5, 0.25) {$m$};
		\node [style=none] (16) at (-0.5, 0.25) {$n$};
	\end{pgfonlayer}
	\begin{pgfonlayer}{edgelayer}
		\draw (14.center) to (11);
		\draw (11) to (10);
	\end{pgfonlayer}
\end{tikzpicture}
\end{equation*}
\end{lemma}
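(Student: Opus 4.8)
The plan is to deduce all four inequalities from soundness together with the completeness theorem for polyhedral cones (Theorem~\ref{th:compl2}), rather than grinding through diagrammatic rewrites. An arrow of $\IHP{\field}$ is an equivalence class of circuits of $\CDP$; I fix a representative $c\colon n\to m$. In each of the four inequalities the two sides are again circuits of $\CDP$ (they are built from $c$ together with the copy, discard, add and zero generators and the structural wires), so it suffices to check the corresponding semantic inclusion in $\RelX{\field}$ and then invoke Theorem~\ref{th:compl2} to obtain the provable inequality $\mysubeq{\IHP{}}$. There is no circularity, since Theorem~\ref{th:compl2} is established in Appendix~\ref{app:completenessproofs} without reference to the present lemma.

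Writing $R=\dsemp{c}\subseteq\field^n\times\field^m$ for the denotation, the four semantic inclusions read: $\{(x,(y,y))\mid (x,y)\in R\}\subseteq\{(x,(y_1,y_2))\mid (x,y_1),(x,y_2)\in R\}$ for copy; $\{(x,\bullet)\mid \exists y.\,(x,y)\in R\}\subseteq\{(x,\bullet)\mid x\in\field^n\}$ for discard; $\{((x_1,x_2),\,y_1+y_2)\mid (x_i,y_i)\in R\}\subseteq\{((x_1,x_2),\,y)\mid (x_1+x_2,y)\in R\}$ for addition; and $\{(\bullet,0)\}\subseteq\{(\bullet,y)\mid (0,y)\in R\}$ for zero. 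The first two hold for \emph{every} relation $R$ and are immediate. The last two hold because, by Proposition~\ref{th:ihpiso}, $R$ is a polyhedral cone: from $Av\geq 0$ one reads off that $R$ is closed under addition (if $Av_1\geq 0$ and $Av_2\geq 0$ then $A(v_1+v_2)\geq 0$) and contains the origin ($A\cdot 0=0\geq 0$), which is exactly what the addition and zero inclusions require. Soundness gives these inclusions and completeness lifts them to $\mysubeq{\IHP{}}$.

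I would also record the more structural reading, which explains the shape of the axioms. Read as $2$-cells, the four inequalities say precisely that every $c$ is a lax comonoid homomorphism for the black (copy/discard) comonoid and a lax monoid homomorphism for the white (add/zero) monoid. This property is closed under $\poi$ and $\oplus$ by pasting of lax squares, using only monotonicity of the two compositions together with the bimonoid and adjunction laws recorded in Lemma~\ref{lemma:adjoint}; hence an alternative is to establish it on generators by induction on the structure of $\CDP$. For the generators coming from $\IH{\field}$ the property is inherited from the fact that $\IH{\field}$ is an abelian bicategory~\cite{BonchiHPS17} (each such generator is a map or a comap). The only genuinely new generator is $\greq$, and for it the four conditions are \emph{exactly} the axioms \emph{(P1)}, \emph{(P3)}, \emph{(P2)} and \emph{(P4)} respectively, which is precisely why those axioms take the form they do.

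In the completeness route the main obstacle is merely bookkeeping: confirming that both sides of each pictured inequality are honest $\CDP$-circuits and that the chosen semantic inclusions match the diagrams under the unbiased relational composition of Definition~\ref{def:rel}. In the alternative inductive route the real work is the $\greq$ base case and the compact-closed bookkeeping needed to align \emph{(P2)} and \emph{(P4)} — which are phrased with $\leqd=\greq^{\op}$ and white nodes — with the addition and zero inclusions, and to verify that the formal pasting of lax squares is legitimate in the ordered prop. These steps are routine but tedious, which is exactly why I would prefer the completeness argument.
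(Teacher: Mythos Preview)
Your completeness-based argument is correct: the four semantic inclusions hold because $\dsemp{c}$ is a polyhedral cone (hence contains $0$ and is closed under addition), and Theorem~\ref{th:compl2} then yields the syntactic inequalities, with no circularity since that theorem is proved independently of this lemma.

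The paper, however, takes exactly your ``alternative'' route: it argues by induction on the structure of $\CDP$, citing \cite[\S 6]{BonchiHPS17} for the generators of $\CD$, observing that for $\greq$ the four conditions are precisely (P1)--(P4), and declaring the inductive cases trivial. Your completeness argument is shorter and avoids all diagrammatic bookkeeping, but it imports a heavy theorem whose proof runs through Fourier--Motzkin, both normal forms, and the polar operator. The paper's inductive proof is more elementary and self-contained, and makes transparent \emph{why} (P1)--(P4) were chosen as axioms --- a point you also make, but only in passing. Both approaches are fine; the trade-off is between economy of argument and independence from the main completeness machinery.
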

\begin{proof}
The proof goes by induction on $\CDP$. 
As base cases, we have to consider all the generators. For the generators of $\CD$, the properties hold by~\cite[\S 6]{BonchiHPS17}. For $\greq$, the properties follows immediately by the axioms (P1), (P2), (P3) and (P4). The inductive cases are trivial.
\end{proof}

From the two lemmas above, it immediately follows that $\IHP{\field}$ is an Abelian bicategory.

\begin{proposition}
$\IHP{\field}$ is an Abelian bicategory.
\end{proposition}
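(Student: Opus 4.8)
The plan is to verify directly that $\IHP{\field}$ satisfies each clause in Carboni's definition of an Abelian bicategory~\cite{Carboni1987}, sorting those clauses into the ones already present in the equational theory and the two genuinely bicategorical ones supplied by the two lemmas above. Recall that such a structure is a poset-enriched symmetric monoidal category in which every object carries two dual commutative special Frobenius structures --- here the black family $\Bcomult\;\Bcounit\;\Bmult\;\Bunit$ and the white family $\Wcomult\;\Wcounit\;\Wmult\;\Wunit$ --- subject to: (a) each family is a special commutative Frobenius algebra; (b) the two families interact as an interacting-Hopf/bialgebra datum and together induce a self-dual compact closed structure; (c) every $1$-cell is a lax comonoid homomorphism with respect to both structures; and (d) the comonoid operations are left adjoint to the corresponding monoid operations.

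First I would dispatch the purely monoidal and equational clauses. Since $\IHP{\field}$ is by construction an ordered prop, the underlying poset-enriched symmetric monoidal structure is immediate, the local order being $\mysubeq{\IHP{}}$. Clauses (a) and (b) are witnessed verbatim by the axioms of $\IH{\field}$ recalled in Figure~\ref{fig:ih}: the Frobenius axioms $(\bullet\text{-}fr1)$, $(\bullet\text{-}fr2)$, $(\circ\text{-}fr1)$, $(\circ\text{-}fr2)$ together with specialness $(\bullet\text{-}sp)$, $(\circ\text{-}sp)$ yield the two special commutative Frobenius algebras, while the bialgebra axioms (such as $(\circ\bullet\text{-}bi)$ and its unit companions) provide the interaction and the cup/cap axioms $(cc\text{-}1)$, $(cc\text{-}2)$ provide the self-dual compact closed structure. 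As all of these axioms belong to $\IHP{\field}$, they hold there unchanged.

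It then remains only to check clauses (c) and (d), and these are precisely the content of the two preceding lemmas. Lemma~\ref{lemma:laxcomonoid} establishes that every arrow $c\colon n \to m$ of $\IHP{\field}$ is a lax comonoid homomorphism for both the black and the white comonoid structures, and dually a lax monoid homomorphism --- exactly clause (c); crucially, its proof proceeds by induction on $\CDP$ and treats the new generator $\greq$ using the fresh axioms (P1)--(P4), so lax-naturality survives the passage from $\IH{\field}$ to $\IHP{\field}$. Lemma~\ref{lemma:adjoint}, imported from~\cite{BonchiHPS17} and still valid in $\IHP{\field}$ since the latter contains every $\IH{\field}$ axiom, supplies precisely the unit/counit inequalities expressing the adjunctions between comonoid and monoid operations, i.e.\ clause (d). Having matched every defining clause, the proposition follows.

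The only point requiring genuine care is clause (c) for the novel generator: lax-naturality of copying and discarding could in principle fail for $\greq$, which is not a linear map. This is exactly why Lemma~\ref{lemma:laxcomonoid} is proved by induction on the structure of $\CDP$ rather than merely quoted from the linear case, the base step for $\greq$ being discharged by (P1)--(P4). Once that base case is in hand no further work is needed, since the remaining clauses are either structural, or copied directly from the Hopf-algebra axioms already present in $\IHP{\field}$.
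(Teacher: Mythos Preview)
Your proposal is correct and takes essentially the same approach as the paper: verify Carboni's definition clause by clause, reading off the Frobenius structures from the $\IH{\field}$ axioms, the adjunctions from Lemma~\ref{lemma:adjoint}, and the lax (co)monoid homomorphism conditions from Lemma~\ref{lemma:laxcomonoid}. The paper's proof is terser and does not single out your clause~(b) (bialgebra interaction and compact closure) as a separate requirement of the Carboni definition, but otherwise the ingredients and their roles match exactly.
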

\begin{proof}
It is enough to check that all the axioms of the definition in~\cite{Carboni1987} holds. The comonoid is provided by the black structure; the monoid by the white structure. The fact that they have right adjoints is just Lemma~\ref{lemma:adjoint}. The axioms $\bullet - fr$ and $\circ - fr$ ensure that the black and the white structure form Frobenius algebras.
Finally, Lemma~\ref{lemma:laxcomonoid} guarantees that every arrow is both a lax comonoid and a lax monoid homomorphism.
\end{proof}

From this fact, it immediately follows (see e.g.~\cite[\S 7]{BonchiHPS17}) that every set of arrows $n \to m$ in $\IHP{\field}$ is not just oredered but, actually, $\mysubeq{\IHP{}}$ forms a lattice: meet, top, join and bottom are illustrated below.
\begin{equation*}
c \cap d = \begin{tikzpicture}
    \begin{pgfonlayer}{nodelayer}
		\node [style=black] (10) at (-0.5, 0) {};
		\node [style=small box] (11) at (0, 0.5) {$c$};
		\node [style=black] (14) at (0.5, 0) {};
		\node [style=small box] (15) at (0, -0.5) {$d$};
		\node [style=none] (16) at (-1, 0) {};
		\node [style=none] (17) at (1, 0) {};
	\end{pgfonlayer}
	\begin{pgfonlayer}{edgelayer}
		\draw [bend right] (14) to (11);
		\draw [bend right] (11) to (10);
		\draw [bend right] (10) to (15);
		\draw [bend right] (15) to (14);
		\draw (16.center) to (10);
		\draw (17.center) to (14);
	\end{pgfonlayer}
\end{tikzpicture}
\qquad
    \top = \begin{tikzpicture}
	\begin{pgfonlayer}{nodelayer}
		\node [style=black] (10) at (-0.5, 0) {};
		\node [style=black] (14) at (0, 0) {};
		\node [style=none] (16) at (-1, 0) {};
		\node [style=none] (17) at (0.5, 0) {};
	\end{pgfonlayer}
	\begin{pgfonlayer}{edgelayer}
		\draw (16.center) to (10);
		\draw (17.center) to (14);
	\end{pgfonlayer}
\end{tikzpicture}
\qquad
c \sqcup d = \begin{tikzpicture}
    \begin{pgfonlayer}{nodelayer}
		\node [style=white] (10) at (-0.5, 0) {};
		\node [style=small box] (11) at (0, 0.5) {$c$};
		\node [style=white] (14) at (0.5, 0) {};
		\node [style=small box] (15) at (0, -0.5) {$d$};
		\node [style=none] (16) at (-1, 0) {};
		\node [style=none] (17) at (1, 0) {};
	\end{pgfonlayer}
	\begin{pgfonlayer}{edgelayer}
		\draw [bend right] (14) to (11);
		\draw [bend right] (11) to (10);
		\draw [bend right] (10) to (15);
		\draw [bend right] (15) to (14);
		\draw (16.center) to (10);
		\draw (17.center) to (14);
	\end{pgfonlayer}
\end{tikzpicture}
\qquad
\bot = \begin{tikzpicture}
	\begin{pgfonlayer}{nodelayer}
		\node [style=white] (10) at (-0.5, 0) {};
		\node [style=white] (14) at (0, 0) {};
		\node [style=none] (16) at (-1, 0) {};
		\node [style=none] (17) at (0.5, 0) {};
	\end{pgfonlayer}
	\begin{pgfonlayer}{edgelayer}
		\draw (16.center) to (10);
		\draw (17.center) to (14);
	\end{pgfonlayer}
\end{tikzpicture}
\end{equation*}

Unfortunately, the last inequality of Lemma~\ref {lemma:laxcomonoid} does not hold for $\One$ and thus not all arrows of $\AIHP{\field}$ are lax monoid homomorphisms. However, they are still lax comonoid homomorphisms.

\begin{lemma}\label{lemma:laxcomonoidA}
For all arrows $c\colon n\to m$ of $\AIHP{\field}$, the following holds
\begin{equation*}
\begin{tikzpicture}
	\begin{pgfonlayer}{nodelayer}
		\node [style=black] (10) at (0.5, 0) {};
		\node [style=small box] (11) at (-0.25, 0) {$c$};
		\node [style=none] (12) at (1.25, 0.5) {};
		\node [style=none] (13) at (1.25, -0.5) {};
		\node [style=none] (14) at (-0.75, 0) {};
		\node [style=none] (15) at (-0.75, 0.25) {$n$};
		\node [style=none] (16) at (0.25, 0.25) {$m$};
		\node [style=none] (17) at (1, 0.75) {$m$};
		\node [style=none] (18) at (1, -0.25) {$m$};
	\end{pgfonlayer}
	\begin{pgfonlayer}{edgelayer}
		\draw (14.center) to (11);
		\draw (11) to (10);
		\draw [bend left] (10) to (12.center);
		\draw [bend right] (10) to (13.center);
	\end{pgfonlayer}
\end{tikzpicture}
\mysubeq{\AIHP{}}
\begin{tikzpicture}
	\begin{pgfonlayer}{nodelayer}
		\node [style=black] (10) at (0, 0) {};
		\node [style=small box] (12) at (1, 0.5) {$c$};
		\node [style=small box] (13) at (1, -0.5) {$c$};
		\node [style=none] (14) at (-0.75, 0) {};
		\node [style=none] (15) at (-0.5, 0.25) {$n$};
		\node [style=none] (16) at (0.5, 0.75) {$n$};
		\node [style=none] (17) at (1.5, 0.75) {$m$};
		\node [style=none] (18) at (1.5, -0.25) {$m$};
		\node [style=none] (19) at (1.75, 0.5) {};
		\node [style=none] (20) at (1.75, -0.5) {};
		\node [style=none] (21) at (0.5, -0.25) {$n$};
	\end{pgfonlayer}
	\begin{pgfonlayer}{edgelayer}
		\draw [bend left] (10) to (12);
		\draw [bend right] (10) to (13);
		\draw (14.center) to (10);
		\draw (12) to (19.center);
		\draw (20.center) to (13);
	\end{pgfonlayer}
\end{tikzpicture}
\qquad
\begin{tikzpicture}
	\begin{pgfonlayer}{nodelayer}
		\node [style=black] (10) at (0.5, 0) {};
		\node [style=small box] (11) at (-0.25, 0) {$c$};
		\node [style=none] (14) at (-0.75, 0) {};
		\node [style=none] (15) at (-0.75, 0.25) {$n$};
		\node [style=none] (16) at (0.25, 0.25) {$m$};
	\end{pgfonlayer}
	\begin{pgfonlayer}{edgelayer}
		\draw (14.center) to (11);
		\draw (11) to (10);
	\end{pgfonlayer}
\end{tikzpicture}
\mysubeq{\AIHP{}}
\begin{tikzpicture}
	\begin{pgfonlayer}{nodelayer}
		\node [style=black] (10) at (0.5, 0) {};
		\node [style=none] (14) at (-0.25, 0) {};
		\node [style=none] (15) at (0, 0.25) {$n$};
	\end{pgfonlayer}
	\begin{pgfonlayer}{edgelayer}
		\draw (14.center) to (10);
	\end{pgfonlayer}
\end{tikzpicture}
\end{equation*}
\end{lemma}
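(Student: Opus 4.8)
The plan is to prove Lemma~\ref{lemma:laxcomonoidA} by structural induction on the arrows of $\ACDP$, reusing \emph{verbatim} the strategy of Lemma~\ref{lemma:laxcomonoid}. The two displayed inequalities assert exactly that every arrow is a lax comonoid homomorphism for the black (co)structure, and this class of arrows is closed under both $\scolon$ and $\oplus$; consequently it suffices to establish the two inequalities on each generator, after which the inductive steps for $c\scolon d$ and $c\oplus d$ follow immediately, precisely as in the $\IHP{\field}$ case. So the whole proof reduces to a check on generators.

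For the base cases, note first that every generator of $\CDP$ — namely those in \eqref{eq:SFcalculusSyntax1}, \eqref{eq:SFcalculusSyntax2} together with $\greq$ — already satisfies both comonoid inequalities by Lemma~\ref{lemma:laxcomonoid}, and those derivations transfer unchanged to $\AIHP{\field}$ since the latter contains all the axioms of $\IHP{\field}$. The only genuinely new generator is $\One\colon 0\to 1$ of \eqref{eq:perp}. Specialised at $\One$, the first comonoid inequality compares $\One\scolon\Bcomult$ with $\One\oplus\One$ (the copy of the empty input being the empty diagram), and the second compares $\One\scolon\Bcounit$ with $id_0$. But these are exactly the axioms $\law{dup}$ and $\law{del}$ of Figure~\ref{fig:aihpaxioms}, which moreover hold as equalities; hence both required inclusions are satisfied for $\One$ trivially. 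This closes the induction.

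I do not expect a serious obstacle; the entire content of the lemma lies in identifying which homomorphism property survives the addition of $\One$. The delicate point — and the reason this is stated separately from Lemma~\ref{lemma:laxcomonoid} — is that the symmetric white/monoid counterpart genuinely fails for $\One$: specialised at $\One$ the last inequality of Lemma~\ref{lemma:laxcomonoid} would read $\Wunit\mysubeq{\AIHP{}}\One$, i.e.\ $\{(\nullvec,0)\}\subseteq\{(\nullvec,1)\}$ on the semantic side, which is false. Thus only the black-comonoid half of Lemma~\ref{lemma:laxcomonoid} extends to all of $\ACDP$, and the crux of the present argument is simply the observation that $\law{dup}$ and $\law{del}$ are precisely the two black-comonoid laws instantiated at $\One$, so the comonoid induction goes through while the monoid induction does not.
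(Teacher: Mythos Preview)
Your proposal is correct and follows essentially the same approach as the paper: structural induction reducing to the generators, with all $\CDP$ generators handled by Lemma~\ref{lemma:laxcomonoid} and the only new generator $\One$ handled directly by the axioms $\law{dup}$ and $\law{del}$. Your additional remark explaining why the white-monoid half fails for $\One$ is also exactly the observation the paper makes immediately before stating this lemma.
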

\begin{proof}
Following the proof of Lemma~\ref{lemma:laxcomonoid}, we have only to prove that the properties hold for $\One$. This is immediate by the axioms $dup$ and $del$.
\end{proof}

\begin{proposition}
$\IHP{\field}$ is a Cartesian bicategory.
\end{proposition}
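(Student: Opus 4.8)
The plan is to verify directly the axioms of a Cartesian bicategory in the sense of~\cite{Carboni1987}, observing that these involve \emph{only} the black comonoid structure and are therefore a subset of what was already checked in establishing that $\IHP{\field}$ is an Abelian bicategory. Recall that a Cartesian bicategory is a poset-enriched symmetric monoidal bicategory in which each object carries a cocommutative comonoid whose structure maps admit right adjoints (so that they are \emph{maps}), and in which every $1$-cell is a lax comonoid homomorphism; crucially, and unlike the notion of bicategory of relations, no Frobenius condition is demanded.

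First I would fix the comonoid: on the object $1$ it is given by the black comultiplication and the black counit, the comonoid laws being exactly the axioms \emph{($\bullet$-coas)}, \emph{($\bullet$-coco)} and \emph{($\bullet$-counl)} of Figure~\ref{fig:ih}; on a general object $n$ one takes the evident $n$-fold black structure, whose coherence with $\oplus$ (that the comonoid on $n\oplus m$ is the tensor of those on $n$ and $m$, up to symmetries) is immediate. Second, I would record that these maps are genuine left adjoints: the part of Lemma~\ref{lemma:adjoint} concerning the black structure supplies precisely the (co)unit inequalities witnessing right adjoints to the black comultiplication and to the black counit. Third --- the only substantive point --- every arrow $c\colon n\to m$ of $\IHP{\field}$ is a lax comonoid homomorphism with respect to this structure, which is exactly the first two inequalities of Lemma~\ref{lemma:laxcomonoid}, namely $c\poi\Delta_m\mysubeq{\IHP{}}\Delta_n\poi(c\oplus c)$ and $c\poi\diamond_m\mysubeq{\IHP{}}\diamond_n$, where $\Delta$ and $\diamond$ denote the black comultiplication and the black counit.

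Assembling these three pieces, I would invoke the Carboni--Walters theory to conclude: these data are precisely the defining axioms of a Cartesian bicategory, and they entail that the sub-$1$-category of maps (left adjoints) is automatically cartesian with $\oplus$ as its product. Equivalently, since the Cartesian axioms are exactly the comonoid half of the Abelian ones, the statement also follows a fortiori from the already-established fact that $\IHP{\field}$ is an Abelian bicategory.

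The main difficulty is definitional rather than computational: the three cited lemmas carry all the real weight, so the care required is in invoking the genuine Cartesian-bicategory axiomatisation and not the stronger bicategory-of-relations one --- in particular the Frobenius axioms \emph{($\bullet$-fr)}, which \emph{are} valid in $\IHP{\field}$ but are not needed here, and the entire white (monoid) structure are deliberately left unused. This economy is exactly what makes the result portable: the same comonoid-only argument transfers verbatim to the affine extension $\AIHP{\field}$, where Lemma~\ref{lemma:laxcomonoidA} re-establishes the lax comonoid homomorphism property even for the generator $\One$ --- for which, by contrast, the dual lax monoid property, and hence abelianness, fails.
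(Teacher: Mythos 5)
Your proof is correct and follows essentially the same route as the paper: black comonoid structure, right adjoints from Lemma~\ref{lemma:adjoint}, lax comonoid homomorphism property, and an appeal to the axiomatisation of~\cite{Carboni1987}. Two remarks on where you and the paper diverge. First, the paper's own proof invokes Lemma~\ref{lemma:laxcomonoidA} (the $\AIHP{\field}$ version), and the surrounding text (the failure of the lax \emph{monoid} property for $\One$, and the subsequent meet semi-lattice claim for $\mysubeq{\AIHP{}}$) makes clear that the proposition is intended to be about $\AIHP{\field}$, the occurrence of $\IHP{\field}$ in its statement being a typo; read literally, as you observe, the statement is an immediate corollary of the Abelian bicategory proposition. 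Your closing paragraph, which transfers the comonoid-only argument to $\AIHP{\field}$ via Lemma~\ref{lemma:laxcomonoidA} and notes exactly why abelianness fails there, is therefore not a side remark but the substance of the intended result --- the paper's point is precisely that $\AIHP{\field}$ retains the Cartesian structure while losing the Abelian one. Second, the paper's proof additionally checks that the black structure is Frobenius via the $\bullet\text{-}fr$ axioms, whereas you deliberately omit this, arguing it belongs to the bicategory-of-relations (discreteness) axioms rather than the Cartesian ones; since the Frobenius law does hold in both $\IHP{\field}$ and $\AIHP{\field}$, this difference is one of definitional bookkeeping and affects nothing downstream.
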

\begin{proof}
It is enough to check that all the axioms of the definition in~\cite{Carboni1987} holds. The comonoid is provided by the black structure and the adjoints by Lemma~\ref{lemma:adjoint}. The axioms $\bullet - fr$ ensure that the black structure form Frobenius algebras.
Finally, Lemma~\ref{lemma:laxcomonoidA} guarantees that every arrow is a lax comonoid homomorphism.
\end{proof}

From this fact follows that every set of arrows $n \to m$ in $\AIHP{\field}$ is not just ordered but, actually, $\mysubeq{\AIHP{}}$ forms a meet semi-lattice, where meet and top are defined like above. This means that
\begin{equation}\label{eq:meetsemilattice}
    \begin{tikzpicture}
	\begin{pgfonlayer}{nodelayer}
		\node [style=small box] (21) at (0, 0) {$d$};
		\node [style=none] (23) at (-0.75, 0) {};
		\node [style=none] (24) at (0.75, 0) {};
	\end{pgfonlayer}
	\begin{pgfonlayer}{edgelayer}
		\draw (21) to (24.center);
		\draw (23.center) to (21);
	\end{pgfonlayer}
\end{tikzpicture}
\mysubeq{\AIHP{}}
\begin{tikzpicture}
	\begin{pgfonlayer}{nodelayer}
		\node [style=small box] (21) at (0, 0) {$c$};
		\node [style=none] (23) at (-0.75, 0) {};
		\node [style=none] (24) at (0.75, 0) {};
	\end{pgfonlayer}
	\begin{pgfonlayer}{edgelayer}
		\draw (21) to (24.center);
		\draw (23.center) to (21);
	\end{pgfonlayer}
\end{tikzpicture}
\Longleftrightarrow
\begin{tikzpicture}
	\begin{pgfonlayer}{nodelayer}
		\node [style=small box] (21) at (0, -0.5) {$c$};
		\node [style=small box] (25) at (0, 0.5) {$d$};
		\node [style=black] (26) at (-0.75, 0) {};
		\node [style=black] (27) at (0.75, 0) {};
		\node [style=none] (28) at (1.25, 0) {};
		\node [style=none] (29) at (-1.25, 0) {};
	\end{pgfonlayer}
	\begin{pgfonlayer}{edgelayer}
		\draw [bend left] (25) to (27);
		\draw [bend left] (26) to (25);
		\draw [bend right] (26) to (21);
		\draw [bend right] (21) to (27);
		\draw (27) to (28.center);
		\draw (26) to (29.center);
	\end{pgfonlayer}
\end{tikzpicture}
\myeq{\AIHP{}}
\begin{tikzpicture}
	\begin{pgfonlayer}{nodelayer}
		\node [style=small box] (21) at (0, 0) {$d$};
		\node [style=none] (23) at (-0.75, 0) {};
		\node [style=none] (24) at (0.75, 0) {};
	\end{pgfonlayer}
	\begin{pgfonlayer}{edgelayer}
		\draw (21) to (24.center);
		\draw (23.center) to (21);
	\end{pgfonlayer}
\end{tikzpicture}
\end{equation}


\section{Proofs of Section~\ref{sec:poly}}\label{app:poly}
In this appendix, we illustrate the proofs of Section~\ref{sec:poly}. We begin with a few properties of an arbitrary ordered field $\field$ and some additional lemmas.

\begin{lemma}\label{lm:arch}
$\forall \alpha \in \field . \ \exists \gamma \in \field . \ \gamma \geq \alpha$
\end{lemma}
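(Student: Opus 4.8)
The plan is to observe that this statement is an immediate consequence of the order axioms on $\field$, requiring essentially no work beyond reflexivity. First I would recall that $\leq$ is a \emph{total} order on $\field$ and hence, in particular, reflexive: $\alpha \leq \alpha$ holds for every $\alpha \in \field$. Given an arbitrary $\alpha \in \field$, the witness $\gamma = \alpha$ then satisfies $\gamma \geq \alpha$, which already discharges the existential claim.

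Since the later arguments in this appendix may in fact need a \emph{strictly} larger element (for instance, to guarantee that certain polyhedra have non-empty interior, or to push a point strictly inside a half-space), I would also record the stronger version producing $\gamma > \alpha$. For this I would use the ordered-field axioms (a) and (b): in any ordered field $1 = 1^2 \geq 0$, and since $1 \neq 0$ we have $1 > 0$; applying axiom (a) by adding $\alpha$ on both sides of $0 < 1$ yields $\alpha < \alpha + 1$. Thus the choice $\gamma = \alpha + 1$ also works and gives a strict inequality.

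I do not expect any genuine obstacle here. The only point meriting care is deciding which form of the witness is actually used downstream; I would therefore phrase the proof so that it exhibits the concrete element $\gamma = \alpha + 1$, recording along the way the standard fact that $1 > 0$ in every ordered field, so that both the non-strict statement as written and its strict strengthening are available without further comment.

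\begin{proof}
In any ordered field $\field$, one has $1 = 1^2 \geq 0$ by axiom (b), and since $1 \neq 0$ it follows that $0 < 1$. Fix $\alpha \in \field$. By axiom (a), adding $\alpha$ to both sides of $0 \leq 1$ gives $\alpha \leq \alpha + 1$. Hence, taking $\gamma = \alpha + 1$, we obtain $\gamma \geq \alpha$, as required.
\end{proof}
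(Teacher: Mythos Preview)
Your proof is correct and follows essentially the same approach as the paper: both exhibit the witness $\gamma = \alpha + 1$ by using $1 \geq 0$ and axiom (a). Your additional observation that reflexivity alone ($\gamma = \alpha$) already suffices for the non-strict statement is valid, though the paper, like you, opts for $\alpha + 1$ since the strict version is what is actually used downstream (in Corollary~\ref{co:arch} and Lemma~\ref{lm:neg}).
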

\begin{proof}
We know that $1 \in \field$ and $1 \geq 0$. By the ordered field axioms we get
\begin{align*}
    &1 \geq 0 \\
    \implies \ &0 + 1 \geq 0 \\
    \implies \ &\alpha - \alpha + 1 \geq 0 \\
    \implies \ &\alpha + 1 \geq \alpha.
\end{align*}
Therefore, $\gamma \geq \alpha$ exists and it is equal to $\alpha + 1$.
\end{proof}

\begin{corollary}\label{co:arch}
$\lnot \exists \alpha \in \field . \ \forall \gamma \in \field . \ \gamma < \alpha$
\end{corollary}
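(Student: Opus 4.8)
The plan is to derive the corollary by contradiction, with all the mathematical content delegated to Lemma~\ref{lm:arch}. First I would assume, for contradiction, that there does exist some $\alpha \in \field$ such that $\gamma < \alpha$ for every $\gamma \in \field$; that is, $\alpha$ is a strict upper bound for the whole field. The strategy is then to produce a single element of $\field$ that cannot lie strictly below $\alpha$, contradicting this supposed bound.

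Applying Lemma~\ref{lm:arch} to this $\alpha$ yields an element $\gamma_0 \in \field$ with $\gamma_0 \geq \alpha$ (concretely, the lemma's proof supplies $\gamma_0 = \alpha + 1$). Since $\gamma_0$ is itself an element of $\field$, the contradiction hypothesis forces $\gamma_0 < \alpha$ as well. The final step is to observe that $\gamma_0 \geq \alpha$ and $\gamma_0 < \alpha$ cannot both hold: in the total order on $\field$, $\gamma_0 < \alpha$ means $\gamma_0 \leq \alpha$ together with $\gamma_0 \neq \alpha$, and combining $\gamma_0 \leq \alpha$ with $\alpha \leq \gamma_0$ gives $\gamma_0 = \alpha$ by antisymmetry, contradicting $\gamma_0 \neq \alpha$. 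This contradiction establishes the corollary.

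I expect no genuine obstacle here, since the quantitative work is entirely absorbed by Lemma~\ref{lm:arch}; the statement is really just its logical rephrasing (no element strictly dominates all others because every element is weakly dominated by some element). The only point that deserves a sentence of care is the clean passage between the strict relation $<$ and the weak relation $\leq$ in the total order, which is what lets us turn $\gamma_0 \geq \alpha$ and $\gamma_0 < \alpha$ into an outright contradiction. I would note in passing that an even shorter route exists---instantiating the hypothesis $\forall \gamma.\,\gamma < \alpha$ at $\gamma = \alpha$ immediately yields $\alpha < \alpha$, which is impossible by irreflexivity---but presenting the argument through Lemma~\ref{lm:arch} makes the \emph{corollary} designation transparent and keeps the Archimedean-flavoured reading (no global maximum) to the fore.
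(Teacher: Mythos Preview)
Your proposal is correct and matches the paper's intent: the corollary is stated without proof, as an immediate logical consequence of Lemma~\ref{lm:arch}, and your contradiction argument via that lemma is exactly the expected derivation. Your aside about the even shorter route (instantiating at $\gamma = \alpha$) is also valid, though as you note it bypasses the lemma and so obscures the ``corollary'' label.
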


\begin{lemma}\label{lm:neg}
Let $b, c \in \field$. Then the following holds
\[ \forall \alpha \in \field . \ \alpha > 0, \ \alpha b \leq c \implies b \leq 0 \]
\end{lemma}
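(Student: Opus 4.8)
The plan is to prove the contrapositive, after first fixing the intended reading of the statement: the hypothesis is that $\alpha b \leq c$ holds for \emph{every} $\alpha > 0$ (not for a single fixed $\alpha$, which would make the claim false---take $b=1$, $c=10$, $\alpha=1$), and from this I want to conclude $b \leq 0$. So I would assume $b > 0$ and exhibit one positive $\alpha$ with $\alpha b > c$, contradicting the hypothesis.

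First I would record the ordered-field facts I need. Since $\field$ is a field and $b > 0$, the element $b$ is invertible; using ordered-field axiom (b) together with $b\cdot b^{-1} = 1 > 0$ one checks that $b^{-1} > 0$ as well (if $b^{-1} \leq 0$ then $-b^{-1} \geq 0$, whence by axiom (b) $b(-b^{-1}) = -1 \geq 0$, contradicting $1 > 0$). Consequently multiplication by $b^{-1}$ preserves the order, so the target inequality $\alpha b > c$ is equivalent to $\alpha > c\,b^{-1}$.

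Next I would produce a positive $\alpha$ strictly above $c\,b^{-1}$. Reusing the observation underlying Lemma~\ref{lm:arch}, namely that $e + 1 > e$ for every $e \in \field$ (because $1 > 0$ and axiom (a)), I would set $\alpha = c\,b^{-1} + 1$ when $c\,b^{-1} \geq 0$ and $\alpha = 1$ otherwise; in either case $\alpha > 0$ and $\alpha > c\,b^{-1}$. Multiplying $\alpha > c\,b^{-1}$ by $b > 0$ then gives $\alpha b > c$, which contradicts the assumption that $\alpha b \leq c$ for all positive $\alpha$. Hence $b \leq 0$.

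The routine calculations---the order manipulations and the two-case choice of $\alpha$---are all elementary; the only real point of care is the quantifier reading of the statement, and making sure each inequality is multiplied by a provably positive quantity so that its direction is preserved. I do not expect any serious obstacle beyond this bookkeeping, and Corollary~\ref{co:arch} could alternatively be invoked in place of the explicit choice of $\alpha$ if a non-constructive finish is preferred.
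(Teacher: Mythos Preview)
Your proof is correct and follows essentially the same contrapositive route as the paper: assume $b>0$, rewrite the hypothesis as $\alpha \leq c\,b^{-1}$ for all $\alpha>0$, and contradict the unboundedness of $\field$. The only cosmetic difference is that the paper invokes Corollary~\ref{co:arch} at the end, whereas you explicitly construct the witness $\alpha$ (and you already note this alternative yourself).
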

\begin{proof}
Suppose $b > 0$, then it must be $c > 0$.

Thus $\alpha \leq \frac{c}{b}$ for any $\alpha > 0$. 

Note that $0 \leq \frac{c}{b}$ since both $c$ and $b$ are $< 0$, and thus $\forall \beta < 0 \ \beta \leq \frac{c}{b}$.

Therefore $\forall \gamma . \ \gamma \leq \frac{c}{b}$, which contradicts Corollary~\ref{co:arch}.
\end{proof}

\begin{corollary}\label{co:negvec}
Let $v, w \in \field^n$. Then the following holds
\[ \forall \alpha . \ \alpha > 0, \ \alpha v \leq w \implies v \leq 0 \]
\end{corollary}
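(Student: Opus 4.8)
The statement to prove is Corollary~\ref{co:negvec}: for vectors $v, w \in \field^n$, if $\alpha v \leq w$ holds for all $\alpha > 0$, then $v \leq 0$. This is the componentwise vector version of Lemma~\ref{lm:neg}, which I may assume as already established.

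Let me understand the structure. Lemma~\ref{lm:neg} is stated for scalars $b, c$: for all $\alpha > 0$, if $\alpha b \leq c$ then $b \leq 0$. The vector inequality $\alpha v \leq w$ is interpreted componentwise (this is the standard order on $\field^n$), so $\alpha v \leq w$ means $\alpha v_i \leq w_i$ for every coordinate $i \in \{1, \dots, n\}$. Similarly $v \leq 0$ means $v_i \leq 0$ for every $i$.

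The plan is to reduce the vector claim to the scalar lemma coordinatewise. First I would fix an arbitrary coordinate index $i$ and set $b = v_i$ and $c = w_i$. By hypothesis, for every $\alpha \in \field$ with $\alpha > 0$ we have $\alpha v \leq w$, and extracting the $i$-th component of this componentwise inequality gives $\alpha v_i \leq w_i$, i.e. $\alpha b \leq c$ for all $\alpha > 0$. Applying Lemma~\ref{lm:neg} with these $b$ and $c$ then yields $b \leq 0$, that is, $v_i \leq 0$.

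Since $i$ was arbitrary, the conclusion $v_i \leq 0$ holds for every coordinate, which is exactly the statement that $v \leq 0$ in the componentwise order on $\field^n$. This completes the reduction. There is no real obstacle here: the only thing to be careful about is making the coordinatewise reading of the vector order explicit, so that the passage from the single vector hypothesis $\alpha v \leq w$ to the family of scalar hypotheses $\alpha v_i \leq w_i$ is clearly justified. Once that interpretation is spelled out, the corollary follows immediately by $n$ applications of Lemma~\ref{lm:neg}, one per coordinate.
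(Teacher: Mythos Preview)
Your proof is correct and follows the same approach as the paper: apply Lemma~\ref{lm:neg} componentwise to each inequality $\alpha v_i \leq w_i$ to conclude $v_i \leq 0$ for all $i$. The paper's proof is a one-line sketch of exactly this argument.
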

\begin{proof}
It is sufficient to notice that Lemma~\ref{lm:neg} holds for all inequalities $\alpha v_i \leq w_i$ for $i=1\ldots n$.
\end{proof}


\begin{proof}[Proof of Lemma \ref{lm:nf2}]
Since $\One$ has arity 0, by naturality of symmetry we can always write $c$ as 
\begin{equation*}
    \begin{tikzpicture}
	\begin{pgfonlayer}{nodelayer}
		\node [style=small box] (21) at (0, 0) {$c$};
		\node [style=none] (23) at (-0.75, 0) {};
		\node [style=none] (24) at (0.75, 0) {};
		\node [style=none] (25) at (0.5, 0.25) {$m$};
		\node [style=none] (26) at (-0.5, 0.25) {$n$};
	\end{pgfonlayer}
	\begin{pgfonlayer}{edgelayer}
		\draw (21) to (24.center);
		\draw (23.center) to (21);
	\end{pgfonlayer}
\end{tikzpicture}
\myeq{\AIHP{}}
\begin{tikzpicture}
	\begin{pgfonlayer}{nodelayer}
		\node [style=none] (24) at (1.5, -0.1) {};
		\node [style=none] (25) at (1, 0.1) {$m$};
		\node [style=none] (26) at (-1, 0.5) {$n$};
		\node [style=none] (27) at (-1.5, 0.3) {};
		\node [style=none] (28) at (-1.25, 0) {};
		\node [style=none] (29) at (-1.25, 0.15) {};
		\node [style=none] (31) at (-0.5, 0.3) {};
		\node [style=none] (32) at (-0.5, 0) {};
		\node [style=none] (33) at (-1.25, -0.5) {};
		\node [style=none] (34) at (-1.25, -0.65) {};
		\node [style=none] (35) at (-0.5, -0.5) {};
		\node [style=none] (36) at (-1, -0.25) {$\myvdots$};
		\node [style=none] (37) at (-1.25, -0.35) {};
		\node [style=none] (38) at (-1.25, -0.15) {};
		\node [style=none] (39) at (-0.5, 0.5) {};
		\node [style=none] (40) at (0.5, 0.5) {};
		\node [style=none] (41) at (0.5, -0.75) {};
		\node [style=none] (42) at (-0.5, -0.75) {};
		\node [style=none] (43) at (0.5, -0.1) {};
		\node [style=none] (44) at (0, -0.1) {$d$};
	\end{pgfonlayer}
	\begin{pgfonlayer}{edgelayer}
		\draw (27.center) to (31.center);
		\draw (28.center) to (32.center);
		\draw (33.center) to (35.center);
		\draw (29.center) to (38.center);
		\draw (37.center) to (34.center);
		\draw (39.center) to (40.center);
		\draw (40.center) to (41.center);
		\draw (41.center) to (42.center);
		\draw (42.center) to (39.center);
		\draw (43.center) to (24.center);
	\end{pgfonlayer}
\end{tikzpicture}
\end{equation*}
for some $d$ in $\IHP{\mathsf k}$. 
By the axiom (dup), we can group all the $\One$ in order to obtain 
\begin{equation*}
\begin{tikzpicture}
	\begin{pgfonlayer}{nodelayer}
		\node [style=small box] (21) at (0, 0) {$c$};
		\node [style=none] (23) at (-0.75, 0) {};
		\node [style=none] (24) at (0.75, 0) {};
		\node [style=none] (25) at (0.5, 0.25) {$m$};
		\node [style=none] (26) at (-0.5, 0.25) {$n$};
	\end{pgfonlayer}
	\begin{pgfonlayer}{edgelayer}
		\draw (21) to (24.center);
		\draw (23.center) to (21);
	\end{pgfonlayer}
\end{tikzpicture}
\myeq{\AIHP{}}
\begin{tikzpicture}
	\begin{pgfonlayer}{nodelayer}
		\node [style=none] (24) at (1.25, -0.1) {};
		\node [style=none] (25) at (1, 0.1) {$m$};
		\node [style=none] (26) at (-1.5, 0.5) {$n$};
		\node [style=none] (27) at (-1.75, 0.3) {};
		\node [style=none] (31) at (-0.5, 0.3) {};
		\node [style=none] (32) at (-0.5, 0) {};
		\node [style=none] (33) at (-1.5, -0.25) {};
		\node [style=none] (34) at (-1.5, -0.4) {};
		\node [style=none] (35) at (-0.5, -0.5) {};
		\node [style=none] (37) at (-1.5, -0.1) {};
		\node [style=none] (39) at (-0.5, 0.5) {};
		\node [style=none] (40) at (0.5, 0.5) {};
		\node [style=none] (41) at (0.5, -0.75) {};
		\node [style=none] (42) at (-0.5, -0.75) {};
		\node [style=none] (43) at (0.5, -0.1) {};
		\node [style=none] (44) at (0, -0.1) {$d$};
		\node [style=black] (45) at (-1, -0.25) {};
		\node [style=none] (46) at (-1.25, 0.75) {};
		\node [style=none] (47) at (-1.25, -1) {};
		\node [style=none] (48) at (0.75, -1) {};
		\node [style=none] (49) at (0.75, 0.75) {};
	\end{pgfonlayer}
	\begin{pgfonlayer}{edgelayer}
		\draw (27.center) to (31.center);
		\draw (37.center) to (34.center);
		\draw (39.center) to (40.center);
		\draw (40.center) to (41.center);
		\draw (41.center) to (42.center);
		\draw (42.center) to (39.center);
		\draw (43.center) to (24.center);
		\draw [bend left] (45) to (32.center);
		\draw [bend right] (45) to (35.center);
		\draw (45) to (33.center);
		\draw [dotted] (46.center) to (49.center);
		\draw [dotted] (49.center) to (48.center);
		\draw [dotted] (48.center) to (47.center);
		\draw [dotted] (47.center) to (46.center);
	\end{pgfonlayer}
\end{tikzpicture}
\end{equation*}
The diagram in the dotted box is the arrow $c'$.
\end{proof}

\begin{proof}[Proof of Theorem \ref{th:nf}]
By Lemma~\ref{lm:nf2}
\begin{equation*}
\begin{tikzpicture}
	\begin{pgfonlayer}{nodelayer}
		\node [style=small box] (21) at (0, 0) {$c$};
		\node [style=none] (23) at (-0.75, 0) {};
		\node [style=none] (24) at (0.75, 0) {};
		\node [style=none] (25) at (0.5, 0.25) {$m$};
		\node [style=none] (26) at (-0.5, 0.25) {$n$};
	\end{pgfonlayer}
	\begin{pgfonlayer}{edgelayer}
		\draw (21) to (24.center);
		\draw (23.center) to (21);
	\end{pgfonlayer}
\end{tikzpicture}
\myeq{\AIHP{}}
\begin{tikzpicture}
	\begin{pgfonlayer}{nodelayer}
		\node [style=small box] (21) at (0, 0) {$c'$};
		\node [style=none] (24) at (0.75, 0) {};
		\node [style=none] (25) at (0.5, 0.25) {$m$};
		\node [style=none] (26) at (-0.5, 0.5) {$n$};
		\node [style=none] (27) at (-1, 0.25) {};
		\node [style=none] (28) at (-1, -0.25) {};
		\node [style=none] (31) at (-0.25, 0.25) {};
		\node [style=none] (32) at (-0.25, -0.25) {};
		\node [style=none] (33) at (-1, -0.1) {};
		\node [style=none] (34) at (-1, -0.4) {};
	\end{pgfonlayer}
	\begin{pgfonlayer}{edgelayer}
		\draw (21) to (24.center);
		\draw (27.center) to (31.center);
		\draw (28.center) to (32.center);
		\draw (33.center) to (34.center);
	\end{pgfonlayer}
\end{tikzpicture}
\end{equation*} with $c'$ an arrow of $\IHP{\mathsf k}$.

By the first normal form theorem of $\IHP{\mathsf k}$, there exists an  \\ arrow $D \colon m+n+1 \rightarrow o$ of $\FC$ such that 
\begin{equation*}
\begin{tikzpicture}
	\begin{pgfonlayer}{nodelayer}
		\node [style=small box] (21) at (0, 0) {$c'$};
		\node [style=none] (24) at (0.75, 0) {};
		\node [style=none] (25) at (0.5, 0.25) {$m$};
		\node [style=none] (26) at (-0.5, 0.5) {$n$};
		\node [style=none] (27) at (-1, 0.25) {};
		\node [style=none] (28) at (-1, -0.25) {};
		\node [style=none] (31) at (-0.25, 0.25) {};
		\node [style=none] (32) at (-0.25, -0.25) {};
	\end{pgfonlayer}
	\begin{pgfonlayer}{edgelayer}
		\draw (21) to (24.center);
		\draw (27.center) to (31.center);
		\draw (28.center) to (32.center);
	\end{pgfonlayer}
\end{tikzpicture}
    \myeq{\IHP{}}
\begin{tikzpicture}
	\begin{pgfonlayer}{nodelayer}
		\node [style=reg] (3) at (1.75, 0) {$\geq$};
		\node [style=white] (4) at (2.25, 0) {};
		\node [style=none] (9) at (-1.75, 0.5) {};
		\node [style=none] (10) at (-3, 0) {};
		\node [style=none] (11) at (-1.25, 0.25) {$n$};
		\node [style=none] (12) at (-1.25, 0.75) {$m$};
		\node [style=none] (13) at (-0.75, 0.5) {};
		\node [style=none] (14) at (-0.75, 0) {};
		\node [style=black] (15) at (-2.25, 0.75) {};
		\node [style=black] (16) at (-2.75, 0.75) {};
		\node [style=none] (17) at (-1.75, 1) {};
		\node [style=none] (18) at (2.5, 1) {};
		\node [style=none] (19) at (-3, -0.5) {};
		\node [style=none] (20) at (-0.75, -0.5) {};
		\node [style=none] (21) at (-0.75, 0.75) {};
		\node [style=none] (22) at (0.75, 0.75) {};
		\node [style=none] (23) at (0.75, -0.75) {};
		\node [style=none] (24) at (-0.75, -0.75) {};
		\node [style=none] (25) at (0.75, 0) {};
		\node [style=none] (26) at (0, 0) {$\overrightarrow{D}$};
		\node [style=none] (27) at (1.25, 0.25) {$p$};
	\end{pgfonlayer}
	\begin{pgfonlayer}{edgelayer}
		\draw (3) to (4);
		\draw (10.center) to (14.center);
		\draw (9.center) to (13.center);
		\draw (17.center) to (18.center);
		\draw [bend right=330] (9.center) to (15);
		\draw [bend left] (15) to (17.center);
		\draw (15) to (16);
		\draw (19.center) to (20.center);
		\draw (21.center) to (24.center);
		\draw (24.center) to (23.center);
		\draw (21.center) to (22.center);
		\draw (22.center) to (23.center);
		\draw (25.center) to (3);
	\end{pgfonlayer}
\end{tikzpicture}
\end{equation*}
Since $D$ is a matrix, it can be rewritten as 
\begin{equation*}
\begin{tikzpicture}
	\begin{pgfonlayer}{nodelayer}
		\node [style=square] (0) at (0, 0.75) {$\overrightarrow{A}$};
		\node [style=small box] (1) at (0, -0.75) {$\overrightarrow{b}$};
		\node [style=white] (2) at (1.25, 0) {};
		\node [style=reg] (3) at (2.25, 0) {$\geq$};
		\node [style=white] (4) at (3, 0) {};
		\node [style=none] (6) at (-1.75, -0.75) {};
		\node [style=none] (9) at (-0.5, 1) {};
		\node [style=none] (10) at (-1.75, 0.5) {};
		\node [style=none] (11) at (-0.75, 0.75) {$n$};
		\node [style=none] (12) at (-0.75, 1.25) {$m$};
		\node [style=none] (13) at (-0.25, 1) {};
		\node [style=none] (14) at (-0.25, 0.5) {};
		\node [style=black] (15) at (-1.25, 1.5) {};
		\node [style=black] (16) at (-1.75, 1.5) {};
		\node [style=none] (17) at (-0.5, 2) {};
		\node [style=none] (18) at (3, 2) {};
		\node [style=none] (19) at (1.5, 0.25) {$p$};
		\node [style=none] (20) at (0.75, 1) {$p$};
		\node [style=none] (21) at (0.75, -0.5) {$p$};
	\end{pgfonlayer}
	\begin{pgfonlayer}{edgelayer}
		\draw [bend left, looseness=1.25] (0) to (2);
		\draw [bend right, looseness=1.25] (1) to (2);
		\draw (2) to (3);
		\draw (3) to (4);
		\draw (6.center) to (1);
		\draw (10.center) to (14.center);
		\draw (9.center) to (13.center);
		\draw (17.center) to (18.center);
		\draw [bend right=330] (9.center) to (15);
		\draw [bend left] (15) to (17.center);
		\draw (15) to (16);
	\end{pgfonlayer}
\end{tikzpicture}
\end{equation*}
and thus the thesis.
\end{proof}

\begin{proof}[Proof of Lemma~\ref{th:polyinc}]
We fix $P_1 = \{ x \in \field^n \mid Ax \leq b \}$ and $P_2= \{ x \in \field^n \mid Cx \leq d \}$
and their \textit{homogenisations}
\[ P_1^H = \{ \begin{pmatrix} x \\ y\end{pmatrix} \in \field^{n+1} \mid \begin{pmatrix} A & -b \\ 0 & -1\end{pmatrix}\begin{pmatrix} x \\ y\end{pmatrix} \leq 0 \} \quad
P_2^H = \{ \begin{pmatrix} x \\ y\end{pmatrix} \in \field^{n+1} \mid \begin{pmatrix} C & -d \\ 0 & -1\end{pmatrix}\begin{pmatrix} x \\ y\end{pmatrix} \leq 0 \}\]

1. First we prove that $P_1 \subseteq P_2 \implies P_1^H \subseteq P_2^H$.

Take $\vect{x \\ y} \in P_1^H$. If $y > 0$ then $\vect{\frac{x}{y} \\ 1} \in P_1^H$ if and only if $\vect{\frac{x}{y}} \in P_1$, and thus $\vect{\frac{x}{y}} \in P_2$ and $\vect{\frac{x}{y} \\ 1} \in P_2^H$. Finally, multiplying the vector by $y$ we get that $\vect{x \\ y} \in P_2^H$.

If $y = 0$, then $\vect{x} \in \{ x \in \field^n \mid Ax \leq 0\}$. We pick a vector $x' \in P_1$ and note that $x' + \lambda x$ is also in $P_1$ and thus in $P_2$, for all $\lambda \geq 0$.

Since $x' + \lambda x \in P_2$, we have that $Cx' + C\lambda x \leq d$ and $C\lambda x \leq d - Cx'$, which by Corollary~\ref{co:negvec} implies that $Cx \leq 0$. In other words, $\vect{x} \in \{ x \in \field^n \mid Cx \leq 0\}$ and thus $\vect{x \\ y} \in P_2^H$.

2. The other direction $P_1^H \subseteq P_2^H \implies P_1 \subseteq P_2$ follows from the fact that $P_1 = \{ x \in \field^n \mid \vect{x \\ 1} \in P_1^H \}$ and $P_2 = \{ x \in \field^n \mid \vect{x \\ 1} \in P_2^H \}$.
\end{proof}

\begin{proof}[Proof of Theorem~\ref{th:compl}]
By Theorem~\ref{th:nf}, 
\begin{equation*}
    c \myeq{\AIHP{}} \input{figures/poly/nfpolyhedron}
\qquad \qquad
    d \myeq{\AIHP{}} \input{figures/poly/nfpolyhedron2}
\end{equation*}
So $\dsem{c} = \{ x \in \field^{n+m} \mid Ax + b \geq 0\}$ and $\dsem{d} = \{ x \in \field^{n+m} \mid A'x + b' \geq 0\}$. Since $\dsem{c} \subseteq \dsem{d}$, by Lemma~\ref{th:polyinc}, it holds that
\begin{equation*}
    \dsem{c^H} = \{ \vect{x \\ y} \mid Ax + by \geq 0, \ y \geq 0\} \subseteq \{ \vect{x \\ y} \mid A'x + b'y \geq 0, \ y \geq 0\} = \dsem{d^H}
\end{equation*}
Diagrammatically, 
\input{figures/poly/compl1}
Note that $c^H$ and $d^H$ are arrows of $\CDP$ (polyhedral cones), thus, by completeness of $\IHP{\mathsf k}$ we have that 
\input{figures/poly/compl2}
Thus,
\input{figures/poly/compl3}
\end{proof}

\begin{lemma}
\begin{equation*}
\begin{tikzpicture}
	\begin{pgfonlayer}{nodelayer}
		\node [style=none] (0) at (-0.25, 0) {};
		\node [style=white] (1) at (0.25, 0) {};
		\node [style=none] (2) at (-0.25, 0.15) {};
		\node [style=none] (3) at (-0.25, -0.15) {};
	\end{pgfonlayer}
	\begin{pgfonlayer}{edgelayer}
		\draw (2.center) to (3.center);
		\draw (1) to (0.center);
	\end{pgfonlayer}
\end{tikzpicture}
 \myeq{AP2} \begin{tikzpicture}
	\begin{pgfonlayer}{nodelayer}
		\node [style=coreg] (7) at (0, 0) {$\leq$};
		\node [style=none] (8) at (-0.5, 0) {};
		\node [style=none] (9) at (-0.5, 0.15) {};
		\node [style=none] (10) at (-0.5, -0.15) {};
		\node [style=white] (11) at (0.5, 0) {};
	\end{pgfonlayer}
	\begin{pgfonlayer}{edgelayer}
		\draw (9.center) to (10.center);
		\draw (7) to (8.center);
		\draw (11) to (7);
	\end{pgfonlayer}
\end{tikzpicture}
\end{equation*}
\end{lemma}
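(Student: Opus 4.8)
The plan is to read the two sides as scalars $0 \to 0$: the left-hand side is the canonical false scalar $\onezero = \One \,;\, \Wcounit$ (forcing $1 = 0$), while the right-hand side is $\One \,;\, \leqd \,;\, \Wcounit$ (forcing $1 \le 0$), where $\leqd$ is $\greq^{\op}$ as in Example~\ref{ex:cc}. Both denote $\emptyset$, so soundness provides a sanity check, but since this lemma is a stepping stone towards the completeness for empty polyhedra I must give a genuine syntactic derivation in $\AIHP{}$ that does not appeal to completeness itself. I would therefore prove the equality as two inclusions.

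For the inclusion $\onezero \mysubeq{\AIHP{}} \One \,;\, \leqd \,;\, \Wcounit$ the key fact is the analogue of axiom $(P4)$ for $\leqd$: since $\Wcounit$ forces its wire to be $0$ and $0 \le 0$, one derives $\Wcounit \mysubeq{\AIHP{}} \leqd \,;\, \Wcounit$ by reflexivity of $\le$ (the mirror of the reflexivity/transitivity derivations spelled out in \S\ref{sec:polyc}); precomposing with $\One$ and using monotonicity of sequential composition yields this direction. The converse inclusion $\One \,;\, \leqd \,;\, \Wcounit \mysubeq{\AIHP{}} \onezero$ is the crux. Here I would invoke $(AP1)$ --- which records $1 \ge 0$ --- to introduce, out of the empty diagram, a free copy of the fact $0 \le 1$ alongside the circuit; then, splitting the output of $\One$ with $\Bcomult$, I route the same value simultaneously into the given constraint $\le 0$ and into the introduced constraint $\ge 0$. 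Applying the antisymmetry axiom $(antisym)$ collapses these two inequalities into the single constraint forcing the value to equal $0$, i.e.\ into $\Wcounit$, so that the whole circuit rewrites to $\One \,;\, \Wcounit = \onezero$.

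The main obstacle is precisely this converse inclusion: the inequality $1 \le 0$ is contradictory only because $1$ is strictly positive, yet the calculus records positivity only weakly, through $(AP1)$ as $1 \ge 0$. The art is therefore to combine the \emph{given} $1 \le 0$ with the \emph{free} $0 \le 1$ supplied by $(AP1)$ and let $(antisym)$ perform the collapse, rather than attempting to refute $1 \le 0$ in isolation. Care is also needed with the contravariant reading of $\leqd = \greq^{\op}$ and with the bookkeeping of the compact-closed bends used to present the copied value to both inequality boxes; the remaining manipulations are routine applications of the Hopf and Frobenius laws of $\IH{}$ together with $(P3)$ and $(P4)$ for absorbing the dangling $\greq$/$\leqd$ boxes into the black and white counits.
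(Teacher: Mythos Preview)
Your proposal is correct and uses exactly the same ingredients as the paper --- axiom $(AP1)$, the $dup$ axiom for $\One$, and antisymmetry. The paper organises these as a single chain of equalities (insert the equality $id_1 = \Bcomult \,;\, (\greq \oplus \leqd) \,;\, \Bmult$, which is antisymmetry plus reflexivity, between $\One$ and $\Wcounit$; split via $dup$ and the bialgebra law into $(\One\,;\,\greq\,;\,\Wcounit) \oplus (\One\,;\,\leqd\,;\,\Wcounit)$; delete the first factor via $(AP1)$) rather than as two separate inclusions, but substantively your argument and the paper's coincide.
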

\begin{proof}
\input{figures/poly/ap2-proof}
\end{proof}

\begin{lemma}\label{lm:inc}
Let $c \colon n \rightarrow m$ be an arrow of $\ACDP$, then the following holds
\begin{equation*}
\input{figures/poly/inc-def}
\end{equation*}
\end{lemma}
\begin{proof}
\input{figures/poly/inc-proof}
\end{proof}

\begin{lemma}\label{lm:zero}
Let $c \colon 0 \rightarrow 0$ be an arrow of $\ACDP$. Then 
\begin{equation*}
    \input{figures/poly/zero-def}
\end{equation*}
\end{lemma}
\begin{proof} 
\input{figures/poly/zero-proof}
\end{proof}

\begin{proof}[Proof of Theorem~\ref{th:acompl2}]
\input{figures/poly/compl2-proof}
\end{proof}

\begin{proof}[Proof of Proposition~\ref{th:aihpiso}]
For the first statement is enough to check that the semantics of the circuit in~\eqref{eq:polynf} is exactly $\{ (x,y) \in \field^n \times \field^m \mid A\begin{pmatrix*} x \\ y \end{pmatrix*} + b \geq 0 \}$. For the vice versa, we proceed by induction: as in the proof of Proposition~\ref{th:ihpiso}, the inductive cases are trivial. For the base case, we only have to check $\One$, since we have already seen that the semantics of the other generators are polyhedral cones and thus a polyhedra.
For $\One$, observe that $\{(\bullet, y) \in \field^0 \times \field^1 \mid  \begin{pmatrix*} -1 \\ 1 \end{pmatrix*} y + \begin{pmatrix*} 1 \\ -1 \end{pmatrix*} \geq 0 \}= \{(\bullet, 1)\} = \dsem{\One}$.
\end{proof}

\section{Proof of Example~\ref{ex:fnet}}\label{app:flow}
\begin{proof}[Proof of Derived law~\ref{eq:dlaw1}]
Consider the semantics of both left and right hand side
\begin{equation*}
    \dsem{
\tikzset{x=1em, y=2.1ex}
\InputIfFileExists{flownet/law1-1.tikz}{}{\input{./tikz/flownet/law1-1.tikz}}
\tikzset{x=1em, y=1.5ex}
} = \{(\vect{a \\ b}, c) \mid a \leq k, b \leq q, a + b = c \leq l\}
\end{equation*}
\begin{equation*}
    \dsem{
\tikzset{x=1em, y=2.1ex}
\InputIfFileExists{flownet/law1-2.tikz}{}{\input{./tikz/flownet/law1-2.tikz}}
\tikzset{x=1em, y=1.5ex}
} = \{(\vect{a \\ b}, c) \mid a \leq k, b \leq q, a + b = c\}
\end{equation*}
and the hypothesis that $k + q \leq l$. The inclusion $\dsem{
\tikzset{x=1em, y=2.1ex}
\InputIfFileExists{flownet/law1-2.tikz}{}{\input{./tikz/flownet/law1-2.tikz}}
\tikzset{x=1em, y=1.5ex}
} \subseteq \dsem{
\tikzset{x=1em, y=2.1ex}
\InputIfFileExists{flownet/law1-1.tikz}{}{\input{./tikz/flownet/law1-1.tikz}}
\tikzset{x=1em, y=1.5ex}
}$ holds by means of the ordered fields axioms, i.e. $a \leq k, b \leq q \implies a + b \leq k + q \leq l$. The other inclusion $\dsem{
\tikzset{x=1em, y=2.1ex}
\InputIfFileExists{flownet/law1-2.tikz}{}{\input{./tikz/flownet/law1-2.tikz}}
\tikzset{x=1em, y=1.5ex}
} \subseteq \dsem{
\tikzset{x=1em, y=2.1ex}
\InputIfFileExists{flownet/law1-1.tikz}{}{\input{./tikz/flownet/law1-1.tikz}}
\tikzset{x=1em, y=1.5ex}
}$ is proved as follows: suppose $(\vect{a \\ b}, c) \not\in \dsem{
\tikzset{x=1em, y=2.1ex}
\InputIfFileExists{flownet/law1-2.tikz}{}{\input{./tikz/flownet/law1-2.tikz}}
\tikzset{x=1em, y=1.5ex}
}$, then $a > k \vee b > q \vee a + b \neq c$ and thus $(\vect{a \\ b}, c) \not\in \dsem{
\tikzset{x=1em, y=2.1ex}
\InputIfFileExists{flownet/law1-1.tikz}{}{\input{./tikz/flownet/law1-1.tikz}}
\tikzset{x=1em, y=1.5ex}
}$. Finally, since $\dsem{
\tikzset{x=1em, y=2.1ex}
\InputIfFileExists{flownet/law1-1.tikz}{}{\input{./tikz/flownet/law1-1.tikz}}
\tikzset{x=1em, y=1.5ex}
} = \dsem{
\tikzset{x=1em, y=2.1ex}
\InputIfFileExists{flownet/law1-2.tikz}{}{\input{./tikz/flownet/law1-2.tikz}}
\tikzset{x=1em, y=1.5ex}
}$, by Theorem~\ref{co:compl}, the equality holds.
\end{proof}
\begin{proof}[Proof of Derived law~\ref{eq:dlaw2}]
We prove the two inclusions separately
\begin{align*}
    
\tikzset{x=1em, y=2.1ex}
}
\tikzset{x=1em, y=1.5ex}
 = &
\tikzset{x=1em, y=2.1ex}
\InputIfFileExists{dlaw2/dir1/step1.tikz}{}{\input{./tikz/dlaw2/dir1/step1.tikz}}
\tikzset{x=1em, y=1.5ex}
 \myeq{\AIHP{}} 
\tikzset{x=1em, y=2.1ex}
\InputIfFileExists{dlaw2/dir1/step2.tikz}{}{\input{./tikz/dlaw2/dir1/step2.tikz}}
\tikzset{x=1em, y=1.5ex}
 \myeq{\AIHP{}} 
\tikzset{x=1em, y=2.1ex}
\InputIfFileExists{dlaw2/dir1/step3.tikz}{}{\input{./tikz/dlaw2/dir1/step3.tikz}}
\tikzset{x=1em, y=1.5ex}
 \myeq{\AIHP{}} \\
    &
\tikzset{x=1em, y=2.1ex}
\InputIfFileExists{dlaw2/dir1/step4.tikz}{}{\input{./tikz/dlaw2/dir1/step4.tikz}}
\tikzset{x=1em, y=1.5ex}
 \myeq{\AIHP{}} 
\tikzset{x=1em, y=2.1ex}
\InputIfFileExists{dlaw2/dir1/step5.tikz}{}{\input{./tikz/dlaw2/dir1/step5.tikz}}
\tikzset{x=1em, y=1.5ex}
 \myeq{\AIHP{}} 
\tikzset{x=1em, y=2.1ex}
\InputIfFileExists{dlaw2/dir1/step6.tikz}{}{\input{./tikz/dlaw2/dir1/step6.tikz}}
\tikzset{x=1em, y=1.5ex}
 \myeq{\AIHP{}} \\
    &
\tikzset{x=1em, y=2.1ex}
\InputIfFileExists{dlaw2/dir1/step7.tikz}{}{\input{./tikz/dlaw2/dir1/step7.tikz}}
\tikzset{x=1em, y=1.5ex}
 \mysupeq{\AIHP{}} 
\tikzset{x=1em, y=2.1ex}
\InputIfFileExists{dlaw2/dir1/step8.tikz}{}{\input{./tikz/dlaw2/dir1/step8.tikz}}
\tikzset{x=1em, y=1.5ex}
 \myeq{\AIHP{}} 
\tikzset{x=1em, y=2.1ex}
\InputIfFileExists{flownet/law2-1.tikz}{}{\input{./tikz/flownet/law2-1.tikz}}
\tikzset{x=1em, y=1.5ex}

\end{align*}
\begin{align*}
    
\tikzset{x=1em, y=2.1ex}
\InputIfFileExists{flownet/law2-1.tikz}{}{\input{./tikz/flownet/law2-1.tikz}}
\tikzset{x=1em, y=1.5ex}
 = &
\tikzset{x=1em, y=2.1ex}
\InputIfFileExists{dlaw2/dir2/step1.tikz}{}{\input{./tikz/dlaw2/dir2/step1.tikz}}
\tikzset{x=1em, y=1.5ex}
 \mysupeq{\AIHP{}} 
\tikzset{x=1em, y=2.1ex}
\InputIfFileExists{dlaw2/dir2/step2.tikz}{}{\input{./tikz/dlaw2/dir2/step2.tikz}}
\tikzset{x=1em, y=1.5ex}
 \myeq{\AIHP{}} 
\tikzset{x=1em, y=2.1ex}
\InputIfFileExists{dlaw2/dir2/step3.tikz}{}{\input{./tikz/dlaw2/dir2/step3.tikz}}
\tikzset{x=1em, y=1.5ex}
 \myeq{\AIHP{}} \\
    &
\tikzset{x=1em, y=2.1ex}
\InputIfFileExists{dlaw2/dir2/step4.tikz}{}{\input{./tikz/dlaw2/dir2/step4.tikz}}
\tikzset{x=1em, y=1.5ex}
 \myeq{\AIHP{}} 
\tikzset{x=1em, y=2.1ex}
\InputIfFileExists{dlaw2/dir2/step5.tikz}{}{\input{./tikz/dlaw2/dir2/step5.tikz}}
\tikzset{x=1em, y=1.5ex}
 \myeq{\AIHP{}} \\ 
    &
\tikzset{x=1em, y=2.1ex}
\InputIfFileExists{dlaw2/dir2/step6.tikz}{}{\input{./tikz/dlaw2/dir2/step6.tikz}}
\tikzset{x=1em, y=1.5ex}
 \myeq{\AIHP{}} 
\tikzset{x=1em, y=2.1ex}
}
\tikzset{x=1em, y=1.5ex}

\end{align*}
\end{proof}

\end{document}